\def\dOi{9(3:8)2013}
\newcommand{\Sim}{\mathit{sim}}
\newcommand{\Ser}{\mathit{ser}}
\newcommand{\Dep}{\mathit{dep}}
\newcommand{\Ind}{\mathit{ind}}
\newcommand{\Ssm}{\mathit{ssm}}
\newcommand{\Wdp}{\mathit{wdp}}
\newcommand{\Api}{\mathit{sin}}
\newcommand{\Sin}{\mathit{sin}}
\newcommand{\suff}[1]{\mathit{suff_#1}}
\newcommand{\pref}[1]{\mathit{pref_#1}}
\newcommand{\pipe}{\:|\:}
\newcommand{\tracelike}{radical }
\renewcommand{\emptyset}{\varnothing}
\newcommand{\stepalph}{\mathbb{S}}
\newcommand{\StandardNet}[1]    {
     \begin{tikzpicture}[node distance=1.3cm,>=latex',line  width=0.3mm,scale=#1,auto,bend angle=45]
     \tikzstyle{place}=[draw,circle,thick,minimum size=6mm]
     \tikzstyle{transition}=[draw,regular polygon,thick,regular polygon sides=4,minimum size=8mm, inner sep = -2pt]
                              }
\newcommand{\StandardTS}     {
     \begin{tikzpicture}[node distance=0.5cm,>=stealth',bend angle=45,scale=0.5]
                              }
\begin{document}

\title[Algebraic Structure of Combined Traces]{Algebraic Structure of Combined Traces\rsuper*}

\author[{\L}. Mikulski]{{\L}ukasz Mikulski}
\address{
Faculty of Mathematics and Computer Science
\\
Nicolaus Copernicus University
\\ 
Toru{\'n}, Chopina 12/18, Poland, and\vspace{-8 pt}
}
\address{
School of Computing Science
\\
Newcastle University
\\ 
Newcastle upon Tyne, NE1 7RU, U.K.
}
\email{lukasz.mikulski@mat.umk.pl}

\keywords{concurrency, causal structures, combined traces, Mazurkiewicz traces, Petri nets, elementary net systems}
\subjclass{F.1.2 Theory of computation  ->  Concurrency}
\ACMCCS{[{\bf Theory of computation}]:  Models of computation---Concurrency}
\titlecomment{{\lsuper*} A short variant of this paper, without proofs, appeared in the CONCUR 2012 conference proceedings.}

\begin{abstract}
Traces -- and their extension called combined traces (comtraces)
-- are two formal models used in the analysis and verification of concurrent systems.
Both models are based on concepts
originating in the theory of formal languages, and they are able to capture 
the notions of causality and simultaneity of
atomic actions which take place during the process of a system's operation.
The aim of this paper is a transfer to the domain of comtraces 
and developing of some fundamental
notions, which proved to be successful in the theory of traces.
In particular,
we introduce and then apply the notion of indivisible steps,
the lexicographical canonical form of comtraces,
as well as the representation of a comtrace utilising 
its linear projections to binary action subalphabets.
We also provide two algorithms related to the new notions.
Using them, one can solve, in an efficient way, the problem of step sequence
equivalence in the context of comtraces.
One may view our results as a first step towards the development of infinite 
combined traces, as well as recognisable languages of combined traces.
\end{abstract}

\maketitle

\section{Introduction}
The dynamic behaviours of concurrent systems are usually described as
sequences of atomic actions of such systems, which leads to its formal language
semantics. Using this simple approach we cannot express some phenomena, e.g,
concurrency and causality, that are crucial in the process of understanding and
analysing concurrent behaviours of a system. In the case of a particular
operational model, one can consider extending the sequential description
by adding some information about the relevant properties of behaviours.
One can do it by considering sequences of steps of actions and by adding
some causal dependencies between actions.
A well known approach that helps to capture concurrency and causality of a system
are traces~\cite{CarFoa69,Maz77}. 

Consider, for example, the elementary net system with inhibitor 
arcs in Example~\ref{e:ENwithInh}($a$).
We have four actions, $a$, $b$, $c$ and $d$, which may be executed in the
initial marking, and two actions, $e$ and $f$, which need a previous history
of computation to be enabled. Let us focus on action $e$. To enable this action
we need to execute actions $a$ and $c$. We can execute them together or in any 
order. To capture the concurrent behaviour of this computation
we need to identify two sequences of executions -- $ace$ and $cae$. Using
step semantics, which is not necessary in this case, we add also step
sequence $(ac)(e)$ as another possible execution. Traces are sufficient
to deal with such behaviours.

The situation is more complex in the case of action $f$. Now we need three tokens
in the pre-set of the considered action, hence actions $b$, $c$ and $d$ should be
executed before the action $f$. Because of the presence of inhibitors, there is only
one way to execute them sequentially, they should be executed in the order $\mathit{dcbf}$.
Note that $\mathit{bdcf}$ or $\mathit{bcdf}$ are not correct sequences of execution.
There are, however, other possibilities to execute the four actions in the step
semantics. For instance all three actions may be executed simultaneously as
a step containing $b$, $c$ and $d$. This gives $(bcd)(f)$ as our allowed sequence of steps.
Other step sequences are $(d)(bc)(f)$ and $(cd)(b)(f)$. It is important that action $d$ has 
to be executed not later than action $c$, and action $c$ has to be executed not 
later than action $b$. In this case traces are still applicable, but they lose 
some important behavioural information.

Another case is depicted in Example~\ref{e:ENwithInh}($b$). The upper part of
the net is identical to the first case. Here, however, there is a single action
$g$ that waits for tokens in all four middle places. In other words, whole
tuple $(a,b,c,d)$ has to be executed before action $g$. It is easy to see
that because of inhibitors there is no valid sequential execution of the four actions.
After executing one of these actions, one of the remaining becomes disallowed. The only
possible execution is the step sequence $(abcd)(f)$. 
Those two situations cannot be precisely described by traces, we need a more complex
notion that capture ``not later than'' relationship between actions.
To address this issue one can use a natural generalisation of traces called 
combined traces (see \cite{JanKou95}).

\begin{exa}\label{e:ENwithInh}
Two elementary net systems with inhibitor arcs.\\
\begin{center}
\begin{tikzpicture}[node distance=1.3cm,>=stealth',bend angle=45,auto]
\tikzstyle{every node}=[font=\scriptsize]
\tikzstyle{place}=[draw,circle,thick,minimum size=6mm]
\tikzstyle{transition}=[draw,regular polygon, regular polygon sides=4,minimum size=10mm]
\node (p1) [place,tokens=1]{};
\node (p2) [place,tokens=1, right of=p1] {};
\node (p3) [place,tokens=1, right of=p2] {};
\node (p4) [place,tokens=1, right of=p3] {};
\node (a7) [transition, below of=p1, yshift=-0.30cm, xshift=-0.05cm, draw=none] {};
\node (a1) [transition, below of=p1] {$a$};
\node (a2) [transition, below of=p2] {$b$};
\node (a3) [transition, below of=p3] {$c$};
\node (a4) [transition, below of=p4] {$d$};
\node (p5) [place, below of=a1] {};
\node (p6) [place, below of=a2] {};
\node (p7) [place, below of=a3] {};
\node (p8) [place, below of=a4] {};
\node (a5) [transition, below of=p6] {$e$};
\node (a6) [transition, below of=p7] {$f$};
\node (p9) [place, below of=a5] {};
\node (p10) [place, below of=a6] {};
%\node {a}  [left of=p1] {$a$};
\path (p1) edge [->] (a1);
\path (p2) edge [->] (a2);
\path (p3) edge [->] (a3);
\path (p4) edge [->] (a4);
\path (a1) edge [->] (p5);
\path (a2) edge [->] (p6);
\path (a3) edge [->] (p7);
\path (a4) edge [->] (p8);
\path (p5) edge [-o] (a2);
\path (p6) edge [-o] (a3);
\path (p7) edge [-o] (a4);
\path (p8) edge [-o] (a7);
%\path (p8) edge [-o] (a1);
\path (p5) edge [->] (a5);
\path (p7) edge [->] (a5);
\path (p6) edge [->] (a6);
\path (p7) edge [->] (a6);
\path (p8) edge [->] (a6);
\path (a5) edge [->] (p9);
\path (a6) edge [->] (p10);
\end{tikzpicture}($a$)
\hspace{10mm}
\begin{tikzpicture}[node distance=1.3cm,>=stealth',bend angle=45,auto]
\tikzstyle{every node}=[font=\scriptsize]
\tikzstyle{place}=[draw,circle,thick,minimum size=6mm]
\tikzstyle{transition}=[draw,regular polygon, regular polygon sides=4,minimum size=10mm]
\node (p1) [place,tokens=1]{};
\node (p2) [place,tokens=1, right of=p1] {};
\node (p3) [place,tokens=1, right of=p2] {};
\node (p4) [place,tokens=1, right of=p3] {};
\node (a7) [transition, below of=p1, yshift=-0.30cm, xshift=-0.05cm, draw=none] {};
\node (a1) [transition, below of=p1] {$a$};
\node (a2) [transition, below of=p2] {$b$};
\node (a3) [transition, below of=p3] {$c$};
\node (a4) [transition, below of=p4] {$d$};
\node (p5) [place, below of=a1] {};
\node (p6) [place, below of=a2] {};
\node (p7) [place, below of=a3] {};
\node (p8) [place, below of=a4] {};
\node (a5) [transition, below of=p6, xshift=0.5cm] {$g$};
\node (p9) [place, below of=a5] {};
%\node {a}  [left of=p1] {$a$};
\path (p1) edge [->] (a1);
\path (p2) edge [->] (a2);
\path (p3) edge [->] (a3);
\path (p4) edge [->] (a4);
\path (a1) edge [->] (p5);
\path (a2) edge [->] (p6);
\path (a3) edge [->] (p7);
\path (a4) edge [->] (p8);
\path (p5) edge [-o] (a2);
\path (p6) edge [-o] (a3);
\path (p7) edge [-o] (a4);
\path (p8) edge [-o] (a7);
%\path (p8) edge [-o] (a1);
\path (p5) edge [->] (a5);
\path (p6) edge [->] (a5);
\path (p7) edge [->] (a5);
\path (p8) edge [->] (a5);
\path (a5) edge [->] (p9);
\end{tikzpicture}($b$)
\end{center}
\end{exa}
\vspace{2mm}
\noindent In this paper, we are concerned with the understanding of the algebraic inner 
structure of the combined traces (comtraces in short). 
We start by recalling some standard notions about
formal languages, traces and comtraces. 
In particular, we give the definition of a lexicographical order on step
sequences. We then recall the Foata canonical form of
a comtrace that turns out to be maximal with respect to their order, and
propose another canonical representative - the lexicographical canonical form.
Then, we discuss the phenomenon of indivisibility in the case of comtraces 
and its connections with lexicographical canonical form.
In the following sections, we propose an algebraic representation of a comtrace
based on projections onto sequential subalphabets, and give a nondeterministic
procedure that allows to reconstruct step sequences of the original comtrace.
We also give two strategies of determining such reconstruction, each 
leading to a proper canonical form of a comtrace. 
In the final section, we describe some natural
applications of the algebraic properties developed in this paper, 
and sketch the directions for further research.

The preliminary version of this paper was presented on the CONCUR 2012 conference
(Newcastle, UK) and published in local proceedings. 
The present paper is significantly extended and improved version. 

\section{Preliminaries}
\label{sect-1}

Throughout the paper we use the standard notions of the formal language theory.
In particular, by an \textit{alphabet} we mean a nonempty finite set $\Sigma$, the elements of which
are called \textit{(atomic) actions}.
Finite sequences over  $\Sigma$ are called \textit{words}.
The set of all finite words, including the empty word $\epsilon$, is denoted by $\Sigma^*$.

Let $w=a_1\ldots a_n$ and $v=b_1\ldots b_m$ be two words.
Then 
$$
    w\circ v=wv=a_1\ldots a_nb_1\ldots b_m
$$
is the concatenation of $w$ and $v$.
The alphabet $\mathit{alph}(w)$ of $w$  is the set of all
the actions occurring within $w$, and
$\#_a(w)$ is the number of occurrences of an action $a$ within $w$. 
By $|w|$ we denote the length of word $w$. More generally, for an object $X$, 
whenever the notion of size is clear from the contexts, we denote its size by $|X|$.

Let $w=a_1\ldots a_n$ be a word. We use the notions of prefix and 
suffix of the word $w$. For any $k\leq n$, the
\emph{k-suffix} of $w$, denoted by $\suff{k}(w)$, is a word $a_k\ldots a_n$.
Similarly, the \emph{k-prefix} of $w$, denoted by $\pref{k}(w)$, is the word $a_1\ldots a_k$.

We assume that the alphabet $\Sigma$ is given together with a total 
order $\leq$, called lexicographical order and extend it
to the level of words. Such an order is inherited from the first
actions on which two words being compared differ. 
In the case that one word is a prefix
of another - the former is the smaller one.

The projection onto a binary subalphabet $\{a,b\}$ is the function
$\Pi_{a,b}:\Sigma^*\rightarrow\Sigma^*$ defined as follows:

\[\Pi_{a,b}(cw)=\left\{
\begin{array}{lcl}
c\Pi_{a,b}(w) & \text{ for } & c\in\{a,b\}\\
\Pi_{a,b}(w) & \text{ for } & c\notin\{a,b\}\\
\end{array}
\right.\]
and $\Pi_{a,b}(\epsilon)=\epsilon$. In the same way we define a projection onto a
unary subalphabet $\{a\}$, denoted by $\Pi_{a,a}:\Sigma^*\rightarrow\Sigma^*$.  

The algebra of binary relations over set $X$ (i.e., subsets of $X\times X$)
is equipped with a concatenation operation $\circ$,
where $R_1\circ R_2=\{(x,y)\pipe\exists_{z\in X}\;xR_1y\wedge yR_2z\}$.
The neutral element for $\circ$ is the identity relation $I_X=\{(x,x)\pipe x\in X\}$,
the index $X$ is omitted if it is clear from context.
The $n$-th power of a relation $R$ is defined as 
$R^n=R^{n-1}\circ R$ for all $n\geq 1$, where $R^0=I$.
The transitive closure of $R$ is $R^+=R^1\cup R^2\cup\ldots$, while its
reflexive transitive closure is $R^*=R^0 \cup R^+$. 
Moreover, for a relation $R\subset X\times X$ we define the reverse of $R$ by
$R^{-1}=\{(x,y)\pipe (y,x)\in R\}$, and its symmetric closure by
$R^{sym} = R \cup R^{-1}$. 
We also define the largest equivalence relation contained in the reflexive and
transitive closure of relation $R$ as 
$$R^\circledast=\{(x,y)\pipe xR^*y \wedge yR^*x\}.$$ 
The relation $R\subseteq X\times X$ is called symmetric if $R=R^{-1}$,
reflexive if $I\subseteq R$, irreflexive if $I\cap R=\emptyset$, 
transitive if $R^2\subseteq R$, and acyclic if $R^+$ is irreflexive.
Moreover, for every $Y\subseteq X$ we define the restriction of the relation
$R\subseteq X\times X$ to the set $Y$ by 
\[R|_Y=\{(x,y)\in R\pipe x,y\in Y\}.\]

A \emph{directed acyclic graph} is a pair $\mathit{dag}=(X,R)$,
where $X$ is a finite set and $R$
is an acyclic irreflexive binary relation on $X$.
In a diagrammatical representation, $X$ is the set of vertices while
$R$ the set of arcs.
A directed acyclic graph $\mathit{po}=(X,\prec)$ is a \emph{poset} if the relation
$\prec$ is transitive.
An \emph{upper set} is a nonempty subset $U$ of poset $\mathit{po}=(X,\prec)$ such 
that for every $x\in U$ if $x\prec y$ then $y\in U$.

\subsection{Elementary Net Systems with Inhibitor Arcs}

In this paper we introduce some algebraic properties of combined traces which
are the abstract model that describes causal relationships between executed 
actions of a concurrent system. The underlying structure, which was a motivation
to define combined traces, are elementary net systems with inhibitor arcs.

Formally, the \emph{elementary net system with inhibitor arcs} (or $ENI-system$) is 
a tuple $N=(P,T,F,I,M_0)$, where $P$ and $T$ are two disjoint and finite sets 
of \emph{places} and \emph{transitions} (or \emph{actions}) respectively. Two other 
components, $F\subseteq(P\times T)\cup(T\times P)$ and $I\subseteq P\times T$ are 
relations, called \emph{flow relation} and \emph{inhibition relation}.
These relations describe possible dynamic behaviours of a net, which are
manifested by executing sets of enabled transitions called \emph{steps}. Such an 
execution leads from one set of places (called \emph{marking}) to 
another. The initial marking $M_0\subseteq P$, 
from which the action of a system begins, is the last element of the 
tuple $N$.

Given an ENI-system $N=(P,T,F,I,M_0)$ and $x\in P\cup T$, the \emph{pre-set} 
(set of inputs) of $x$, denoted by $^\bullet x$, 
is defined as $^\bullet x=\{y|(y,x)\in F\}$,
while the \emph{post-set} (set of outputs) of $x$, 
denoted by $x^\bullet$, is defined as $x^\bullet=\{y|(x,y)\in F\}$.
We also use the notion $^\bullet x^\bullet$ for
the union of the post-set and pre-set of $x$, calling it the 
\emph{set of neighbouring places/transitions} (or simply the
\emph{neighbourhood}). 
Moreover, if $x\in T$, the \emph{inh-set} (set of inhibitors) of $x$,
denoted by $^\circ x$, is defined by $^\circ x=\{y|(y,x)\in I\}$.
The set of neighbouring places together with the inh-set forms an
\emph{extended neighbourhood} of an action.

The dot notations are lifted in the usual way to sets of elements.
Hence, by $^\bullet X$ we denote the set $\{y\pipe (y,x)\in F\wedge x\in X\}$,
$X^\bullet=\{y\pipe (x,y)\in F\wedge x\in X\}$, 
and $^\circ X=\{y\pipe (y,x)\in I\wedge x\in X\}$.
Graphically, the places are drawn as circles, transitions as rectangles, elements 
of flow relation as arcs, and elements of inhibition relation as arcs with small
circles as arrowheads. Marked places are depicted by drawing small dot 
called \emph{token} inside.

We say that a step $S=\{t_1,t_2,\ldots,t_n\}$ is \emph{enabled} in marking $M$
if $^\bullet S\subseteq M$, 
$S^\bullet\cap M=\emptyset$, 
$^\circ S\cap M=\emptyset$
and $^\bullet t_i\cap\, ^\bullet t_j=\emptyset$ for any $i\neq j$. 
The \emph{execution} of such a step $S$ leads
from the marking $M$ to the new marking 
$M'=(M\setminus\; ^\bullet S) \;\cup\;S^\bullet$.

An ENI-system with empty inhibition relation, often considered under
the sequential rather than step semantics,
is called an \emph{elementary net system} (or $EN-system$).
\vspace{2.5cm}
\begin{exa}\label{e:ENI}
Consider a system $N=(P,T,F,I,M_0)$ depicted below.\\
\begin{center}
\begin{tikzpicture}[node distance=1.3cm,>=stealth',bend angle=45,auto]
\tikzstyle{every node}=[font=\scriptsize]
\tikzstyle{place}=[draw,circle,thick,minimum size=6mm]
\tikzstyle{transition}=[draw,regular polygon, regular polygon sides=4,minimum size=10mm]
\node (p1) [place,tokens=1, label=left:$p_1$]{};
\node (p2) [place, below of=p1, label=left:$p_2$] {};
\node (p8) [place, below of=p2, tokens=1, label=left:$p_8$]{};
\node (a1) [transition, right of=p1] {$a$};
\node (p3) [place, below of=a1, label=left:$p_3$] {};
\node (a3) [transition, below of=p3] {$c$};
\node (p4) [place, right of=p3, label=left:$p_4$] {};
\node (p5) [place, right of=p4, label=left:$p_5$] {};
\node (p6) [place,tokens=1, right of=p5, label=left:$p_6$] {};
\node (p7) [place, right of=p6, label=left:$p_7$] {};
\node (a4) [transition, above of=p5] {$d$};
\node (a2) [transition, below of=p7] {$b$};
\path (p1) edge [->] (a1);
\path (p8) edge [->] (a3);
\path (p2) edge [-o] (a1);
\path (a1) edge [->] (p3);
\path (a3) edge [->] (p2);
\path (p3) edge [-o] (a3);
\path (p3) edge [-o] (a4);
\path (a4) edge [->] (p4);
\path (p4) edge [-o] (a3);
\path (a3) edge [->] (p5);
%\path (p5) edge [->] (a4);
\path (p5) edge [-o] (a2);
\path (a2) edge [->] (p6);
\path (p6) edge [->] (a4);
\path (a4) edge [->] (p7);
\path (p7) edge [->] (a2);
\end{tikzpicture}
\end{center}
\vspace{2mm}

\noindent The set of places has eight elements (from $p_1$ to $p_8$), the set 
of transitions has four elements ($T=\{a,b,c,d\}$). 
In the initial marking, three places are marked -- $(p_1,p_6,p_8)$.
Therefore, seven steps -- including $(a)$, $(d)$ and $(ad)$ --  are enabled. Note that 
after executing transition $d$, transition $a$ remains enabled, however, this does
not hold in the opposite direction, i.e. after executing transition $a$ there is
a token in place $p_3$ and transition $d$ is no more enabled.
\qed
\end{exa}

\subsection{Traces}
\label{sect-2}

In this section we recall well-known notion of traces 
(see \cite{DieRoz95,Maz77,Mik08}). Traces are an abstract model describing 
causal relationships between executed actions in, for example EN-systems. 
They capture
independence, hence the possibility to be executed in any order (and also together)
for some actions. Structurally, pairs of actions with disjoint sets of 
neighbouring places are in the independence relation.

A \textit{concurrent alphabet} is a pair $\Psi=(\Sigma,\Ind)$, where $\Sigma$
is an alphabet and $\Ind\subseteq \Sigma\times\Sigma$ is an
irreflexive and symmetric \textit{independence} relation.
The corresponding \textit{dependence} relation
is given by $\Dep=(\Sigma\times\Sigma)\setminus \Ind$.

A concurrent alphabet
$\Psi$ defines an equivalence relation $\equiv^{\Sigma}_\Psi$
identifying words which differ only by the ordering of independent actions.
Two words, $w,v\in\Sigma^*$, satisfy $w\equiv^{\Sigma}_\Psi v$
if there exists a finite sequence of commutations of
adjacent independent actions transforming $w$ into $v$.
More precisely, $\equiv^{\Sigma}_\Psi$ is a binary relation over $\Sigma^*$
which is the reflexive and transitive closure of the
relation $\sim^{\Sigma}_\Psi$ such that
$w\sim^{\Sigma}_\Psi v$ if there are $u,z\in\Sigma^*$ and $(a,b)\in \Ind$
satisfying $w=uabz$ and $v=ubaz$.

Equivalence classes of $\equiv^{\Sigma}_\Psi$ are
called \textit{(Mazurkiewicz) traces}
and the trace containing a given word $w$ is denoted by $[w]$.
The set of all traces over $\Psi$ is denoted by $\Sigma^*/_{\equiv_\Psi^\Sigma}$,
and the pair $(\Sigma^*/_{\equiv_\Psi^\Sigma},\circ)$ is a (trace) monoid,
where $\tau\circ\tau'=[w\circ w']$, for any words $w\in\tau$ and $w'\in\tau'$,
is the concatenation operation for
traces. Note that trace concatenation is well-defined as $[w\circ w']=[v\circ v']$,
for all $w,v\in\tau$ and $w',v'\in\tau'$.
Similarly, for every trace $\tau=[w]$ and every action $a\in \Sigma$, we can define
\[
\begin{array}{lclclclclcl}
    \mathit{alph}(\tau)
    &=&
    \mathit{alph}(w)
    & ~~~~~~
    & \#_a(\tau)
    &=&
    \#_a(w).
\end{array}
\]

Projections onto unary and binary dependent subalphabets 
(i.e. $\{a,b\}\subseteq\Sigma$ such that
$(a,b)\in\Dep$) are invariants for traces (see \cite{Mik08}). 
It is possible to formulate the trace equivalence in terms of projections. Two words
$u,w\in\Sigma^*$ are in relation $\equiv_\Psi^\Sigma$ if and only if
\[\forall_{(a,b)\in\Dep}\;\Pi_{a,b}(u)=\Pi_{a,b}(w).\]
Following \cite{Mik08}, we define the \emph{projection representation} of %a trace
$\tau$ as a function $\Pi_\tau:\Dep\rightarrow\Sigma^*$, where
$\Pi_\tau(a,b) = \Pi_{a,b}(\tau)$.

\begin{exa}
\label{e:traces}
    Consider a concurrent alphabet $\Psi$ with four actions
    $\Sigma = \{a, b, c, d\}$ together with a dependence relation $\Dep$ given by:
\begin{center}
                \begin{tikzpicture}[baseline=-0.5cm]
                       \node (n1) {$a$};
                       \node [xshift=0.5cm,yshift=0.8cm] {$\Dep$};
                       \node (n2) [right of=n1] {$b$};
                       \node (n3) [below of=n2] {$c$};
                       \node (n4) [left of=n3] {$d$};
            \draw (n1) [xshift=0.15cm,yshift=0.1cm] arc (-45:225:0.2cm);
            \draw (n2) [xshift=0.15cm,yshift=0.1cm] arc (-45:225:0.2cm);
            \draw (n3) [xshift=0.15cm,yshift=-0.1cm] arc (45:-225:0.2cm);
            \draw (n4) [xshift=0.15cm,yshift=-0.1cm] arc (45:-225:0.2cm);
                       \draw (n1) -- (n2) -- (n3) -- (n4) -- (n1);
               \end{tikzpicture}
               \hspace{0.3cm}
               \begin{tikzpicture}
								\node (n0) {};
                        \node (n1) [below of=n0] {or, equivalently,};
                        \node (n2) [below of=n1, yshift=5mm] {an independence relation};
               \end{tikzpicture}
               \hspace{0.3cm}
               \begin{tikzpicture}[baseline=-0.5cm]
                       \node (n1) {$a$};
                       \node [xshift=0.5cm,yshift=0.8cm] {$\Ind$};
                       \node (n2) [right of=n1] {$b$};
                       \node (n3) [below of=n2] {$c$};
                       \node (n4) [left of=n3] {$d$};
                       \draw (n1) -- (n3);
                       \draw (n2) -- (n4);
               \end{tikzpicture}
\end{center}
\vspace{5mm}
\noindent
    Then $w=abbaacd\equiv^{\Sigma}_\Psi abb caad$.

\noindent
    The projection representation of a trace $\tau=[w]$ is
\[
\begin{array}{llllr}
~~~~~~\Pi_{a,a}(\tau)=aaa \; & \;~~ \Pi_{b,b}(\tau)=bb \; & 
\;~~\Pi_{c,c}(\tau)=c \; & \;~~ \Pi_{d,d}(\tau)=d & \\
~~~~~~\Pi_{a,b}(\tau)=abbaa\; & \;~~ \Pi_{a,d}(\tau)=aaad\; &
\;~~\Pi_{b,c}(\tau)=bbc\; & \;~~ \Pi_{c,d}(\tau)=cd & ~~~~~~
\end{array}
\]
\qed\end{exa}

A word $w\in\Sigma^*$ is in \emph{Foata canonical form} (see \cite{DieMet97})
w.r.t.\ the dependence relation  $\Dep$ and 
a lexicographical order $\leq$ on $\Sigma$, if $w=w_1\ldots w_n$ ($n\geq 0$), 
where each
$w_i$ is a nonempty word such that:
\begin{iteMize}{$\bullet$}
\item
   $alph(w_i)$ is pairwise independent and $w_i$
   minimal w.r.t. lexicographical order $\leq$ among $[w_i]$

\item
   for each $i>1$ and action $a$ occurring in $w_i$,
   there exists action $b$ occurring in $w_{i-1}$ such that
   $(a,b)\in \Dep$.
\end{iteMize}

\noindent Another canonical (normal) form of a trace that one may consider is the 
\emph{lexicographical canonical form} (see also \cite{DieMet97}). 
It is based only on the lexicographical order and
is defined as the least representative of a trace with respect
to the lexicographical ordering.
The intuition behind the Foata canonical form is that it groups actions into maximally
concurrent steps, while the lexicographical canonical form is very useful in some
combinatorial approaches (see~\cite{MikPiaSmy11}).
Each trace contains exactly one sequence in the Foata canonical form, and exactly
one sequence in the lexicographical canonical form. 
It may happen that the two versions of canonical form coincide.

\subsection{Step Traces}\label{s:stepTraces}
Let us lift the notion of traces from the sequential semantics discussed above
to the step semantics. 
Instead of identifying sequences of actions over alphabet $\Sigma$, 
we will identify sequences of sets of actions, called steps. 
We demand that a step should consist of mutually independent actions only.

For a given concurrent alphabet $\Psi=(\Sigma,\Ind)$ we define a set $\stepalph_\Psi$
of all nonempty subsets $A\subseteq\Sigma$ such that for all $a,b\in A$ we have
$a\neq b\Rightarrow(a,b)\in\Ind$. 
If the concurrent alphabet $\Psi$ is clear from the context, we would
write $\stepalph$ instead of $\stepalph_\Psi$.
To avoid confusion with the well-established operation of
concatenating sets in formal languages theory, we follow Diekert (\cite{DieMet97}) and denote a step 
containing actions $a$ and $b$ by $(ab)$ rather then $\{a,b\}$, etc.
Finite sequences in $\stepalph^*$, including the empty one $\lambda=(\epsilon)$,
are called \textit{step sequences}.

We now lift a number of notions and notations introduced for words to the
level of step sequences.
In what follows, $\Psi=(\Sigma,\Ind)$ is a \emph{fixed} concurrent 
alphabet.
Let $w=A_1\ldots A_n$ and $v=B_1\ldots B_m$ be two step sequences.
Then $w\circ v=wv=A_1\ldots A_nB_1\ldots B_m$ is the concatenation of $w$ and 
$v$.
The alphabet $\mathit{alph}(w)$ of $w$  comprises all actions occurring within~$w$,
and $\#_a(w)$ is the number of occurrences of an action $a$ within $w$.
Moreover, we define the \emph{step alphabet} $Alph(w)\subseteq\stepalph$ of 
a step sequence $w$ as the set of all steps occurring in $w$.

Both independence and dependence relations may be extended to the case
of steps. 
Two steps $A,B\in\stepalph$ are independent if and only if 
$A\times B\subseteq\Ind$, otherwise they are dependent.
We not only allow to commute, but also to join/split pairs of independent steps.
In fact, the commutation of two independent steps may be composed as two
join/split operations.
More precisely, $\equiv^\stepalph_\Psi$ is a binary relation over $\stepalph^*$
which is the reflexive, symmetric and transitive closure of the relation 
$\sim^\stepalph_\Psi$ such that $w\sim^\stepalph_\Psi v$ if there are
$u,z\in\stepalph^*$ and $A,B\in\stepalph$ satisfying $w=uABz$,
$v=u(A\cup B)z$, and $A\times B\subseteq\Ind$. 
Note that $A\cap B=\emptyset$, since $\Ind$ is irreflexive. 
Equivalence classes of $\equiv^\stepalph_\Psi$ are called \emph{step traces} 
(see \cite{Vog91}). 
The trace containing a step sequence $w$ is denoted by $[w]$, while set of all 
step traces -- by $\stepalph^*/_{\equiv^\stepalph_\Psi}$. 
Step traces are a conservative extension of sequential traces.
To justify this statement we prove

\begin{prop}\label{p:stepTraceSwaping}
Let $\Psi=(\Sigma,\Ind)$ be a concurrent alphabet, and $w,v$ two step sequences
over $\Psi$. If $w=uABz$ and $v=uBAz$, where $u,z\in\stepalph^*$ and 
$A\times B\subseteq\Ind$ then $w\equiv^\stepalph_\Psi v$.
\end{prop}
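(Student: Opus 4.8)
The plan is to realise the swap of the two adjacent independent steps $A$ and $B$ as the composition of one \emph{join} followed by one \emph{split}, passing through the intermediate step sequence $u(A\cup B)z$. This is exactly the remark made right after the definition of $\sim^\stepalph_\Psi$, namely that the commutation of two independent steps can be decomposed into two join/split operations; here it only remains to spell the argument out.

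First I would verify that the intermediate object $u(A\cup B)z$ is a genuine step sequence, i.e.\ that $A\cup B\in\stepalph$. Since $A,B\in\stepalph$, any two distinct actions lying both in $A$ (respectively both in $B$) are independent; and by hypothesis $A\times B\subseteq\Ind$, so any $a\in A$ and $b\in B$ are independent as well. Hence every pair of distinct actions of $A\cup B$ belongs to $\Ind$, which gives $A\cup B\in\stepalph$. I would also note $A\cap B=\emptyset$: if some $a$ lay in both, then $(a,a)\in A\times B\subseteq\Ind$, contradicting irreflexivity of $\Ind$.

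Next, applying the definition of $\sim^\stepalph_\Psi$ to the decomposition $w=uABz$ with the pair $(A,B)$, which satisfies $A\times B\subseteq\Ind$, yields $w\mathrel{\sim^\stepalph_\Psi}u(A\cup B)z$. Since $\Ind$ is symmetric we also have $B\times A\subseteq\Ind$, so applying the definition to the decomposition $v=uBAz$ with the pair $(B,A)$ gives $v\mathrel{\sim^\stepalph_\Psi}u(B\cup A)z=u(A\cup B)z$. Thus both $w$ and $v$ are related by $\sim^\stepalph_\Psi$ to the common step sequence $u(A\cup B)z$. Finally, $\equiv^\stepalph_\Psi$ is by definition the reflexive, symmetric and transitive closure of $\sim^\stepalph_\Psi$, so $w\equiv^\stepalph_\Psi u(A\cup B)z$ and $u(A\cup B)z\equiv^\stepalph_\Psi v$, and transitivity gives $w\equiv^\stepalph_\Psi v$.

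There is no real obstacle in this argument; the only points requiring care are checking that $A\cup B$ is indeed an element of $\stepalph$ (so that the intermediate step sequence is well-formed), and invoking the symmetry of $\Ind$ so that the split of $A\cup B$ into $B$ followed by $A$ is also licensed by the definition of $\sim^\stepalph_\Psi$.
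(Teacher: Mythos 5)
Your proof is correct and follows exactly the paper's argument: both $w$ and $v$ are related by $\sim^\stepalph_\Psi$ to the common intermediate $u(A\cup B)z$, and the reflexive, symmetric and transitive closure then gives $w\equiv^\stepalph_\Psi v$. The extra checks you include (that $A\cup B\in\stepalph$ and that symmetry of $\Ind$ licenses the second relation) are sensible elaborations of what the paper leaves implicit.
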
 
\begin{proof}
Directly from the definition, both $w$ and $v$ are in the
relation~$\sim^\stepalph_\Psi$ with $y=u(A\cup B)z$.
Since $\equiv^\stepalph_\Psi$ is the reflexive, symmetric and transitive closure of
$\sim^\stepalph_\Psi$, we have $w\equiv^\stepalph_\Psi y$ and 
$v\equiv^\stepalph_\Psi y$, so also $w\equiv^\stepalph_\Psi v$.
\end{proof}

We define two operations which help
to move from step semantics into sequential semantics and vice versa.
Let $A\in\stepalph$ be a step and $\leq$ be a total order on $\Sigma$.
Using the relation $\leq$ we define $min(A)$, the
minimal representative of a step $\emptyset\neq A\in\stepalph$ 
as the minimal action in $A$ with respect to $\leq$.
Note that $min(\emptyset)$ is not defined.
We define the \emph{lexicographical linearization} of step $A$ as 

\[
lex(A)=\left\{
\begin{array}{lcl}
\epsilon & \text{ for } & A=\emptyset\\
min(A)lex(A\setminus min(A)) & \text{ for } & A\neq\emptyset.\\
\end{array}
\right.
\]
We extend the operation $lex$ to step sequences and sets of step sequences
in the usual way:
\begin{align*}
& lex(A_1A_2\ldots A_n)=lex(A_1)lex(A_2)\ldots lex(A_n),\\
& lex(X)=\{lex(w)\pipe w\in X\}.
\end{align*}
As a reverse operation, we define a \emph{singletonization} of an action $a$ 
by $sstep(a)=\{a\}$ and extend it to the case of sequences by 
$sstep(a_1\ldots a_n)=sstep(a_1)\ldots sstep(a_n)$.

Since no two dependent actions may occur in the same step, we can easily
lift the notion of projections onto unary and binary dependent subalphabets 
to the case of step sequences being representatives of step traces:
\[\Pi_{a,b}(Aw)=\left\{
\begin{array}{lcl}
a\Pi_{a,b}(w) & \text{ for } & a\in A\\
b\Pi_{a,b}(w) & \text{ for } & b\in A\\
\Pi_{a,b}(w) & \text{ for } & \{a,b\}\cap A=\emptyset\\
\end{array}
\right.\]
and $\Pi_{a,b}(\lambda)=\epsilon.$ 
Note that both actions $a$ and $b$ can not simultaneously be in $A$,
since they are dependent.

\begin{prop}\label{p:stepTraceLex}
Let $\Psi=(\Sigma,\Ind)$ be a concurrent alphabet, 
and $w,u\in\stepalph^*$ two step sequences over $\Psi$.
Then $w\equiv^\stepalph_{\Psi} u$ if and only if $lex(w)\equiv^\Sigma_{\Psi} lex(u)$.
\end{prop}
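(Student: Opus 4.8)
The plan is to establish the two implications separately; in both directions I would first reduce to a single generating step and then invoke the fact that $\equiv^\stepalph_\Psi$ and $\equiv^\Sigma_\Psi$ are congruences with respect to concatenation, which is immediate from the contextual shape of the defining relations $\sim^\stepalph_\Psi$ and $\sim^\Sigma_\Psi$ (and, for the sequential case, was already recorded in the form of the trace monoid). Note that it is also enough to prove, for a single application of a generating relation, that the corresponding $lex$- or $sstep$-images are related, since a chain of such applications, read together with the symmetry and transitivity of the target equivalence, gives the general statement.

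For the left-to-right implication it suffices to treat the case $w=uABz$, $v=u(A\cup B)z$ with $A\times B\subseteq\Ind$. Then $lex(w)=lex(u)\,lex(A)\,lex(B)\,lex(z)$ and $lex(v)=lex(u)\,lex(A\cup B)\,lex(z)$, so by the congruence property it is enough to show $lex(A)\,lex(B)\equiv^\Sigma_\Psi lex(A\cup B)$. Here I would observe that any two distinct actions of $A\cup B$ are independent (pairs within $A$ and within $B$ because $A,B\in\stepalph$, mixed pairs because $A\times B\subseteq\Ind$ is symmetric on them), and that both $lex(A)\,lex(B)$ and $lex(A\cup B)$ are words containing each action of $A\cup B$ exactly once (recall $A\cap B=\emptyset$); hence one is obtained from the other by finitely many transpositions of adjacent actions, each of which is a legal application of $\sim^\Sigma_\Psi$ because all the pairs involved lie in $\Ind$.

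For the converse I would isolate two auxiliary facts. First, $w\equiv^\stepalph_\Psi sstep(lex(w))$ for every step sequence $w$: for a single step $A=\{a_1,\ldots,a_k\}$ with $a_1<\cdots<a_k$ one repeatedly applies the split rule $(a_i a_{i+1}\cdots a_k)\mapsto(a_i)(a_{i+1}\cdots a_k)$, which is a legal instance of (the reverse of) $\sim^\stepalph_\Psi$ because $\{a_i\}\times\{a_{i+1},\ldots,a_k\}\subseteq\Ind$ as $A\in\stepalph$, reaching $(a_1)(a_2)\cdots(a_k)=sstep(lex(A))$; then congruence lets one split every step of $w$ in turn. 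Second, $p\equiv^\Sigma_\Psi q$ implies $sstep(p)\equiv^\stepalph_\Psi sstep(q)$: reducing once more to $p=xaby$, $q=xbay$ with $(a,b)\in\Ind$, this is exactly Proposition~\ref{p:stepTraceSwaping} applied with the step sequences $sstep(x)$ and $sstep(y)$ and the singleton steps $(a)$, $(b)$. Combining the two, from $lex(w)\equiv^\Sigma_\Psi lex(u)$ we obtain $sstep(lex(w))\equiv^\stepalph_\Psi sstep(lex(u))$, and the first fact applied to $w$ and to $u$ together with transitivity yields $w\equiv^\stepalph_\Psi u$.

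I expect the converse to be the delicate direction, the main point being to realise that the argument should be routed through the singletonisation $sstep(lex(\cdot))$ rather than attempting to mimic a chain of word-level adjacent commutations directly on step sequences, where the step boundaries of $lex(w)$ are no longer visible. The one spot that genuinely needs care is the splitting fact $w\equiv^\stepalph_\Psi sstep(lex(w))$: it relies essentially on the actions inside a single step being pairwise independent, so that each successive peeling-off of the least action is a valid application of $\sim^\stepalph_\Psi$. Everything else amounts to routine reductions to a single generating step followed by an appeal to the congruence property.
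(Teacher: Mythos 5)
Your proof is correct, and its overall architecture (reduce each direction to a single application of the generating relation, then use the congruence property) matches the paper's. The two genuine points of divergence are in the details. For the direction $w\equiv^\stepalph_\Psi u\Rightarrow lex(w)\equiv^\Sigma_\Psi lex(u)$, both you and the paper reduce to showing $lex(A)\,lex(B)\equiv^\Sigma_\Psi lex(A\cup B)$, but the paper then invokes the projection characterisation of sequential trace equivalence (checking $\Pi_{a,b}(A\cup B)=\Pi_{a,b}(A)\Pi_{a,b}(B)$ for every dependent pair), whereas you argue directly that both words are permutations of the pairwise-independent set $A\cup B$ and hence connected by adjacent transpositions of independent letters. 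Your argument is more elementary and self-contained; the paper's leans on an external invariant but generalises more readily (it is the same technique reused later for Theorem~\ref{t:prReprStTr}). For the converse, the paper disposes of it in one sentence, saying it ``follows directly from Proposition~\ref{p:stepTraceSwaping}'' because actions can be treated as singleton steps; you are more careful, and rightly so: the missing link is precisely your splitting fact $w\equiv^\stepalph_\Psi sstep(lex(w))$, without which one only relates $sstep(lex(w))$ to $sstep(lex(u))$ and not $w$ to $u$. Your routing through $sstep\circ lex$ makes explicit what the paper leaves implicit (and what its Propositions~\ref{p:stepTraceFixPoints1} and~\ref{p:stepTraceFixPoints2} only partially supply), so on this direction your write-up is actually the more complete of the two.
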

\begin{proof}
Since every action in sequential semantics can be treated as a singleton
step in step semantics, the implication 
$lex(w)\equiv^\Sigma_{\Psi} lex(u) \Rightarrow w\equiv^\stepalph_{\Psi} u$ follows
directly from Proposition~\ref{p:stepTraceSwaping}. 
Therefore we need to prove that 
$w\equiv^\stepalph_{\Psi} u\Rightarrow lex(w)\equiv^\Sigma_{\Psi} lex(u)$.

Recalling the definition of step traces, it is sufficient to show that for 
$A,B\in\stepalph$ such that $A\times B\subseteq \Ind$ we have
$lex(A)lex(B)\equiv^\Sigma_{\Psi}lex(A\cup B)$.
We make use of the projection formulation for sequential trace equivalence.
Since $A\times B\subseteq\Ind$ we have $A\cap B=\emptyset$.
Moreover $A\cup B$ is a step, hence
for every dependent pair $(a,b)$ we have $\Pi_{a,b}(A\cup B)$ empty or equal to
a single action (also in the degenerated case $a=b$). 

Therefore for every $(a,b)\in\Dep$ we have either
\[\Pi_{a,b}(A\cup B)=\Pi_{a,b}(A) \text{ when }
\{a,b\}\cap A\neq\emptyset\wedge\{a,b\}\cap B=\emptyset,\] 
or 
\[\Pi_{a,b}(A\cup B)=\Pi_{a,b}(B) \text{ when }
\{a,b\}\cap B\neq\emptyset\wedge\{a,b\}\cap A=\emptyset.\]
Hence $\Pi_{a,b}(A\cup B)=\Pi_{a,b}(A)\Pi_{a,b}(B)$ and 
$lex(A)lex(B)\equiv^\Sigma_{\Psi}lex(A\cup B)$.
\end{proof}

\begin{prop}\label{p:stepTraceFixPoints1}
Let $w$ be a step sequence over a concurrent alphabet $\Psi$. 
Then the sequences of singletons are fixpoints of the function $sstep\circ lex$, i.e.
\[sstep\circ lex(w)=w ~~\Longleftrightarrow~~ \forall_{A\in Alph(w)}\; |A|=1.\]
\end{prop}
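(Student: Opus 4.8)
The plan is to prove the two implications separately, both by directly unwinding the definitions of $lex$ and $sstep$, with the forward direction resting on a simple counting argument. A convenient preliminary observation is a length bookkeeping fact: for every step $A\in\stepalph$ one has $|lex(A)|=|A|$, which follows by a trivial induction on $|A|$ from the recursive clause $lex(A)=min(A)\,lex(A\setminus min(A))$ (the base case $lex(\emptyset)=\epsilon$ is immediate). Consequently, for a step sequence $w=A_1\ldots A_n$ the word $lex(w)=lex(A_1)\ldots lex(A_n)$ has length $\sum_{i=1}^{n}|A_i|$; and since $sstep$ turns a word of length $k$ into a step sequence consisting of exactly $k$ singleton steps, the step sequence $sstep\circ lex(w)$ always consists of exactly $\sum_{i=1}^{n}|A_i|$ singleton steps, regardless of $w$.

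For the implication $(\Leftarrow)$ I would assume $|A_i|=1$ for every $i$, say $A_i=\{a_i\}$. Then $lex(A_i)=min(A_i)\,lex(\emptyset)=a_i$, so $lex(w)=a_1\ldots a_n$ and hence $sstep\circ lex(w)=sstep(a_1\ldots a_n)=\{a_1\}\ldots\{a_n\}=A_1\ldots A_n=w$, as required.

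For $(\Rightarrow)$ I would assume $sstep\circ lex(w)=w$ and invoke the preliminary observation: the left-hand side is a step sequence with $\sum_{i=1}^{n}|A_i|$ steps, whereas the right-hand side has $n$ steps, so equality of the two step sequences forces $\sum_{i=1}^{n}|A_i|=n$. Since steps are by definition nonempty, $|A_i|\geq 1$ for each $i$, and therefore this equality is possible only when $|A_i|=1$ for all $i$; equivalently, every $A\in Alph(w)$ is a singleton.

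In all honesty there is no serious obstacle here — the statement is essentially bookkeeping about lengths and step counts. The only point that deserves a moment of care is not to extract more from the hypothesis $sstep\circ lex(w)=w$ than the counting argument supplies: one should note up front that $sstep\circ lex$ always returns a sequence of singletons, so any fixpoint must itself already be a sequence of singletons, and then the numerical identity $\sum_i|A_i|=n$ pins down that each step has size exactly one.
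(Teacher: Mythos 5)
Your proof is correct. The backward implication is identical in substance to the paper's: both just compute $lex(\{a_i\})=a_i$ and $sstep(a_i)=\{a_i\}$ and concatenate. The forward implication is where you diverge. The paper argues locally by contradiction: it assumes some step $A$ with $|A|>1$ exists, takes it without loss of generality to be the first step of $w$, unfolds $sstep\circ lex(w)$ far enough to see that its first step is the singleton $sstep(min(A))$, and concludes that $w$ cannot equal its image. You instead argue globally by counting: $|lex(A)|=|A|$ gives that $sstep\circ lex(w)$ always consists of exactly $\sum_i|A_i|$ singleton steps, so a fixpoint must satisfy $\sum_i|A_i|=n$, which together with $|A_i|\ge 1$ forces every step to be a singleton. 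Both arguments are elementary and rest on the same underlying observation — that $sstep\circ lex$ only ever outputs sequences of singletons — which, as you note yourself at the end, already yields the forward direction directly: if $w$ equals a sequence of singletons, each of its steps is a singleton. That one-line version is arguably cleaner than either the paper's positional contradiction or your length bookkeeping, but all three are sound; the counting route has the mild virtue of not needing the paper's ``without loss of generality'' repositioning of the offending step.
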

\proof\hfill

\noindent$\Longrightarrow:$\\
Let $w\in\stepalph^*$ and $A\in Alph(w)$ be such that $|A|>1$. 
Without loss of generality we may assume that $A$ is the first step in $w$.
Then 
\[lex(w)=lex(Aw')=lex(A)lex(w')=min(A)lex(A\setminus min(A))lex(w'),\]
hence
\[sstep\circ lex(w)=sstep(min(A))sstep\circ lex((A\setminus min(A))w').\]
As a result we get that the first step in $sstep\circ lex(w)$ is a singleton, 
which is in contradiction with the assumption $|A|>1$.
Hence $\forall_{A\in Alph(w)}\;|A|=1$.

~

\noindent$\Longleftarrow:$\\
The second implication is straightforward, since $|A|=1$ implies $A=\{a\}$ and
$a=min(A)$, so $lex(A)=a$ and $sstep\circ lex(A)=sstep(a)=\{a\}=A$.
Let $w=A_1\ldots A_n$. Directly from the definitions
\[sstep\circ lex(w)=sstep\circ lex(A_1)\ldots sstep\circ lex(A_n)=A_1\ldots A_n=w.\eqno{\qEd}\]

\begin{prop}\label{p:stepTraceFixPoints2}
Let $u$ be a sequence over a concurrent alphabet $\Psi$.
Then $lex\circ sstep(u)=u$.
\end{prop}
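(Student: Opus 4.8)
The plan is to unfold the two defining recursions directly; the statement is the easy counterpart of Proposition~\ref{p:stepTraceFixPoints1}, and the only structural fact needed is that $sstep$ always produces a step sequence whose steps are all singletons, together with the observation that $lex$ acts as the obvious ``identity'' on a singleton step, namely $lex(\{a\})=a$. So the whole proof reduces to checking that one base clause and one one-action clause fit together, and then concatenating.

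Concretely, I would first dispose of the empty word: $sstep(\epsilon)=\lambda$ by definition of $sstep$, and $lex(\lambda)=\epsilon$ by the base clause of $lex$, so $lex\circ sstep(\epsilon)=\epsilon$. Next, for the one-action case, for any $a\in\Sigma$ we have $\{a\}\neq\emptyset$ and $min(\{a\})=a$, so the recursive clause of $lex$ gives $lex(\{a\})=min(\{a\})\,lex(\{a\}\setminus min(\{a\}))=a\,lex(\emptyset)=a\epsilon=a$. Then for a general word $u=a_1\ldots a_n$ I would apply the ``extend to sequences'' conventions in the right order: $sstep(a_1\ldots a_n)=sstep(a_1)\ldots sstep(a_n)=\{a_1\}\ldots\{a_n\}$, and then $lex(\{a_1\}\ldots\{a_n\})=lex(\{a_1\})\ldots lex(\{a_n\})=a_1\ldots a_n=u$. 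Equivalently, one can present this as a one-line induction on $|u|$: the claim holds for $\epsilon$, and if it holds for $u'$ then $lex\circ sstep(au')=lex(\{a\}\,sstep(u'))=lex(\{a\})\,lex(sstep(u'))=a\,u'$.

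There is essentially no obstacle here; the mild points requiring care are treating $u=\epsilon$ separately (so that the empty step sequence and empty word line up), and making sure the extensions of $sstep$ and $lex$ to sequences are composed in the correct order rather than conflated. This is exactly why $lex\circ sstep$ is the identity on all of $\Sigma^*$ while, by Proposition~\ref{p:stepTraceFixPoints1}, $sstep\circ lex$ has only the singleton step sequences as fixpoints: passing from a word to singleton steps and back loses no information, whereas passing from a step sequence through $lex$ forgets the grouping of any step of size greater than one.
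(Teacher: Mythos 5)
Your proof is correct and follows essentially the same route as the paper's: both unfold the definitions to get $sstep(a)=\{a\}$ and $lex(\{a\})=a$, then conclude by concatenating over the letters of $u$. Your explicit treatment of the empty word and the inductive reformulation are minor additions of detail, not a different argument.
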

\begin{proof}
Let $u=a_1\ldots a_n$ and $A_i=\{a_i\}$. 
Analogously to the proof of Proposition \ref{p:stepTraceFixPoints1}, we have
$a_i=lex(A_i)$ and $sstep(a_i)=A_i$, so 
\begin{multline*}
lex\circ sstep(u)=lex\circ sstep(a_1\ldots a_n)
=lex(sstep(a_1)\ldots sstep(a_n))\\
=lex(A_1\ldots A_n)
=lex(A_1)\ldots lex(A_n)=a_1\ldots a_n=u.
\end{multline*}
\end{proof}

\noindent The pair $(\stepalph^*/_{\equiv^\stepalph_\Psi},\circ)$ is a (step trace) monoid,
where $\tau\circ\tau'=[w\circ w']$, for any
step sequences $w\in\tau$ and $w'\in\tau'$.
Step trace concatenation is
well-defined as $[w\circ w']=[v\circ v']$, for all $w,v\in\tau$ and 
$w',v'\in\tau'$.
A step trace $\tau$ is a prefix of a step trace $\tau'$ if
there is a step trace $\tau''$ such that $\tau\circ\tau''=\tau'$.
As in the case of sequential traces, for every step trace $\tau$ and every 
$a\in \Sigma$, we can define
$\mathit{alph}(\tau)=\mathit{alph}(w)$ and
$\#_a(\tau)=\#_a(w)$, where $w$ is any step sequence belonging to $\tau$.
The situation with the step alphabet, as it is not an invariant of a step trace, is a
bit more complex: we define $Alph(\tau)=\bigcup_{w\in\tau}Alph(w)$.

\begin{thm}\label{t:prReprStTr}
Let $\Psi$ be a concurrent alphabet, and $w,u\in \stepalph^*$.
Then $w\equiv^{\stepalph}_\Psi u$ if and only if 
$\forall_{(a,b)\in\Dep}\;\Pi_{a,b}(w)=\Pi_{a,b}(u)$.
\end{thm}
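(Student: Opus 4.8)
The plan is to reduce this step-level projection characterization to the already-established sequential one (the displayed identity in Section~\ref{sect-2}, namely $u\equiv^\Sigma_\Psi w \iff \forall_{(a,b)\in\Dep}\;\Pi_{a,b}(u)=\Pi_{a,b}(w)$) by passing through the lexicographical linearization $lex$. First I would observe that for any step sequence $w$ and any dependent pair $(a,b)$, we have $\Pi_{a,b}(w)=\Pi_{a,b}(lex(w))$; this is essentially immediate from the definitions, since $lex$ simply orders the actions inside each step according to $\leq$, and no step can contain both $a$ and $b$ when $(a,b)\in\Dep$, so the projection reads off the same subsequence of occurrences of $a$ and $b$ regardless of the internal ordering chosen for each step (and regardless of which actions outside $\{a,b\}$ are present). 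This little commutation lemma between $\Pi_{a,b}$ and $lex$ is the technical heart; it should be proved by a straightforward induction on $w$, splitting on whether the first step contains $a$, contains $b$, or contains neither.

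With that in hand, the two directions fall out by chaining equivalences. For the forward direction, assume $w\equiv^\stepalph_\Psi u$. By Proposition~\ref{p:stepTraceLex} this gives $lex(w)\equiv^\Sigma_\Psi lex(u)$, so by the projection characterization of sequential trace equivalence we get $\Pi_{a,b}(lex(w))=\Pi_{a,b}(lex(u))$ for every $(a,b)\in\Dep$; applying the commutation lemma on both sides yields $\Pi_{a,b}(w)=\Pi_{a,b}(u)$. For the converse, assume $\Pi_{a,b}(w)=\Pi_{a,b}(u)$ for all $(a,b)\in\Dep$. By the commutation lemma this is the same as $\Pi_{a,b}(lex(w))=\Pi_{a,b}(lex(u))$ for all dependent pairs, hence by the sequential characterization $lex(w)\equiv^\Sigma_\Psi lex(u)$, and then Proposition~\ref{p:stepTraceLex} delivers $w\equiv^\stepalph_\Psi u$.

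I expect the main obstacle — modest as it is — to be a careful statement and proof of the commutation lemma $\Pi_{a,b}(w)=\Pi_{a,b}(lex(w))$, in particular checking the base case and the inductive step cleanly against the two different definitions of $\Pi_{a,b}$ (the word-level one and the step-level one) and the recursive definition of $lex$ on a single step $A\neq\emptyset$ as $min(A)\,lex(A\setminus min(A))$. One has to note that within the expansion of $lex(A)$ the action $a$ occurs iff $a\in A$ and then exactly once (and likewise for $b$), and that they cannot both occur; so $\Pi_{a,b}(lex(A))$ is the empty word, $a$, or $b$ exactly as $\Pi_{a,b}(A)$ prescribes at the step level. Once this is settled, everything else is bookkeeping, and no new ideas beyond Propositions~\ref{p:stepTraceSwaping}--\ref{p:stepTraceLex} and the sequential projection theorem are required.
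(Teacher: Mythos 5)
Your proposal is correct and follows exactly the paper's route: the paper also derives this theorem from Proposition~\ref{p:stepTraceLex} together with the projection-based characterization of sequential trace equivalence, leaving the commutation identity $\Pi_{a,b}(w)=\Pi_{a,b}(lex(w))$ for dependent $(a,b)$ implicit, whereas you state and justify it explicitly. That added lemma is the right (and only) technical point to check, so your write-up is simply a more detailed version of the same argument.
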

\begin{proof}
The proof follows directly from Proposition~\ref{p:stepTraceLex} 
and the projection based definition of trace equivalence (in sequential semantics).
\end{proof}

Next, we give the canonical (normal) form of a step trace which essentially
captures a greedy, maximally concurrent, execution
of the actions occurring in the step trace
conforming to the independence relations.
A step sequence $w = A_1\ldots A_n\in\stepalph^*$
is in \emph{Foata canonical form} if, for each $i\leq n$,
whenever $Av\equiv^\stepalph_\Psi A_i\ldots A_n$
for some $A\in\stepalph$ and $v\in\stepalph^*$,
then $A\subseteq A_i$.
One can see that all suffixes and all prefixes of a step sequence in 
Foata canonical form are also in Foata canonical form, and that
each step trace comprises a unique step sequence in Foata canonical form.
%Note that an alternative (equivalent) definition of this canonical
%form requires that, for every $i<k$, there is no $\emptyset\neq A\subseteq A_{i+1}$
%such that $A_i\times A\subseteq \Ind$.
Note that the following statement holds:
\begin{prop}\label{p:foata4steptraces}
Let $\Psi$ be a concurrent alphabet.
A step sequence $w = A_1\ldots A_n\in\stepalph^*$
is in \emph{Foata canonical form} if and only if 
for every $i<n$, there is no $\emptyset\neq A\subseteq A_{i+1}$
such that $A_i\times A\subseteq \Ind$.
\end{prop}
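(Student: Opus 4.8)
The plan is to prove the two directions of the biconditional separately, using the characterisation of the Foata canonical form in terms of the ``greedy'' property of the head step together with the fact (noted just above the statement) that every suffix of a Foata normal form is again in Foata normal form. The key observation driving the argument is that the condition ``$Av\equiv^\stepalph_\Psi A_i\ldots A_n$ implies $A\subseteq A_i$'' is really a condition relating $A_i$ to the \emph{set of actions that can be pulled to the front} of $A_{i+1}\ldots A_n$, and by Proposition~\ref{p:stepTraceSwaping} (commutation of adjacent independent steps) and the join/split rules, an action $a\in A_{i+1}$ can be moved into the head step merged with $A_i$ precisely when $\{a\}\times A_i\subseteq\Ind$, i.e.\ when $a$ together with all of $A_i$ forms a legal step.

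For the forward direction I would argue contrapositively: suppose for some $i<n$ there is a nonempty $A\subseteq A_{i+1}$ with $A_i\times A\subseteq\Ind$. Then $A$ and $A_i$ are independent steps, so by the split/join rule $A_iA_{i+1}\equiv^\stepalph_\Psi (A_i\cup A)(A_{i+1}\setminus A)$ when $A\subsetneq A_{i+1}$, and $\equiv^\stepalph_\Psi(A_i\cup A)$ when $A=A_{i+1}$; in either case the suffix $A_i\ldots A_n$ is equivalent to a step sequence whose head step is $A_i\cup A\supsetneq A_i$. This contradicts the defining property of Foata canonical form applied at position $i$ (taking the witness step to be $A_i\cup A$). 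Hence no such $A$ exists.

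For the backward direction, assume that for every $i<n$ there is no nonempty $A\subseteq A_{i+1}$ with $A_i\times A\subseteq\Ind$, and I must show $w$ is in Foata canonical form, i.e.\ for each $i$, whenever $Bv\equiv^\stepalph_\Psi A_i\ldots A_n$ then $B\subseteq A_i$. Since suffixes of such $w$ inherit the hypothesis, it suffices to treat $i=1$. Take any $B$ and $v$ with $Bv\equiv^\stepalph_\Psi A_1\ldots A_n$ and any $b\in B$; I want $b\in A_1$. Using Theorem~\ref{t:prReprStTr} (projection characterisation of step-trace equivalence) together with $\#_b(w)$ and, for each $c$ with $(b,c)\in\Dep$, the projection $\Pi_{b,c}(w)$, one sees that $b$ must occur in some $A_j$, and that $b$ must be ``enabled at the front'': for every $c\in A_1\cup\ldots\cup A_{j-1}$ with $(b,c)\in\Dep$ we would get a contradiction in $\Pi_{b,c}$, so in fact $b$ is independent of every action occurring strictly before $A_j$. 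A short induction on $j$ then uses the hypothesis: if $j>1$, then $\{b\}\subseteq A_j$ and $A_{j-1}\times\{b\}\subseteq\Ind$ (since $b$ is independent of everything before $A_j$, in particular of all of $A_{j-1}$), contradicting the assumption at position $j-1$. Hence $j=1$ and $b\in A_1$, giving $B\subseteq A_1$ as required.

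The main obstacle I expect is the backward direction, specifically making rigorous the claim that under the hypothesis any action appearing in an equivalent step sequence's head step must already lie in $A_1$ of $w$. The clean way to handle this is to avoid reasoning directly about chains of commutations and instead route everything through the projection representation (Theorem~\ref{t:prReprStTr} and Proposition~\ref{p:stepTraceLex}): from $Bv\equiv^\stepalph_\Psi w$ one extracts, for each dependent pair $(b,c)$, the equality $\Pi_{b,c}(Bv)=\Pi_{b,c}(w)$, and since $b\in B$ sits at the front of $Bv$, the symbol $b$ must head $\Pi_{b,c}(w)$ whenever $c$ occurs before $b$'s first occurrence in $w$ — forcing $b$'s first occurrence in $w$ to be unobstructed by dependent actions, which combined with the no-further-merging hypothesis pins it to $A_1$. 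Everything else is a routine unwinding of the definitions of $lex$, $sstep$, and the split/join rule.
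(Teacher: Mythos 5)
Your proposal is correct and follows essentially the same route as the paper: the forward direction uses the same split-then-join move to enlarge the head step $A_i$ to $A_i\cup A$, and the backward direction uses the projection characterisation (Theorem~\ref{t:prReprStTr}) to locate the first occurrence of an offending action, show it is independent of everything preceding it, and contradict the hypothesis at the preceding index. The only cosmetic difference is that you use a singleton witness $\{b\}$ where the paper uses the set $B\cap A_j$; both are valid instances of the same argument.
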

\begin{proof}~

\noindent$\Longrightarrow:$\\
Let $w = A_1\ldots A_n$ be in Foata canonical form. 
Suppose that there are $i<n$ and $\emptyset\neq A\subseteq A_i$ such that
$A\times A_i\subseteq\Ind$.
Then, since for every $A_k$ we have $A_k\times A_k\subseteq(\Ind\cup I)$,
\[A_iA_{i+1}\ldots A_n\equiv^\stepalph_\Psi 
  A_iA(A_{i+1}\setminus A)A_{i+2}\ldots A_n\equiv^\stepalph_\Psi
  (A_i\cup A)(A_{i+1}\setminus A)A_{i+2}\ldots A_n,\]
and by $A\neq\emptyset$ we obtain that $w$ is not in Foata canonical form.
Hence there are no such $i<n$ and $A\subseteq A_{i+1}$.

~

\noindent$\Longleftarrow:$\\
Let $w = A_1\ldots A_n$ and for every $i<n$, 
there is no $\emptyset\neq A\subseteq A_{i+1}$
such that $A_i\times A\subseteq \Ind$. 
Assume moreover that $Av\equiv^\stepalph_\Psi A_i\ldots A_n$ 
and let $B=A_i\setminus A$.
Suppose that $B$ is not empty and let $j$ be the least index such that 
$B\cap A_j\neq\emptyset$.
By Theorem \ref{t:prReprStTr} not only such $j$ exist, but also we get that 
$B\times A_k\subseteq\Ind$ for every $i<k<j$.
Hence $B\cap A_j$ is a nonempty step contained in $A_j$ and 
$(B\cap A_j)\times A_{j-1}\subseteq\Ind$, 
which gives a contradiction with the assumptions, 
and so $B$ has to be empty, which ends the proof. 
\end{proof}

We can also distinguish one of the representatives of $\tau$, built 
from singletons.
Note that such step sequences may be considered as sequences over $\Sigma$
and compared using lexicographical order $\leq$.
Similarly to the case of sequential traces, we call the least (with respect to 
the order $\leq$) singleton based representative of a step trace $\tau$ its
\emph{lexicographical canonical form}.

Canonical forms of sequential and step traces connect those two worlds.
More precisely, the following hold:

\begin{thm}\label{t:lex4TrAndSTr}
Let $\Psi=(\Sigma,\Ind)$ be a concurrent alphabet.
Then a step sequence $w$ is in lexicographical canonical form
if and only if 
all $A\in Alph(w)$ are singletons 
and
the sequence $lex(w)$ is in lexicographical canonical form.
\end{thm}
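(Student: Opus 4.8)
The plan is to prove both implications of the equivalence, using the two worlds of Propositions~\ref{p:stepTraceFixPoints1}, \ref{p:stepTraceFixPoints2}, and~\ref{p:stepTraceLex}, together with the defining property of a lexicographical canonical form as a least representative.

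\medskip
\noindent\textbf{($\Longleftarrow$).} Suppose all $A\in Alph(w)$ are singletons and $lex(w)$ is in lexicographical canonical form. First I would observe that since $w$ consists of singletons, $w=sstep(u)$ for $u=lex(w)$ (by Proposition~\ref{p:stepTraceFixPoints1}, or directly), and so $w$ is itself a singleton-based representative of its step trace $\tau=[w]$. It remains to show $w$ is the \emph{least} such representative with respect to $\leq$. Take any singleton-based representative $w'$ of $\tau$. Then $w'=sstep(u')$ for some $u'\in\Sigma^*$, and by Propositions~\ref{p:stepTraceFixPoints2} and~\ref{p:stepTraceLex}, $w\equiv^\stepalph_\Psi w'$ gives $lex(w)=u\equiv^\Sigma_\Psi u'=lex(w')$. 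Since $u$ is in lexicographical canonical form (least in its sequential trace), $u\leq u'$. Because $sstep$ just wraps each action into a brace and hence preserves the lexicographical comparison on the underlying action sequences, $w=sstep(u)\leq sstep(u')=w'$. Thus $w$ is the least singleton-based representative, i.e.\ $w$ is in lexicographical canonical form.

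\medskip
\noindent\textbf{($\Longrightarrow$).} Suppose $w$ is in lexicographical canonical form. By definition it is a singleton-based representative, so all $A\in Alph(w)$ are singletons; this is the easy half. For the remaining claim, set $u=lex(w)$, so that $w=sstep(u)$ (Proposition~\ref{p:stepTraceFixPoints1}), and let $u_0$ be the lexicographical canonical form of the sequential trace $[u]_{\equiv^\Sigma_\Psi}$. Then $u_0\equiv^\Sigma_\Psi u$, hence by Proposition~\ref{p:stepTraceLex} the singleton-based step sequence $sstep(u_0)$ is $\equiv^\stepalph_\Psi$-equivalent to $w$, and $lex\circ sstep(u_0)=u_0$ by Proposition~\ref{p:stepTraceFixPoints2}. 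Since $w$ is the least singleton-based representative, $w\leq sstep(u_0)$; applying $lex$ (which on singleton-based sequences just strips the braces and preserves $\leq$) yields $u=lex(w)\leq lex\circ sstep(u_0)=u_0$. But $u_0$ is the least element of $[u]$, so $u_0\leq u$; therefore $u=u_0$, i.e.\ $lex(w)$ is in lexicographical canonical form.

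\medskip
\noindent\textbf{Main obstacle.} The only delicate point is the compatibility of $sstep$ (and its partial inverse $lex$ restricted to singleton-based sequences) with the lexicographical order: one must check that comparing two singleton-based step sequences letter-by-letter on their steps coincides with comparing the corresponding sequences of actions. This is immediate from the definitions since $sstep$ acts positionally and bijectively on singleton-based sequences, but it is the hinge on which both directions turn, so I would state it explicitly (perhaps as a one-line remark before the proof) rather than leave it implicit.
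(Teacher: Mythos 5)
Your proof is correct, and it is essentially the paper's argument spelled out in full: the paper dismisses the theorem as "an equivalent reformulation of the definition" via Proposition~\ref{p:stepTraceFixPoints1}, and your two implications are exactly the careful unfolding of that reformulation, using the companion facts (Propositions~\ref{p:stepTraceFixPoints2} and~\ref{p:stepTraceLex}) to transport leastness between the singleton-based step sequences and their underlying action sequences. The "main obstacle" you flag is in fact built into the paper's definition, which declares that singleton-based step sequences \emph{are} compared by viewing them as sequences over $\Sigma$, so the compatibility of $sstep$ and $lex$ with $\leq$ holds by fiat.
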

\begin{proof}
According to Proposition \ref{p:stepTraceFixPoints1},
it is only an equivalent reformulation of the definition.
\end{proof}

\begin{thm}\label{t:foataSTr2Tr}
Let $\Psi=(\Sigma,\Ind)$ be a concurrent alphabet.
If a step sequence $w=A_1A_2\ldots A_k$ is in Foata canonical form
then 
the sequence $u=lex(w)$ is in Foata canonical form.
\end{thm}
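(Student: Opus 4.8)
The plan is to take $u = lex(w) = lex(A_1)\,lex(A_2)\cdots lex(A_k)$ as the candidate block decomposition of the word $u$, with blocks $w_i := lex(A_i)$, and to verify directly the two conditions from the definition of Foata canonical form of a word recalled in Section~\ref{sect-2}. Each $A_i$ is a nonempty element of $\stepalph$, so each $w_i$ is a nonempty word and the decomposition has the required shape; it then suffices to check the two bullet points. The first (that each block is pairwise independent and lexicographically minimal in its own trace) is routine: $\mathit{alph}(w_i) = A_i$ is pairwise independent because $A_i \in \stepalph$, and since all actions of $A_i$ commute, $[w_i]$ consists precisely of all linear orderings of $A_i$; a short induction on $|A_i|$ unfolding the recursive definition of $lex$ shows that $lex(A_i)$ lists $A_i$ in ascending order w.r.t.\ $\leq$, which is exactly the lexicographically least such ordering.

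The substantial part is the second condition: for every $i$ with $1 < i \leq k$ and every action $a$ occurring in $w_i$ (equivalently $a \in A_i$), there must be an action $b$ occurring in $w_{i-1}$, i.e.\ $b \in A_{i-1}$, with $(a,b) \in \Dep$. I would argue by contradiction. Suppose no such $b$ exists; then $(b,a) \in \Ind$ for every $b \in A_{i-1}$, and since $\Ind$ (hence $\Dep$) is symmetric, the singleton step $\{a\}$ is a nonempty subset of $A_i = A_{(i-1)+1}$ satisfying $A_{i-1} \times \{a\} \subseteq \Ind$. As $i-1 < k$, this directly contradicts the local characterisation of the Foata canonical form of step traces given in Proposition~\ref{p:foata4steptraces}. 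Hence the required $b$ exists, both conditions are met, and $u$ is in Foata canonical form.

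The only place where I expect some care to be needed is the first condition, specifically making precise that — because $A_i$ is a step of mutually independent actions — the trace $[lex(A_i)]$ is the set of all orderings of $A_i$, so that ``minimal among $[lex(A_i)]$'' simply means ``sorted ascending'', matched by the recursive definition of $lex$; but this is a one-line induction. The real content of the proof is the single reduction of the step-trace Foata property to the word-level ``each block depends on the previous one'' condition, obtained by instantiating Proposition~\ref{p:foata4steptraces} at the singleton sub-step $A = \{a\}$.
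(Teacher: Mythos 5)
Your proposal is correct and follows essentially the same route as the paper: both verify the block decomposition $u = lex(A_1)\cdots lex(A_k)$ against the word-level definition of Foata canonical form, and both establish the second condition by contradiction, instantiating the singleton $A=\{a\}$ and invoking the local characterisation of Proposition~\ref{p:foata4steptraces}. Your treatment of the first condition (lexicographic minimality of $lex(A_i)$ within $[lex(A_i)]$) is in fact slightly more careful than the paper's, which only notes pairwise independence of $A_i$.
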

\begin{proof}
We have to prove that sequences $lex(A_1),\ldots,lex(A_k)$ satisfy the conditions
from the definition of Foata canonical form in the case of sequential trace.
The elements of every $A_i$ are pairwise independent since $A_i$ is a step.
Let us suppose that there exist $1<i\leq k$ and $a\in A_i$ such that $a$ is 
independent with every action $b$ from $A_{i-1}$. Let $A=\{a\}$.
Then $A\neq\emptyset$, $A\subseteq A_{i}$, and $A_{i-1}\times A\subseteq\Ind$.
From the definition of Foata canonical form in the case of step traces
we have that $w$ is not in Foata canonical form.
Hence, $u$ is indeed in Foata canonical form.
\end{proof}

\begin{thm}\label{t:foataTr2STr}
Let $\Psi=(\Sigma,\Ind)$ be a concurrent alphabet.
If a sequence $u=a_1a_2\ldots a_n$ is in Foata canonical form
then 
there exists a step sequence $w=A_1A_2\ldots A_k$ in Foata canonical form
such that $u=lex(w)$.
\end{thm}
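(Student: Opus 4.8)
The plan is to exhibit the step sequence $w$ explicitly, by collapsing each block of the sequential Foata decomposition of $u$ into a single step, and then to verify that the result is in Foata canonical form by invoking Proposition~\ref{p:foata4steptraces}.

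First I would write $u$ in its sequential Foata decomposition $u=u_1u_2\ldots u_m$ ($m\geq 0$), where each $u_i$ is a nonempty word whose letters are pairwise independent and sorted increasingly, and such that every action occurring in $u_i$ (for $i>1$) is dependent on some action occurring in $u_{i-1}$. The key observation is that, since $\Ind$ is irreflexive, pairwise independence of the letters of $u_i$ forces them to be \emph{distinct}; hence $A_i:=\mathit{alph}(u_i)$ is a genuine step in $\stepalph$ with $|A_i|=|u_i|$, and since $u_i$ is the lexicographically least rearrangement of those letters, $lex(A_i)=u_i$. Setting $w=A_1A_2\ldots A_m$ then yields $lex(w)=lex(A_1)\ldots lex(A_m)=u_1\ldots u_m=u$, as required; the degenerate case $m=0$ is trivial with $w=\lambda$.

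It then remains to check that $w$ is in Foata canonical form as a step sequence. By Proposition~\ref{p:foata4steptraces} this amounts to showing that for no $i<m$ is there a nonempty $A\subseteq A_{i+1}$ with $A_i\times A\subseteq\Ind$. Suppose such an $A$ existed and pick any $a\in A$; then $a\in A_{i+1}=\mathit{alph}(u_{i+1})$ and $a$ is independent of every action of $A_i=\mathit{alph}(u_i)$. This directly contradicts the second clause of the sequential Foata form, which guarantees an action $b$ occurring in $u_i$ with $(a,b)\in\Dep$. Hence no such $A$ exists and $w$ is in Foata canonical form.

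There is essentially no deep obstacle once the right grouping is chosen: the only point that needs care is the identity $lex(A_i)=u_i$, which rests on the letters of a sequential Foata block being pairwise distinct (irreflexivity of $\Ind$) and sorted (the minimality clause of the sequential Foata form). After that, the passage between the two notions of canonical form is just a transcription of the respective dependency conditions through Proposition~\ref{p:foata4steptraces}.
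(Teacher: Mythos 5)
Your proof is correct and follows essentially the same route as the paper's: group the sequential Foata blocks $u_i$ into steps $A_i=\mathit{alph}(u_i)$ and derive a contradiction with the second clause of the sequential Foata form via the characterisation in Proposition~\ref{p:foata4steptraces}. If anything, your write-up is slightly more complete, since you explicitly justify $lex(A_i)=u_i$ (distinctness of letters from irreflexivity of $\Ind$, plus lexicographic minimality of each block), a point the paper's proof leaves implicit.
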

\begin{proof}
From the definition of Foata canonical form, we know that there exist sequences
$u_1,\dots, u_k$ such that the elements of $alph(u_i)$ are pairwise independent for
every $0<i\leq k$.
Hence, for every $0<i\leq k$ we have that $alph(w_i)$ is a step over $\Psi$.
Let $A_i=alph(w_i)$.
Suppose that $w$ is not in Foata canonical form.
Then, there exist nonempty $A\in\stepalph$ and $0<i<k$
such that $A\subseteq A_{i+1}$ and $A_i\times A\subseteq\Ind$.
Let $a\in A$. 
Since $A_i\times A\subseteq\Ind$ there is no $b\in A$ dependent with $a$. 
Hence $u$ is not in Foata canonical form.
This contradicts the assumptions and proves the theorem.
\end{proof}

We conclude this subsection by formulating and proving a result that establishes
a relationship between two semantics in which we can consider traces:

\begin{thm}\label{t:stepEQsequential}
Let $\sigma\in\Sigma^*/_{\equiv^\Sigma_\Psi}$ be a trace (in sequential semantics).
Then there exists a unique step trace $\tau$ such that $lex(\tau)=\sigma$.
\end{thm}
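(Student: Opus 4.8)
The plan is to build the required step trace $\tau$ directly from a linear representative of $\sigma$, and then to verify the identity $lex(\tau)=\sigma$ — reading $lex(\tau)$ as the set $\{lex(w)\pipe w\in\tau\}$ of words — together with uniqueness, using only Propositions~\ref{p:stepTraceLex} and~\ref{p:stepTraceFixPoints2}. Concretely, I would fix any word $v\in\sigma$ and put $\tau=[sstep(v)]$, the step trace of the singletonization of $v$ (note that $sstep(v)\in\stepalph^*$ uses only singleton steps, so this is a legitimate step sequence). By Proposition~\ref{p:stepTraceFixPoints2} we have $lex(sstep(v))=v$, which is the bridge between the two worlds that the whole argument rides on.

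For $lex(\tau)\subseteq\sigma$: any $w\in\tau$ satisfies $w\equiv^\stepalph_\Psi sstep(v)$, so Proposition~\ref{p:stepTraceLex} gives $lex(w)\equiv^\Sigma_\Psi lex(sstep(v))=v$, hence $lex(w)\in\sigma$. For the reverse inclusion $\sigma\subseteq lex(\tau)$: given $v'\in\sigma$, apply Proposition~\ref{p:stepTraceFixPoints2} to rewrite $v'=lex(sstep(v'))$ and $v=lex(sstep(v))$, so that $v'\equiv^\Sigma_\Psi v$ becomes $lex(sstep(v'))\equiv^\Sigma_\Psi lex(sstep(v))$; Proposition~\ref{p:stepTraceLex} then yields $sstep(v')\equiv^\stepalph_\Psi sstep(v)$, i.e.\ $sstep(v')\in\tau$, and therefore $v'=lex(sstep(v'))\in lex(\tau)$. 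This establishes existence. Uniqueness is then immediate: if step traces $\tau_1,\tau_2$ both satisfy $lex(\tau_i)=\sigma$, pick $w_i\in\tau_i$; then $lex(w_1),lex(w_2)\in\sigma$, so $lex(w_1)\equiv^\Sigma_\Psi lex(w_2)$, and Proposition~\ref{p:stepTraceLex} forces $w_1\equiv^\stepalph_\Psi w_2$, whence $\tau_1=\tau_2$.

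I do not expect a genuine obstacle here, since all the substance has already been packaged into Propositions~\ref{p:stepTraceLex} and~\ref{p:stepTraceFixPoints2}. The one point that deserves care is the inclusion $\sigma\subseteq lex(\tau)$: it is not enough to observe that the chosen $v$ lies in $lex(\tau)$, one must check that \emph{every} word of $\sigma$ is realised as $lex(w)$ for a suitable $w\in\tau$, and this is exactly where the fixpoint identity $lex\circ sstep(u)=u$ of Proposition~\ref{p:stepTraceFixPoints2} is used to pull $\equiv^\Sigma_\Psi$-equivalence back along $sstep$ into $\equiv^\stepalph_\Psi$-equivalence. As a side remark, existence together with uniqueness show that $lex$ induces a bijection between step traces and sequential traces.
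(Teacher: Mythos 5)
Your proof is correct. It differs from the paper's argument mainly in how the step trace $\tau$ is produced and in how much machinery is invoked. The paper first associates $\sigma$ with $\tau$ by passing through the Foata canonical forms (Theorems~\ref{t:foataSTr2Tr} and~\ref{t:foataTr2STr}), then proves $lex(\tau)\subseteq\sigma$ via the lexicographical canonical forms and the projection characterisation (Theorems~\ref{t:lex4TrAndSTr} and~\ref{t:prReprStTr}), and only uses the set $X=\{sstep(u)\pipe u\in\sigma\}$ together with Propositions~\ref{p:stepTraceFixPoints2} and~\ref{p:stepTraceLex} for the reverse inclusion. You instead define $\tau=[sstep(v)]$ for an arbitrary $v\in\sigma$ and derive both inclusions, plus uniqueness, from Propositions~\ref{p:stepTraceLex} and~\ref{p:stepTraceFixPoints2} alone; this is more economical and avoids the canonical-form detour entirely, at the cost of not exhibiting the pleasant fact (which the paper's route makes visible) that the Foata and lexicographical canonical forms of $\sigma$ and $\tau$ correspond under $lex$. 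A further point in your favour: you spell out the uniqueness argument explicitly, whereas the paper's proof leaves it implicit; your observation that $lex$ therefore induces a bijection between step traces and sequential traces is exactly the intended content of the theorem.
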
 
\begin{proof}
Theorems \ref{t:foataSTr2Tr} and \ref{t:foataTr2STr} allow us to associate 
sequential trace $\sigma$ with a trace $\tau$ using their Foata canonical forms.

Let $w$ be a lexicographical canonical form of $\tau$.
By Theorems \ref{t:lex4TrAndSTr} and \ref{t:prReprStTr}, and definition of 
sequential traces based on projections, we get that $lex(w)\in \sigma$
and lexicographical canonical forms of $\sigma$ and $\tau$ also overlaps.
By Proposition \ref{p:stepTraceLex} we conclude that $lex(\tau)\subseteq \sigma$.

Let $X=\{sstep(u)\pipe u\in\sigma\}$.
By Propositions \ref{p:stepTraceFixPoints2} and \ref{p:stepTraceLex} 
we get that $X\subseteq\tau$ and $lex(X)=\sigma$.
Hence $\sigma=lex(X)\subseteq lex(\tau)$ which ends the proof. 
\end{proof}

\subsection{Comtraces}
Whereas traces are satisfactory to describe the concurrent behaviour of 
{EN-systems}, they are not sufficient to capture the behaviour of systems with 
inhibitor arcs. 
To deal with such systems, we recall the notion of combined traces 
(see \cite{JanKou95}).

A  \emph{comtrace alphabet} is a
triple $\Theta=(\Sigma,\Sim,\Ser)$, where
$\Sigma$ is an arbitrary alphabet and
$\Ser\subseteq\Sim\subseteq\Sigma\times\Sigma$ are two
relations, respectively called
\emph{serialisability} and \emph{simultaneity}; it is assumed
that  $\Sim$ is irreflexive and symmetric.
Intuitively, if $(a,b)\in\Sim$
then $a$ and $b$ may occur
simultaneously, whereas $(a,b)\in \Ser$
means that in such a case $a$
may also occur before $b$ (with both executions being equivalent).
The set of all (potential) steps over $\Theta$,
or \emph{step alphabet}, is then defined as the set $\stepalph_\Theta$ comprising all
nonempty sets of actions $A\subseteq\Sigma$
such that $(a,b)\in\Sim$, for all distinct $a,b\in A$.
If the comtrace alphabet $\Theta$ is clear from the context, 
we would write $\stepalph$ instead of $\stepalph_\Theta$.

The \emph{comtrace congruence} over
$\Theta$, denoted by $\equiv_\Theta$, is the reflexive, symmetric and transitive 
closure of the relation $\sim_\Theta\subseteq\stepalph^*\times\stepalph^*$ such 
that $w\sim_\Theta v$ if there are $u,z\in\stepalph^*$ and $A,B\in\stepalph$
satisfying $w=uABz$, $v=u(A\cup B)z$ and $A\times B\subseteq \Ser$.
Note that $A\cap B=\emptyset$ as $\Ser$ is irreflexive.

Equivalence classes of the relation $\equiv_\Theta$
are called \textit{comtraces} (see~\cite{JanKleKou11}),
and the comtrace containing a given step sequence $w$ is denoted by $[w]$.
The set of all comtraces is denoted by $\stepalph^*/_{\equiv_\Theta}$,
and the pair $(\stepalph^*/_{\equiv_\Theta},\circ)$ is a (comtrace) monoid,
where $\tau\circ\tau'=[w\circ w']$, for any
step sequences $w\in\tau$ and $w'\in\tau'$.
Comtrace concatenation is
well-defined as $[w\circ w']=[v\circ v']$, for all $w,v\in\tau$ and 
$w',v'\in\tau'$.
A comtrace $\tau$ is a prefix of a comtrace $\tau'$ if
there is a comtrace $\tau''$ such that $\tau\circ\tau''=\tau'$.
As in the case of step traces, for every comtrace $\tau$ and every $a\in \Sigma$, we 
can define
$\mathit{alph}(\tau)=\mathit{alph}(w)$ and
$\#_a(\tau)=\#_a(w)$, where $w$ is any step sequence belonging to $\tau$.
Moreover, $Alph(\tau)=\bigcup_{w\in\tau}Alph(w)$.

Next, we give the canonical form of a comtrace which essentially
captures a greedy, maximally concurrent, execution
of the actions occurring in the comtrace
conforming to the simultaneity and serialisability relations.
A step sequence $w = A_1\ldots A_n\in\stepalph^*$
is in \emph{Foata canonical form} if, for each $i\leq n$,
whenever $Av\equiv_\Theta A_i\ldots A_k$
for some $A\in\stepalph$ and $v\in\stepalph^*$,
then $A\subseteq A_i$.
This canonical form of a comtrace is extensively discussed in \cite{JanLe11}.
One can see that all suffixes and all prefixes of step sequence in Foata canonical 
form are also in Foata canonical form, and that
each comtrace comprises a unique step sequence in Foata canonical form.

Note that an alternative (equivalent) definition of normal
form requires that, for every $i<k$, there is no $\emptyset\neq A\subseteq A_{i+1}$
such that $A_i\times A\subseteq \Ser$ and $A\times (A_{i+1}{\setminus}A)\subseteq \Ser$.
Moreover, in the cases of sequential and step traces we define two canonical forms.
The first is, as in the case of comtraces, Foata canonical form, 
while the latter is called lexicographical. 
Both of those canonical forms prove to be very elegant and useful theoretical tool
(as an example see prove of Theorem \ref{t:stepEQsequential}).
In the next section we define the lexicographical canonical form of a comtrace.
It is one of the main notions introduced and utilised in this paper.
But previously, let us discuss in detail direct relationships between atomic actions.

\subsection{Relations between actions}

In our discussion, we use a number of relations
capturing semantically meaningful relationships between individual actions (see also \cite{MikKou11}):
\begin{iteMize}{$\bullet$}
\item
Dependence $\Dep=(\Sigma\times\Sigma)\setminus\Sim$,
and independence $\Ind=\Ser\cap\Ser^{-1}$.\\
Both relations have their counterparts in trace theory, and
we denote them in the same way.
If two actions are dependent then they never occur in a common step.  
Two actions are
independent if they can be executed 
in any order  as well as simultaneously (as $\Ser\subseteq\Sim$).

\item 
Semi-independence $\Api = \Sim\setminus\Ser$.\\
In contrast to the situation found in traces, dependence and independence do not describe
all possible relationships between individual actions in comtraces.
The remaining ones are called, due to the possibility of occurring together without
being fully independent, semi-independent actions. 
Semi-independent actions may be further divided into symmetric and antisymmetric parts:

\begin{iteMize}{$-$}
\item
Strong simultaneity $\Ssm=\Sim\setminus(\Ser\cup\Ser^{-1})=\Sin\setminus\Ser^{-1}$. \\
If two actions are strongly simultaneous then may occur simultaneously but cannot be serialised at all.
This means that two occurrences of strongly simultaneous actions which appear together
in a step sequence $w$ would appear together in every step sequence belonging to 
the comtrace $[w]$.

\item
Weak dependence $\Wdp=\Ser^{-1}\setminus\Ser=\Sin\setminus\Sin^{-1}$.\\
Two actions are weakly dependent if they can be
serialised only in one way.
This means that for any two actions $(a,b)\in\Wdp$,
if their occurrences appear in the order `$a$ followed by $b$'
then they behave like completely dependent actions,
while appearing in the order `$b$ followed by $a$' allows one to equivalently 
execute (if there are no other obstacles) a step $(ab)$.
\end{iteMize}
\end{iteMize}

\noindent The main motivation to define all those classes was to capture the essence of
the interplay between single atomic elements of concurrent systems modelled using
comtraces. 
As a result we achieve the projection representation defined later.

Similarly to the case of simultaneity and serialisability,
each of proposed relations can be described semantically by specific
relationships between pre-sets, post-sets and inh-sets of pairs of actions.
Note that if the set of neighbouring places of two actions overlaps,
then those places are automatically considered as dependent 
(like in the case of traces and EN-systems).
The main role in the further partition is played by the extended neighbourhoods.
To capture dependence we have to add (to the overlapping of strict neighbourhoods)
the situation when one action has an input place that is simultaneously 
an inhibitor for the other.

The intersections of post-sets and inh-sets of two different actions are
significant if they are not dependent.
Namely, if they are totally disjoint, which means that their extended neighbourhoods
are disjoint, those two actions are independent.
Remaining situations correspond to the cases when two action have disjoint 
neighbourhoods as well as disjoint pre-sets and inh-sets, but still overlapping 
extended neighbourhoods and are captured by the semi-independence relation.

Note that in the favourable circumstances both of them might be executable
(like in the case of independence), but the execution of one of them may disable 
the execution of the other. 
If $(^\circ b \cap\;a^\bullet)\neq\emptyset
\wedge\;(^\circ a \cap\;b^\bullet)=\emptyset$ then after executing $b$ we can 
immediately execute $a$, but not vice versa. 
While if both $^\circ b \cap\;a^\bullet$ and $^\circ a \cap\;b^\bullet$ are nonempty then we cannot split simultaneous execution of $a$ and $b$.
The following table gives a straightforward description of all seven
relations for ENI-systems.

\vspace{5mm}
\begin{center}
\begin{tabular}{|l|c|l|}
\hline
& & \\[-3mm]
simultaneity & $(a,b)\in\Sim$ & 
$^\bullet a^\bullet \cap\;^\bullet b^\bullet=\emptyset \wedge\; 
(^\circ a \cap\;^\bullet b)\cup\;(^\circ b \cap\;^\bullet a)=\emptyset$\\[1mm]
serialisability & $(a,b)\in\Ser$ &
$(a,b)\in\Sim \wedge a^\bullet\cap\;(^\bullet b\cup \,^\circ b)=\emptyset$\\[1mm]
\hline
& & \\[-3mm]
dependence & $(a,b)\in\Dep$ & 
$^\bullet a^\bullet \cap\;^\bullet b^\bullet\neq\emptyset \vee\; 
(^\circ a \cap\;^\bullet b)\cup\;(^\circ b \cap\;^\bullet a)\neq\emptyset$\\[1mm]
independence & $(a,b)\in\Ind$ & 
$(a,b)\notin\Dep\wedge\;
(^\circ a \cap\;b^\bullet)\cup\;(^\circ b \cap\;a^\bullet)=\emptyset$\\[1mm]
semi-independence & $(a,b)\in\Sin$ & 
$(a,b)\notin\Dep\wedge\;
(^\circ b \cap\;a^\bullet)\neq\emptyset$\\[1mm]
strong simultaneity & $(a,b)\in\Ssm$ & 
$(a,b)\notin\Dep\wedge\;
(^\circ b \cap\;a^\bullet)\neq\emptyset
\wedge\;(^\circ a \cap\;b^\bullet)\neq\emptyset$\\[1mm]
weak dependence & $(a,b)\in\Wdp$ & 
$(a,b)\notin\Dep\wedge\;
(^\circ b \cap\;a^\bullet)\neq\emptyset
\wedge\;(^\circ a \cap\;b^\bullet)=\emptyset$\\[1mm]
\hline
\end{tabular}
\end{center}
\vspace{5mm}

\begin{exa}
\label{e:comtrace}
    Consider a comtrace alphabet $\Theta$ for ENI-system $N$ from 
       Example~\ref{e:ENI}.
    The simultaneity and serialisability
    relations are given by:\\

	\begin{center}
		$\Sim = $
		\begin{tikzpicture}[baseline=-0.5cm, scale=2]
			\node (n1) {$a$};
			\node (n2) [right of=n1] {$b$};
			\node (n3) [below of=n2] {$c$};
			\node (n4) [left of=n3] {$d$};
			\draw (n1) -- (n2) -- (n3) -- (n1) -- (n4) -- (n3);
		\end{tikzpicture}
 ~~~~
		\hspace{0.5cm}
		$\Ser = $
		\begin{tikzpicture}[baseline=-0.5cm, scale=2]
			\node (n1) {$a$};
			\node (n2) [right of=n1] {$b$};
			\node (n3) [below of=n2] {$c$};
			\node (n4) [left of=n3] {$d$};
			\path (n4) edge [->] (n1);
			\path (n3) edge [->] (n4);
			\path (n2) edge [->] (n3);
			\path (n1) edge [->,out=30,in=150] (n2);
			\path (n1) edge [<-,out=-30,in=-150] (n2);
		\end{tikzpicture}
 \\
	\end{center}

\noindent
In the net $N$ we have a pair of independent actions $(a,b)$.
Note that their extended neighbourhoods are disjoint.
The only pair of different and dependent actions is $(b,d)$.
The reason for their dependency is the non-emptiness of their neighbourhoods.
All the remaining pairs of different actions are semi-independent.
Only one of them, namely $(a,c)$, is strongly simultaneous.
Note that the post place of one of these actions is an inhibitor place of
another, forming in the net graph a special kind of cycle.
Similar behaviour (post place which is simultaneously inhibitor place),
may be observed in the remaining cases, namely for pairs $(a,d)$, $(d,c)$ and $(c,b)$.
However, we have there an asymmetric situation and those pairs of actions are
weakly dependent.
The five derived relations on actions are as follows:\\
\newsavebox\myindrel
  \sbox{\myindrel}{%
  $~\Ind = $
    \begin{tikzpicture}[remember picture, baseline=-0.5cm, scale=2]
			\node (n1) {$a$};
			\node (n2) [right of=n1] {$b$};
			\node (n3) [below of=n2] {$c$};
			\node (n4) [left of=n3] {$d$};
			\draw (n1) -- (n2);
    \end{tikzpicture}% 
  }
  
\newsavebox\mydeprel
  \sbox{\mydeprel}{%
  		$\Dep = $
		\begin{tikzpicture}[baseline=-0.5cm, scale=2]
			\node (n1) {$a$};
			\node (n2) [right of=n1] {$b$};
			\node (n3) [below of=n2] {$c$};
			\node (n4) [left of=n3] {$d$};
            \draw (n1) [xshift=0.075cm,yshift=0.05cm] arc (-45:225:0.1cm);
            \draw (n2) [xshift=0.075cm,yshift=0.05cm] arc (-45:225:0.1cm);
            \draw (n3) [xshift=0.075cm,yshift=-0.05cm] arc (45:-225:0.1cm);
            \draw (n4) [xshift=0.075cm,yshift=-0.05cm] arc (45:-225:0.1cm);
			\draw (n2) -- (n4);
		\end{tikzpicture}%
  }

\newsavebox\mysinrel
  \sbox{\mysinrel}{%
  $\Api = $  
		\begin{tikzpicture}[baseline=-0.5cm, scale=2]
			\node (n1) {$a$};
			\node (n2) [right of=n1] {$b$};
			\node (n3) [below of=n2] {$c$};
			\node (n4) [left of=n3] {$d$};
			\path (n1) edge [->] (n4);
			\path (n3) edge [->] (n2);
			\path (n4) edge [->] (n3);
			\path (n1) edge [->,out=-20,in=110] (n3);
			\path (n1) edge [<-,out=-70,in=160] (n3);
    \end{tikzpicture}% 
  }

\newsavebox\myssmrel
  \sbox{\myssmrel}{%
  $\Ssm = $
		\begin{tikzpicture}[baseline=-0.5cm, scale=2]
			\node (n1) {$a$};
			\node (n2) [right of=n1] {$b$};
			\node (n3) [below of=n2] {$c$};
			\node (n4) [left of=n3] {$d$};
			\draw (n1) -- (n3);
    \end{tikzpicture}% 
  }
  
\newsavebox\mywdprel
  \sbox{\mywdprel}{%
  $\Wdp = $
		\begin{tikzpicture}[baseline=-0.5cm, scale=2]
			\node (n1) {$a$};
			\node (n2) [right of=n1] {$b$};
			\node (n3) [below of=n2] {$c$};
			\node (n4) [left of=n3] {$d$};
			\path (n1) edge [->] (n4);
			\path (n3) edge [->] (n2);
			\path (n4) edge [->] (n3);
    \end{tikzpicture}% 
  }

\begin{center}
\begin{tikzpicture}
\node[inner sep=3pt] (R1) {\usebox{\myindrel}};
\node[inner sep=3pt, right of=R1, xshift=3cm] (R2) {\usebox{\mysinrel}};
\node[inner sep=3pt, right of=R2, xshift=3cm] (R3) {\usebox{\mydeprel}};
\node[inner sep=3pt, below of=R1, xshift=2.3cm,yshift=-2cm] (R4) {\usebox{\myssmrel}};
\node[inner sep=3pt, right of=R4, xshift=2.5cm] (R5) {\usebox{\mywdprel}};
\draw[color=lightgray, -stealth, line width=2mm, postaction={draw, line width=0.2cm, shorten >=1cm, -}] (R2) -- (R4);
\draw[color=lightgray, -stealth, line width=2mm, postaction={draw, line width=0.2cm, shorten >=1cm, -}] (R2) -- (R5);
\end{tikzpicture}
\end{center}

\noindent
The combined trace of one of the possible executions in the net $N$ is
$\tau=\{w,v,u,z\}$, where:
\[
\begin{array}{lcl}
    w
    &=&
    (d)(ab)
    \\[1mm]
    v
    &=&
    (d)(a)(b)
    \\[1mm]
    u
    &=&
    (ad)(b)
    \\[1mm]
    z
    &=&
    (d)(b)(a)\;.
\end{array}
\]

\noindent
Moreover, $u$ is a step sequence in Foata canonical form.
\qed\end{exa}

\section{Lexicographical canonical form}\mbox\\

We extend the order on actions to the case of steps (sets of actions). Let
$A,B\in\stepalph$ be two steps. If the size of $A$ is smaller then the size
of $B$ then $A\widehat{\leq} B$. If the sizes are equal, 
$A\widehat{\leq} B$ if $A=B$ or $A\neq B$ and
$min(A\setminus B)\leq min(B\setminus A)$. 
In this way, $(\stepalph,\widehat{\leq})$ becomes a totally ordered set.

Using the order $\widehat{\leq}$ we can define \emph{lexicographical order} on
step sequences in the usual way. 
The \emph{lexicographical canonical form} of 
a comtrace $\tau$, denoted by $minlex(\tau)$, is the least (with respect to 
the lexicographical order $\widehat{\leq}$) step sequence contained in the comtrace.
Note that, in contrast to the Foata canonical form, the lexicographical
canonical form captures one of the most sequential executions of a comtrace. 
Hence the two canonical forms lie on the opposite sides of 
the concurrent/sequential spectrum of behaviours.
Note that the step sequence $v$ from Example \ref{e:comtrace} is 
in lexicographical canonical form (assuming $a<b<c<d$).

\begin{thm}\label{t:extremalforms}
For a given comtrace $\tau$, its Foata canonical form is 
the $\widehat{\leq}$-greatest, 
and its lexicographical canonical form is 
the $\widehat{\leq}$-least, 
step sequence contained in $\tau$.
\end{thm}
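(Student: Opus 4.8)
The plan is to handle the two halves of the statement separately. The part about the lexicographical canonical form is almost a definitional tautology: $minlex(\tau)$ was introduced as the $\widehat{\leq}$-least step sequence of $\tau$, and such a least element exists because $\tau$ is finite — all step sequences in $\tau$ carry the same number $\#_a(\tau)$ of occurrences of each action $a$, hence have bounded length, so there are finitely many of them — while $\widehat{\leq}$ is a total order on $\stepalph^*$. The same remarks show that $\tau$ possesses a (unique) $\widehat{\leq}$-\emph{greatest} element; call it $v=B_1\ldots B_m$. Since $\tau$ contains exactly one step sequence in Foata canonical form, namely the $w$ of the statement, and $w\in\tau$, to prove that $w$ is the $\widehat{\leq}$-greatest element of $\tau$ it suffices to show that $v$ itself is in Foata canonical form: then $v=w$ by uniqueness, and $w=v=\max\tau$.

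So first I would prove that a $\widehat{\leq}$-greatest step sequence $v\in\tau$ is in Foata canonical form, arguing contrapositively. If $v$ is not, then by the alternative characterisation of the Foata canonical form recalled right after its definition there are an index $i<m$ and a nonempty set $A\subseteq B_{i+1}$ with $B_i\times A\subseteq\Ser$ and $A\times(B_{i+1}\setminus A)\subseteq\Ser$. From this witness I would build $v'\in\tau$ with $v\,\widehat{<}\,v'$, contradicting maximality. If $A=B_{i+1}$, then $B_i\times B_{i+1}\subseteq\Ser$, so $B_iB_{i+1}\equiv_\Theta(B_i\cup B_{i+1})$ and I take $v'=B_1\ldots B_{i-1}(B_i\cup B_{i+1})B_{i+2}\ldots B_m$. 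If $A\subsetneq B_{i+1}$, I first split, $B_{i+1}\equiv_\Theta A(B_{i+1}\setminus A)$ (legitimate as $A\times(B_{i+1}\setminus A)\subseteq\Ser$), then merge the new factor $A$ into the step on its left, $B_iA\equiv_\Theta(B_i\cup A)$ (legitimate as $B_i\times A\subseteq\Ser$), obtaining $v'=B_1\ldots B_{i-1}(B_i\cup A)(B_{i+1}\setminus A)B_{i+2}\ldots B_m$. In both cases $v'\in\tau$ (it is reached from $v$ by one or two join/split operations), it agrees with $v$ on the first $i-1$ steps, and its $i$-th step is a \emph{strict} superset of $B_i$ — note $B_i\cap A=\emptyset$, by irreflexivity of $\Sim\supseteq\Ser$ — hence is strictly larger under $\widehat{\leq}$; therefore $v\,\widehat{<}\,v'$.

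The conceptual crux is precisely this last step: a $\widehat{\leq}$-maximal representative must already be in Foata canonical form. Everything else is bookkeeping, but a few points do need care: (i) that the modified sequences $v'$ are legitimate step sequences — the merged set $B_i\cup A$ is again in $\stepalph$ because every pair across $B_i$ and $A$ lies in $\Ser\subseteq\Sim$ — and that they genuinely belong to $\tau$; (ii) the order comparison, i.e.\ that enlarging a single step of a step sequence while leaving all earlier steps untouched strictly raises it in $\widehat{\leq}$, which is exactly where the cardinality clause of $\widehat{\leq}$ and the disjointness $B_i\cap A=\emptyset$ enter; and (iii) the degenerate cases $m\le 1$, where the range of indices in the characterisation is empty, so $v$ is vacuously in Foata canonical form and no witness exists. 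Finally, I would remark that routing the argument through a maximal representative avoids any appeal to cancellativity of the comtrace monoid, which the more obvious strategy — comparing $w$ with an arbitrary $u\in\tau$ at their first differing step — would seem to require.
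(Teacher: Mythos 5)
Your proof is correct, but it takes a genuinely different route from the paper's. The paper argues directly: given the Foata form $u=A_1\ldots A_n$ and any other $v=B_1\ldots B_m$ in $\tau$, it looks at the first index $i$ where they differ, invokes the defining property of the Foata form to get $B_i\subsetneq A_i$, and concludes $v\,\widehat{\leq}\,u$ from the cardinality clause of $\widehat{\leq}$. As you anticipated, that step implicitly uses left cancellativity of the comtrace monoid (to pass from $u\equiv_\Theta v$ and $A_1\ldots A_{i-1}=B_1\ldots B_{i-1}$ to $A_i\ldots A_n\equiv_\Theta B_i\ldots B_m$, so that the Foata condition applies to the suffix), a fact the paper never establishes. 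Your detour — show that a $\widehat{\leq}$-greatest representative exists by finiteness, prove it must be in Foata canonical form via the local split/merge characterisation, and then invoke uniqueness of the Foata form — avoids cancellativity entirely, at the price of leaning on two other facts the paper states but does not prove for comtraces: the equivalence of the two definitions of Foata canonical form (you need the direction "no local witness implies Foata") and the uniqueness of the Foata representative within a comtrace. The bookkeeping in your construction of $v'$ is sound: $B_i\cap A=\emptyset$ by irreflexivity of $\Sim$, both rewritings are legitimate instances of $\sim_\Theta$, and enlarging the $i$-th step while fixing the prefix strictly increases the sequence under $\widehat{\leq}$. Either proof is acceptable; the paper's is shorter but rests on an unproved cancellation property, while yours is longer but makes its dependencies explicit.
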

\proof
The lexicographical canonical form is the $\widehat{\leq}$-least step sequence contained in $\tau$
directly from the definition. We need to prove that Foata canonical form is greater
than any other step sequence contained in $\tau$.

Let $u=A_1\ldots A_n, v=B_1\ldots B_m$, $u\neq v$, $u\equiv_\Theta v$, and
$u$ be in Foata canonical form. Moreover, 
let $i=min\{k \pipe k\leq n \;\wedge\; A_k\neq B_k\}$. Note that
such a number $i$ exists, since $u\neq v$ and $u\equiv_\Theta v$ so one
sequence cannot be a prefix of another. 

We have $A_1\ldots A_{k-1}=B_1\ldots B_{k-1}$,
so directly form the definition of Foata canonical form
$B_k\neq A_k\wedge B_k\subseteq A_k$.
Since $B_k\widehat{\leq}A_k$, we have $B_k\ldots B_m\widehat{\leq}A_k\ldots A_n$,
and $v\widehat{\leq}u$. 
\qed

\subsection{Indivisible steps and sequences}

The structure and semantics of relations $\Sim$ and $\Ser$ mean that some
actions have to appear simultaneously in every step sequence contained in 
a comtrace (in other word, they cannot be separated according to the comtrace congruence). 
A very good example of such actions are those in the $\Ssm$ relation.
The strong simultaneity, however, does not exhaust all situations when actions 
are ``glued'' together in a permanent manner. 
Such a behaviour was used in \cite{MikKou11} to form so called folded actions.
It is also worth to observe that the notion of indivisible steps was discussed, 
in the case of step traces with auto-concurrency, in \cite{Vog91}.
In this section, we discuss the phenomenon of the indivisibility 
(in the case of comtraces) in depth.

Let us consider a step $A\in\stepalph$ and a relation 
$\equiv_A\subseteq A\times A$, such that, for all $a,b\in A$, we have 
$a\equiv_A b$ if $(a,b)\in(\Api|_A)^\circledast$.
Intuitively, the relation $\equiv_A$ joins actions that can be executed
simultaneously, but cannot be executed in a sequential way (see Example \ref{e:indiv}).
Note that, for arbitrary step $A$, the relation $\equiv_A$ is an equivalence 
relation. 

We say that a step $A\in\stepalph$ is \emph{indivisible} if  
$\forall_{a,b\in A}\;a\equiv_A b$. The set of all indivisible steps 
is denoted by
$\widehat{\stepalph}$. By $indiv(\tau)$ we denote the set of 
all step sequences contained in a comtrace $\tau$ and built with indivisible 
steps only. 

\begin{exa}\label{e:indiv}
Let us recall the comtrace alphabet from Example \ref{e:comtrace} and 
the relations $\Sim$ and $\Api$, which are crucial in determining indivisible steps.
	\begin{center}
		$\Sim = $
		\begin{tikzpicture}[baseline=-0.5cm, scale=2]
			\node (n1) {$a$};
			\node (n2) [right of=n1] {$b$};
			\node (n3) [below of=n2] {$c$};
			\node (n4) [left of=n3] {$d$};
			\draw (n1) -- (n2) -- (n3) -- (n1) -- (n4) -- (n3);
		\end{tikzpicture}
 ~~~~
		\hspace{1.5cm}
		$\Api = $
		\begin{tikzpicture}[baseline=-0.5cm, scale=2]
			\node (n1) {$a$};
			\node (n2) [right of=n1] {$b$};
			\node (n3) [below of=n2] {$c$};
			\node (n4) [left of=n3] {$d$};
			\path (n1) edge [->] (n4);
			\path (n4) edge [->] (n3);
			\path (n3) edge [->] (n2);
			\path (n1) edge [->,out=-20,in=110] (n3);
			\path (n1) edge [<-,out=-70,in=160] (n3);
		\end{tikzpicture}
\end{center}

\noindent The set of all possible steps is 
$\stepalph=\{(a),(b),(c),(d),(ab),(ac),(ad),(bc),(cd),(abc),(acd)\}$,
while the set of all indivisible steps is 
$\widehat{\stepalph}=\{(a),(b),(c),(d),(ac),(acd)\}$.
Note that step $A=(abc)$ is divided by the relation $\equiv_A$ into two
indivisible steps $B=(b)$ and $C=(ac)$ and step $B$ occurs not later than
step $C$, while step $D=(ab)$ is divided by the relation $\equiv_D$ into two,
completely independent, indivisible steps $(a)$ and $(b)$.

Moreover, there are only two sequences of indivisible steps contained in the comtrace $\tau$ 
which is defined in Example \ref{e:comtrace}. These two sequences are $v=(d)(a)(b)$ and $z=(d)(b)(a)$.
\qed
\end{exa}

Intuitively, we can treat the indivisible step sequences
belonging to $indiv(\tau)$ as classical sequences over the alphabet $\widehat{\stepalph}$.
Hence we define two complementary relations over this alphabet, 
the independence relation $\widehat{\Ind}$ and 
the dependence relation $\widehat{\Dep}$. 
We say that two indivisible steps $A$ and $B$ are \emph{independent} if 
$A\times B\subseteq\Ind=\Ser\cap\Ser^{-1}$; otherwise two indivisible steps are \emph{dependent}.

\begin{prop}
All steps contained in the lexicographical canonical form of a comtrace are 
indivisible $(minlex(\tau)\in indiv(\tau))$.
\end{prop}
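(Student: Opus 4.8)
The plan is to argue by contradiction. Suppose $minlex(\tau) = A_1\ldots A_n$ but some step $A_i$ is \emph{not} indivisible; I will produce a step sequence in $\tau$ that is strictly smaller than $minlex(\tau)$ with respect to the lexicographical order $\widehat{\leq}$, which contradicts the minimality granted by Theorem~\ref{t:extremalforms}. The underlying intuition is that a divisible step can always be broken into a ``prefix part'' and a ``suffix part'', both strictly smaller steps, whose concatenation is comtrace-congruent to the original step; performing this break at position $i$ decreases the $i$-th entry in the $\widehat{\leq}$-order while leaving the earlier entries untouched.

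The core of the argument is therefore the following splitting claim: if $A\in\stepalph$ is divisible, then $A = A'\cup A''$ for some disjoint nonempty $A',A''\in\stepalph$ with $A'\times A''\subseteq\Ser$. To prove it, view $\Api|_A$ as the edge relation of a directed graph on vertex set $A$. Its strongly connected components are exactly the classes of $\equiv_A$, since $\equiv_A = (\Api|_A)^\circledast$ relates two actions precisely when they are mutually reachable along $\Api|_A$. As $A$ is divisible there are at least two such components, so the condensation DAG has at least two nodes and hence a source component (one with in-degree $0$). Let $A''$ be the set of actions of such a source component and $A' = A\setminus A''$; both are nonempty because there are $\geq 2$ components, and both are steps, being subsets of the step $A$. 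Now take any $x\in A'$ and $y\in A''$. Since $A$ is a step, $(x,y)\in\Sim$, so either $(x,y)\in\Ser$ or $(x,y)\in\Sim\setminus\Ser=\Api$; in the second case $x\to y$ would be an edge of $\Api|_A$ entering the source component $A''$ from $x\notin A''$, which is impossible. Hence $(x,y)\in\Ser$, giving $A'\times A''\subseteq\Ser$ and proving the claim.

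Applying the claim to the divisible step $A_i$ yields $A_i = A'\cup A''$ as above. Because $A'\times A''\subseteq\Ser$, the definition of $\sim_\Theta$ gives $A_1\ldots A_{i-1}A'A''A_{i+1}\ldots A_n \sim_\Theta A_1\ldots A_{i-1}A_iA_{i+1}\ldots A_n = minlex(\tau)$, so the left-hand sequence also lies in $\tau$. It agrees with $minlex(\tau)$ on the first $i-1$ steps, and at position $i$ it has $A'$, a proper nonempty subset of $A_i$; hence $|A'| < |A_i|$, which makes $A'$ strictly $\widehat{\leq}$-smaller than $A_i$. Thus this sequence is $\widehat{<}$-below $minlex(\tau)$, contradicting that $minlex(\tau)$ is the $\widehat{\leq}$-least step sequence of $\tau$. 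Consequently no step of $minlex(\tau)$ is divisible, i.e.\ $minlex(\tau)\in indiv(\tau)$.

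The main obstacle is the splitting claim, and specifically the observation that only the one-sided inclusion $A'\times A''\subseteq\Ser$ is needed to merge $A'A''$ back into $A_i$ (the congruence $\sim_\Theta$ is asymmetric); choosing $A''$ to be a \emph{source} of the condensation, rather than an arbitrary component, is exactly what secures this one-sided inclusion. Identifying the $\equiv_A$-classes with strongly connected components of $\Api|_A$, and the fact that every subset of a step is a step, are the only other ingredients, and both are routine.
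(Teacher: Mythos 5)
Your proof is correct, and its top-level structure (assume a divisible step occurs, split it in place, and contradict $\widehat{\leq}$-minimality via Theorem~\ref{t:extremalforms}) matches the paper's. The difference lies in how the splitting is justified. The paper simply invokes Lemma~\ref{l:division}, whose proof isolates a \emph{sink} component of the condensation of $\Api|_A$ (the class $B=[d]_{\equiv_A}$ for $d$ in the auxiliary set $D$, whose non-emptiness is shown by a repetition argument on an infinite $(\Api|_A)^*$-chain) and uses it as the \emph{first} factor, giving $A\sim_\Theta B(A\setminus B)$ with $B$ a single $\equiv_A$-class. You instead peel off a \emph{source} component $A''$ and use its complement $A'=A\setminus A''$ as the first factor, giving $A\sim_\Theta A'A''$; your identification of the $\equiv_A$-classes with strongly connected components and your appeal to the existence of a source in a finite DAG replace the paper's more hands-on non-emptiness argument for $D$. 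Both decompositions are mirror images of the same idea -- cut the condensation DAG of $\Api|_A$ so that no $\Api$-edge crosses from the early part to the late part -- and both deliver what the contradiction needs, namely a first factor that is a nonempty proper subset of $A_i$ and hence strictly $\widehat{\leq}$-smaller. Your version is arguably cleaner and self-contained, at the cost of not establishing the extra bookkeeping ($A/_{\equiv_A}=B/_{\equiv_B}\cup C/_{\equiv_C}$) that the paper's Lemma~\ref{l:division} records for later use.
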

\begin{proof}
Suppose, to the contrary, that $minlex(\tau)=uAv$ contains a 
non-indivisible step $A$.
We conclude from Lemma \ref{l:division} that for two disjoint steps $B$ and 
$C$ we have a step sequence $uBCv\in\tau$ which is different from the 
step sequence $minlex(\tau)$.
Since $B\subseteq A$ and $A\neq B$ we have $uBCv\;\widehat{\leq}\;uAv$ so we found
a step sequence contained in $\tau$ that is lexicographically smaller than 
$minlex(\tau)$, which contradicts our assumption. Hence all steps contained
in $minlex(\tau)$ are indivisible.
\end{proof}

Recall the $lex$ operator defined in Section \ref{s:stepTraces}.
It allows us to translate a step sequence to a sequence of actions,
and was very helpful in dealing with step traces.
In the case of comtraces, however, it has rather narrower application. 
Therefore, we define the \emph{split operator} that translates arbitrary step 
sequences to step sequences of indivisible steps as
\mbox{ }$\widehat{ }:\stepalph^*\rightarrow\widehat{\stepalph}^*$ as
\[\widehat{(A_1\ldots A_n)}=\widehat{A_1}\ldots\widehat{A_n}
=minlex(A_1)\ldots minlex(A_n).\]
The following facts justify an observation that 
the split operator does not lead beyond the comtrace, see Proposition \ref{p:leadout}.

\begin{lem}\label{l:division}
Let $A\in\stepalph\setminus\widehat{\stepalph}$ be a step that is not
indivisible. Then there exist two steps, $B$ and
$C$, such that $A\sim_\Theta BC$. Moreover, $A/_{\equiv_A}=B/_{\equiv_B}\cup C/_{\equiv_C}$.
\end{lem}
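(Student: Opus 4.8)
The statement asks: given a step $A \in \stepalph \setminus \widehat{\stepalph}$, find two steps $B, C$ with $A \sim_\Theta BC$ and with the equivalence classes of $\equiv_A$ splitting correctly between $B$ and $C$. The natural idea is to use the equivalence relation $\equiv_A$, which partitions $A$. Since $A$ is not indivisible, $\equiv_A$ has at least two classes, so I can separate one ``upward-closed'' block from the rest. More precisely, consider the quotient $A/_{\equiv_A}$ and the relation induced on these classes by $\Api|_A$ (it descends to a relation on classes because $\equiv_A = (\Api|_A)^\circledast$ is the largest equivalence inside $(\Api|_A)^*$, so classes are ``strongly connected components'' of $\Api|_A$ and the quotient relation is acyclic). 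Pick a class $C$ that is maximal (a sink, i.e.\ an upper set consisting of a single SCC, or more carefully: pick $C$ to be an SCC from which no edge of the quotient leaves), and let $B = A \setminus C$. Then I claim $A \sim_\Theta BC$, i.e.\ $B \times C \subseteq \Ser$ and $B \cap C = \emptyset$ (the latter is immediate since classes are disjoint).

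\textbf{Establishing $B \times C \subseteq \Ser$.} Take $b \in B$, $c \in C$. Since $A$ is a step, $(b,c) \in \Sim$, so either $(b,c) \in \Ser$ or $(b,c) \in \Api|_A$ (recall $\Api = \Sim \setminus \Ser$). If $(b,c) \in \Api|_A$, then in the quotient there is an edge from the class of $b$ to the class $C$; but then the class of $b$ and $C$ lie in a directed path, and since $C$ was chosen with no outgoing quotient edges, I need $b$'s class and $C$ to not form a cycle — which holds because the quotient of $(\Api|_A)$ by its SCCs is acyclic. Actually I must be careful: an edge $b \to c$ into $C$ is harmless; what I must rule out is that this forces $b \in C$. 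The right choice is: let $C$ be the $\equiv_A$-class of some action together with everything $(\Api|_A)^*$-reachable from it is wrong too. The cleanest route: among the classes, choose $C$ to be a class that is \emph{minimal} in the quotient order (a source), and set $B = A \setminus C$; then for $c \in C$, $b \in B$, if $(c,b) \in \Api|_A$ that is fine, but if $(b,c) \in \Api|_A$ we'd get an edge into a source, contradiction unless $b \equiv_A c$, i.e.\ $b \in C$. So with $C$ a source class, every semi-independence edge between $B$ and $C$ goes from $C$ to $B$, hence for $b \in B$, $c \in C$ we never have $(b,c) \in \Api|_A$, so $(b,c) \in \Ser$. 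This gives $A = B \cup C$ with $B \times C \subseteq \Ser$, and since $\Ser$ is irreflexive and $\stepalph$ closed under nonempty subsets, $B$ and $C$ are steps and $A \sim_\Theta BC$.

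\textbf{The partition claim $A/_{\equiv_A} = B/_{\equiv_B} \cup C/_{\equiv_C}$.} Here the point is that $\equiv_A$ restricted to $B$ equals $\equiv_B$, and similarly for $C$. One inclusion is easy: $\Api|_B \subseteq \Api|_A$, so $(\Api|_B)^\circledast \subseteq (\Api|_A)^\circledast|_B$, giving $\equiv_B \subseteq \equiv_A|_B$. For the reverse, I must show that if $a, a' \in B$ are in the same $\equiv_A$-class, then they are connected by an $(\Api|_A)^*$-cycle lying \emph{entirely within $B$}. This is where the choice of $C$ as a source (a single SCC that is minimal in the quotient) is essential: any $(\Api|_A)^*$-path between two elements of $B$ that dips into $C$ would have to leave $C$ again, but $C$ has no incoming quotient edges, so the path cannot re-enter $C$ after leaving — hence a closed walk through $a, a'$ staying in $B$ exists, so $a \equiv_B a'$. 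The same argument internal to $C$ is trivial since $C$ is a single $\equiv_A$-class. Therefore the classes of $\equiv_A$ are exactly the union of those of $\equiv_B$ and $\equiv_C$.

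\textbf{Main obstacle.} The delicate part is the SCC/acyclicity bookkeeping: one must verify that $\equiv_A = (\Api|_A)^\circledast$ really does make the quotient of $A$ by $\equiv_A$ into an acyclic structure under the induced relation, and that ``source class'' is well-defined and nonempty (which follows from finiteness and acyclicity). Everything else — that $B, C$ are legitimate steps, that $B \cap C = \emptyset$, the easy inclusion of the partition claim — is routine. I would state a small auxiliary observation that $(\Api|_A / {\equiv_A})$ is acyclic and use it both for the $\Ser$-inclusion and for the ``no re-entry'' argument in the partition claim.
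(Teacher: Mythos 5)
Your proof is correct and follows essentially the same route as the paper's: both peel off an extremal strongly connected component of $\Api|_A$ (the paper takes a sink class as $B$ and places it first, you take a source class as $C$ and place it last --- mirror images of the same argument), and both verify the partition claim by noting that $(\Api|_A)^*$-paths cannot cross the cut in the forbidden direction. The only cosmetic difference is that the paper establishes the existence of the extremal class by an explicit finiteness/repetition argument on its set $D$ rather than by invoking acyclicity of the condensation graph.
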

\proof
Since $A$ is not indivisible, the relation $\equiv_A$ divides $A$ into at 
least two equivalence classes. 
In the following proof we choose an indivisible step, to play a role of $B$.
However, at first we separate a special subset of $A$, denoted by $D$.
One can think about $D$ as a set of elements from $A$, 
which form a minimal layer in the graph of the relation $(\Sin|_A)^*$.

Let $D$ be the set of all actions $b\in A$ such that, 
$\forall_{a\in A}\;(b,a)\in(\Sin|_A)^*\Rightarrow a\in[b]_{\equiv_A}$. 
Suppose that
$D$ is empty. Let us take any $b_1\in A$. Then, by $D=\emptyset$,
there exists $b_2\in A$ such that $b_2\notin[b_1]_{\equiv_A}$ and 
$(b_1,b_2)\in(\Sin|_A)^*$. 
Continuing in this way, we can construct an infinite sequence
of actions $b_i\in A$ such that, for all $i$,
$b_{i+1}\notin[b_1]_{\equiv_A} \wedge (b_i,b_{i+1})\in(\Sin|_A)^*$.

Since $A$ is finite, the elements contained in this sequence have
to repeat. Let $b_n=b_m$ and $n<m$. Since $(\Sin|_A)^*$ is transitive we have 
$(b_{n+1},b_n)\in(\Sin|_A)^*$ and $(b_m,b_{n+1})\in(\Sin|_A)^*$, so $b_{n+1}\in[b_n]_{\equiv_A}$
which contradicts the assumption. Hence $D$ is not empty.

Let $d$ be an arbitrary element from $D$ and $B=[d]_{\equiv_A}$.
$A$ is not indivisible, hence $A\neq B$. 
Moreover, directly from the construction of the set $D$, $B\subseteq D$.
Let $b\in B$ and $a\in A\setminus B$. 
From the definition of $D$ we have that $(b,a)\notin(\Sin|_A)^*$, 
so $(b,a)\notin\Wdp$ and $(b,a)\notin\Ssm$. 
We also have $(a,b)\notin\Dep$ since $a$ and $b$ are both contained in $A$. 
This gives
\[\forall_{a\in A\setminus B} \forall_{b\in B}\;(a,b)\in\Wdp\vee (a,b)\in\Ind.\] 
Hence
\[\forall_{a\in A\setminus B} \forall_{b\in B}\;(b,a)\in\Ser\]
and finally $A\sim_\Theta B(A\setminus B)$.

~

\noindent It remains to be proven that  
$A/_{\equiv_A}=B/_{\equiv_B}\cup C/_{\equiv_C}$.
According to the definition of the comtrace equivalence, $A\sim_\Theta BC$
implies that $B\times C\subseteq \Ser$. It means that for every pair of
actions $b\in B$ and $c\in C$ we have $(b,c)\notin \Api$. 
Hence for every $a,b\in A$ we have 
$a\equiv_A b \wedge a\in B \Rightarrow b\in B$
and 
$a\equiv_A b \wedge a\in C \Rightarrow b\in C$.

It means that the graphs of the relation $\Api$ restricted to steps $B$ and $C$
not only are vertex induced parts of the graph of the relation $\Api$ restricted
to the step $A$, but also are a division of this graph
(i.e., the union of strongly connected components of graphs 
$\Api|_B$ and $\Api|_C$ is equal to the set of strongly 
connected components of the graph $\Api|_A$), which end the proof. 
\qed

\begin{prop}
Let $\tau$ be a comtrace over $\Theta$ and $A\in Alph(\tau)$. Then
$$A/_{\equiv_A}\subseteq Alph(\tau).$$ 
\end{prop}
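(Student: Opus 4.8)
The plan is to prove the inclusion by induction on the size $|A|$ of the step. The base case -- and more generally the case in which $A$ is already indivisible -- is immediate: by definition $\equiv_A$ then has a single equivalence class, so $A/_{\equiv_A}=\{A\}$, and $A\in Alph(\tau)$ holds by hypothesis. Since every step of size $1$ is indivisible, this in particular settles $|A|=1$.

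For the inductive step I would assume $A$ is not indivisible and invoke Lemma \ref{l:division}, obtaining two steps $B$ and $C$ with $A\sim_\Theta BC$ and, crucially, $A/_{\equiv_A}=B/_{\equiv_B}\cup C/_{\equiv_C}$. Unwinding the definition of $\sim_\Theta$, the relation $A\sim_\Theta BC$ forces $A=B\cup C$ with $B\cap C=\emptyset$ (since $\Ser$ is irreflexive); as $B$ and $C$ are steps, hence nonempty, this gives $|B|<|A|$ and $|C|<|A|$, which is exactly what makes the induction well founded. Next I would transfer membership in $Alph(\tau)$ from $A$ to $B$ and $C$: choose a step sequence $uAv\in\tau$; since $A\equiv_\Theta BC$ and comtrace concatenation is well defined, $[uAv]=[u]\circ[A]\circ[v]=[u]\circ[BC]\circ[v]=[uBCv]$, so $uBCv\in\tau$ and therefore $B,C\in Alph(uBCv)\subseteq Alph(\tau)$. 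Applying the induction hypothesis to $B$ and to $C$ yields $B/_{\equiv_B}\subseteq Alph(\tau)$ and $C/_{\equiv_C}\subseteq Alph(\tau)$, and combining these with the identity furnished by Lemma \ref{l:division} gives $A/_{\equiv_A}\subseteq Alph(\tau)$, which closes the induction.

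The genuinely hard content has already been isolated in Lemma \ref{l:division} -- in particular the assertion that a splitting $A\sim_\Theta BC$ partitions the strongly connected components of $\Api|_A$, i.e.\ $A/_{\equiv_A}=B/_{\equiv_B}\cup C/_{\equiv_C}$. Consequently the only points that require care in this proof are bookkeeping ones: verifying that $B$ and $C$ are proper nonempty subsets of $A$ so that the recursion strictly decreases $|A|$, and observing that replacing $A$ by the equivalent factor $BC$ keeps us inside the same comtrace, which is just well-definedness of concatenation in the comtrace monoid. I do not anticipate a real obstacle beyond correctly organising the induction on $|A|$.
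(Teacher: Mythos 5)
Your proof is correct and follows essentially the same route as the paper: both rest entirely on Lemma \ref{l:division} to split a non-indivisible step and on the congruence to stay inside $\tau$. The only difference is presentational — the paper compresses the recursion into the single phrase ``applying Lemma \ref{l:division} we can construct the step sequence $A_1\ldots A_n$ composed of indivisible steps,'' whereas you spell out the induction on $|A|$ and the termination argument explicitly.
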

\begin{proof}
Since $A\in Alph(\tau)$, there exists $w,u\in\stepalph$ such that $wAu\in\tau$.
Applying Lemma \ref{l:division} we can construct the step sequence 
$A_1\ldots A_n$ composed of indivisible steps only and 
equivalent to step sequence consisting of $A$ only.
Moreover, $A/_{\equiv_A}=\bigcup_{i=1\ldots n} A_i$ and $wA_1\ldots A_nu\in\tau$.
As a result we get that $A_i\in Alph(\tau)$, 
hence $A/_{\equiv_A}\subseteq Alph(\tau)$.
\end{proof}

\begin{thm}\label{t:trace4comtrace}
Let $\tau$ be a comtrace. The set $indiv(\tau)$ is a trace 
(with sequential semantic) over the concurrent 
alphabet $(\widehat{\stepalph},\widehat{\Dep})$.
\end{thm}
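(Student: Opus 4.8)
The plan is to check three things about $indiv(\tau)$: that it is nonempty, that it is stable under commuting adjacent $\widehat{\Ind}$-independent indivisible steps, and that any two of its members are linked by such commutations. Together these say exactly that $indiv(\tau)$ coincides with the $(\widehat{\stepalph},\widehat{\Dep})$-trace of any one of its elements, i.e.\ that it is a single trace. Nonemptiness is immediate: $minlex(\tau)\in indiv(\tau)$ (equivalently, by Proposition~\ref{p:leadout}, applying the split operator to any $w\in\tau$ yields a step sequence of indivisible steps still in $\tau$). Stability is also easy: if $uABz\in indiv(\tau)$ with $(A,B)\in\widehat{\Ind}$, then $A\times B\subseteq\Ind=\Ser\cap\Ser^{-1}$, so both $uABz$ and $uBAz$ are $\sim_\Theta$-related to $u(A\cup B)z$, hence $uBAz\equiv_\Theta uABz$ lies in $\tau$; since it has the same multiset of (indivisible) steps, $uBAz\in indiv(\tau)$. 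Consequently the whole $(\widehat{\stepalph},\widehat{\Dep})$-trace of any $w\in indiv(\tau)$ is contained in $indiv(\tau)$.

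The substantive part is the converse: any two $w,v\in indiv(\tau)$ are $(\widehat{\stepalph},\widehat{\Dep})$-equivalent. First I would note that the split operator fixes $indiv(\tau)$ pointwise: an indivisible step $X$ admits no decomposition $X=B\cup C$ with $B,C\neq\emptyset$ and $B\times C\subseteq\Ser$ (such a decomposition would leave no $\Api$-arc from $B$ to $C$, so the $\Api|_X$-component of any element of $B$ would stay inside $B$, making $\equiv_X$ have more than one class), and a one-step sequence has no $\sim_\Theta$-neighbour other than such a split; hence $[X]=\{X\}$ and $\widehat{X}=X$, so $\widehat{x}=x$ for every $x\in indiv(\tau)$. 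Now fix a $\sim_\Theta$-derivation $w=x_0\sim_\Theta\cdots\sim_\Theta x_k=v$ through arbitrary step sequences; applying the split operator termwise yields members of $indiv(\tau)$ running from $\widehat{x_0}=w$ to $\widehat{x_k}=v$. Since consecutive $x_i$ differ by a single join/split $uABz\leftrightarrow u(A\cup B)z$ with $A\times B\subseteq\Ser$, their splits differ only by replacing a factor $\widehat{A}\,\widehat{B}$ with $\widehat{(A\cup B)}$, so the whole theorem reduces to the key lemma: if $A\times B\subseteq\Ser$ then $\widehat{A}\,\widehat{B}$ and $\widehat{(A\cup B)}$ are equivalent over $(\widehat{\stepalph},\widehat{\Dep})$.

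To prove the key lemma I would use the projection characterisation of trace equivalence (Section~\ref{sect-2}), applied to the concurrent alphabet $(\widehat{\stepalph},\widehat{\Dep})$, supported by two observations. First, $\widehat{A}\,\widehat{B}\in[A]\circ[B]=[AB]=[A\cup B]$ (here $A\times B\subseteq\Ser$ is used), so both $\widehat{A}\,\widehat{B}$ and $\widehat{(A\cup B)}$ are splittings of the single step $A\cup B$ into indivisible steps; by Lemma~\ref{l:division} (applied recursively) the steps occurring in any such splitting are precisely the classes of $\equiv_{A\cup B}$, so the two sequences are two arrangements of the \emph{same} repetition-free set of indivisible steps. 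Second, in a splitting of a single step an earlier step $X$ and a later step $Y$ always satisfy $X\times Y\subseteq\Ser$ (this is the rigidity underlying the relations $\Ssm$ and $\Wdp$ recalled above: a pair forced apart must respect $\Ser$). Hence, if a \emph{dependent} pair of our blocks occurred in opposite orders in $\widehat{A}\,\widehat{B}$ and in $\widehat{(A\cup B)}$, the second observation applied to both sequences would give $X\times Y\subseteq\Ser\cap\Ser^{-1}=\Ind$, i.e.\ $(X,Y)\in\widehat{\Ind}$ — a contradiction; so every dependent pair occurs in the same order in both. Thus the two sequences have the same projection onto every binary dependent subalphabet of $\widehat{\stepalph}$, so they are $(\widehat{\stepalph},\widehat{\Dep})$-equivalent, which proves the lemma and the theorem.

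The main obstacle is the key lemma, and within it the need to be sure that moving between $\widehat{A}\,\widehat{B}$ and $\widehat{(A\cup B)}$ never calls for a genuine merge of two indivisible steps — a merge would leave $indiv(\tau)$ — but only commutations of $\widehat{\Ind}$-independent ones; this is exactly what the ``same set of blocks, dependent pairs in the same order'' analysis secures. Everything else is bookkeeping on top of Lemma~\ref{l:division} and the projection description of traces from the preliminaries.
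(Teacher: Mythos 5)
Your proposal is correct in outline but reaches the conclusion by a genuinely different route than the paper. The paper proves the hard direction globally: it supposes two members of $indiv(\tau)$ disagree on the order of a $\widehat{\Dep}$-related pair of indivisible steps $A,B$, follows a $\sim_\Theta$-derivation between them, and observes that the order can only flip via a join with $A\times B\subseteq\Ser$ followed later by a split with $B\times A\subseteq\Ser$, forcing $(A,B)\in\widehat{\Ind}$ --- a contradiction. You instead localise: you push the split operator through the derivation termwise, reduce everything to the single congruence-style lemma that $\widehat{A}\,\widehat{B}$ and $\widehat{(A\cup B)}$ are $(\widehat{\stepalph},\widehat{\Dep})$-equivalent whenever $A\times B\subseteq\Ser$, and prove that lemma by the projection characterisation of trace equivalence. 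The underlying engine is the same in both arguments (an order flip of a dependent pair of blocks forces $\Ser$ in both directions, hence independence), but your version buys a cleaner inductive structure and makes explicit that the correspondence of Corollary~\ref{c:correspondence} is really a monoid-level statement; the paper's version avoids having to compare the block decompositions of $\widehat{A}\,\widehat{B}$ and $\widehat{(A\cup B)}$ at all.

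That comparison of block decompositions is the one place where your argument is under-justified. You assert that ``the steps occurring in any such splitting are precisely the classes of $\equiv_{A\cup B}$'' and attribute this to Lemma~\ref{l:division} applied recursively. But that lemma only produces \emph{one} splitting with this property; it does not by itself say that \emph{every} sequence of indivisible steps equivalent to a single step --- in particular $minlex(A)\,minlex(B)$ and $minlex(A\cup B)$, which are defined as lexicographic minima, not as outputs of the recursive procedure --- has the $\equiv_{A\cup B}$-classes as its steps. The fact is true, and your own second observation already supplies the missing argument: in any splitting of a single step, an action in an earlier block and an action in a later block are $\Ser$-related, hence not $\Api$-related in that direction, so every $\Api|_{A\cup B}$-path is non-increasing in block index; applied to the two paths witnessing $x\equiv_{A\cup B}y$ this shows no class can straddle two blocks, while indivisibility of each block shows no block can straddle two classes. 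With that sentence added, the ``same set of blocks, dependent pairs in the same order'' analysis closes, and the proof is complete.
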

\begin{proof}
To prove the statement of the theorem it is sufficient to show two facts.
Firstly, we need to prove that relation $\widehat{\Ind}$ is symmetric and 
irreflexive. Secondly, we need to argue that by the repeated transposing
of two subsequent and independent actions (in fact indivisible steps) 
we can reach any of other elements of the set $indiv(\tau)$ and cannot go beyond 
this set.

We start from the first statement. By the definition of $\Ser$ the relation 
$\Ind=\Ser\cap\Ser^{-1}$ is symmetric and irreflexive. Since two 
indivisible steps $A$ and $B$ are in relation $\widehat{\Ind}$ if all pairs
of actions $(a,b)\in A\times B$ are independent, we conclude that the relation
$\widehat{\Ind}$ is also symmetric and irreflexive.

Let $w=uABv$ be a step sequence from $indiv(\tau)$ and $(A,B)\in\widehat{\Ind}$.
By the definition of the $\widehat{\Ind}$ relation we have 
$AB\sim_\Theta C$ and $BA\sim_\Theta C$, where $C=A\cup B$. 
Therefore $uABv\equiv_\Theta uBAv$ and the set 
$indiv(\tau)$ is equal to its own trace closure. The last needed statement
follows from Lemma \ref{l:division} (about indivisibility of indivisible steps). 

Let us suppose that there are two
comtrace equivalent step sequences $u$ and $v$  belonging to $indiv(\tau)$
that are not trace equivalent. Hence they differ in at least one
projection to a binary dependent subalphabet, so
there are two occurrences
of indivisible steps $A$ and $B$ that appear in the two different orders
and are dependent ($(A,B)\in\widehat{\Dep}$).
Let $A$ precede $B$ in the step sequence $u$, 
and $B$ precede $A$ in the step sequence $v$. From the definition of
comtrace equivalence there exists a sequence of equivalent step sequences
$(w_i)_{i=1\ldots n}$ such that $u=w_1$, $w_i\sim_\Theta w_{i+1}$, 
and $w_n=v$. In this sequence there has to exist an element $w_i$ where
the considered occurrences of indivisible steps were for the last time in the
same order as in $u$ ($w_i=w'_iX_iY_iw''_i$ and $w_{i+1}=w'_{i+1}Z_iw''_{i+1}$ and
$A\subseteq X_i$ and $B\subseteq Y_i$). Hence $A\times B\subseteq\Ser$.
Moreover, there exists an element $w_j$ where the considered occurrences
occur for the
first time after $w_i$ in the same order as in $v$ ($w_j=w'_jZ_jw''_j$ and 
$w_{j+1}=w'_{j+1}X_jY_jw''_{j+1}$ and $A\subseteq X_j$ and $B\subseteq Y_j$).
Hence also $B\times A\subseteq\Ser$. Therefore $(A,B)\in\widehat{\Ind}$,
which gives a contradiction and completes the proof.
\end{proof}

\begin{prop}\label{p:leadout}
Let $[w]$ be a comtrace over $\Theta$. Then
\[\widehat{w}\in[w]\]
and
\[\widehat{w}=w ~~\Leftrightarrow~~ w\in indiv(\tau).\] 
\end{prop}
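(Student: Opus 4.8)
The plan is to establish the two claims separately, in both cases leveraging Lemma~\ref{l:division}. For the first claim, $\widehat{w}\in[w]$, I would proceed by induction on the number of non-indivisible steps occurring in $w=A_1\ldots A_n$. If every $A_i$ is already indivisible, then $\widehat{A_i}=minlex(A_i)=A_i$ (a single-step comtrace built from one indivisible step has only that step as its lexicographical form), so $\widehat{w}=w\in[w]$ trivially. Otherwise, pick an $i$ with $A_i\notin\widehat{\stepalph}$. By Lemma~\ref{l:division}, iterating the division $A_i\sim_\Theta B(A_i\setminus B)$ yields a step sequence $C_1\ldots C_k$ of indivisible steps with $A_i\sim_\Theta C_1\ldots C_k$ and hence $A_1\ldots A_{i-1}C_1\ldots C_kA_{i+1}\ldots A_n\equiv_\Theta w$. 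This new step sequence has strictly fewer non-indivisible steps, and its split equals $\widehat{w}$ (since $\widehat{A_i}=minlex(A_i)=minlex(C_1\ldots C_k)=\widehat{C_1}\ldots\widehat{C_k}$, using that $C_1\ldots C_k$ is already a sequence of indivisible steps and the lexicographical canonical form of the one-step comtrace $[A_i]$ restricted to indivisible decompositions is unique). The induction hypothesis then gives $\widehat{w}\in[A_1\ldots C_1\ldots C_k\ldots A_n]=[w]$.

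For the backward direction of the equivalence, if $\widehat{w}=w$ then by definition $w\in\widehat{\stepalph}^*$, and a step sequence over $\Theta$ whose every step is indivisible and which lies in $\tau=[w]$ is by definition an element of $indiv(\tau)$. For the forward direction, suppose $w\in indiv(\tau)$, i.e.\ every $A_i$ in $w=A_1\ldots A_n$ is indivisible. Then for each $i$ I claim $\widehat{A_i}=A_i$: since $A_i\in\widehat{\stepalph}$, the comtrace $[A_i]$ contains no step sequence other than $A_i$ itself among those built from steps contained in $A_i$ (any nontrivial split would contradict indivisibility, via the characterisation of $\equiv_{A_i}$), so $minlex(A_i)=A_i$, hence $\widehat{w}=\widehat{A_1}\ldots\widehat{A_n}=A_1\ldots A_n=w$.

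The main obstacle I anticipate is making precise the claim that, for an arbitrary step $A$, the split $\widehat{A}=minlex(A)$ is \emph{the same} indivisible decomposition regardless of the order in which one applies the divisions of Lemma~\ref{l:division}, and in particular that when $A$ is indivisible this forces $minlex(A)=A$. This requires observing that Lemma~\ref{l:division} not only produces \emph{some} decomposition but a canonical one: the equivalence classes $A/_{\equiv_A}$ are fixed, and the "not later than" (i.e.\ $\Wdp$) structure on these classes is a partial order, so the indivisible decompositions of $[A]$ are exactly the linear extensions of that order — and $minlex$ selects the unique lexicographically least one. Once this is in hand, everything else is bookkeeping: the compatibility $\widehat{(A_iA_{i+1})}=\widehat{A_i}\,\widehat{A_{i+1}}$ is immediate from the definition of the split operator, and the induction in the first claim goes through cleanly.
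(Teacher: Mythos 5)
Your handling of the equivalence $\widehat{w}=w\Leftrightarrow w\in indiv(\tau)$ is essentially the paper's argument: both directions reduce to the observation that $minlex(A)=A$ exactly when $A$ is indivisible, justified via Lemma~\ref{l:division} and the fact that an indivisible step admits no split $A\sim_\Theta BC$. That part is sound.

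The first claim, $\widehat{w}\in[w]$, is where there is a genuine gap, and it is the one you half-notice yourself. Your induction step needs the refined sequence $w'=A_1\ldots A_{i-1}C_1\ldots C_kA_{i+1}\ldots A_n$ to satisfy $\widehat{w'}=\widehat{w}$, i.e. $C_1\ldots C_k=minlex(A_i)$. But iterating Lemma~\ref{l:division} produces \emph{some} indivisible decomposition of $A_i$, not necessarily the lexicographically least one: in Example~\ref{e:indiv} the step $(ab)$ may be divided as $(b)(a)$ (the choice of $d\in D$ in the lemma is arbitrary), while $minlex([(ab)])=(a)(b)$, so the assertion ``its split equals $\widehat{w}$'' fails for that choice. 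Your proposed repair --- that the decomposition is the same regardless of the order in which the divisions are applied --- is false for the same reason; as you then correctly note, the indivisible decompositions of $[A]$ are all the linear extensions of a partial order on $A/_{\equiv_A}$, so to make the induction work you would additionally have to steer the divisions to produce precisely the $\widehat{\leq}$-least extension, which you never do. The larger point is that none of this machinery is needed: by definition $\widehat{A_i}=minlex([A_i])$ is the $\widehat{\leq}$-least step sequence \emph{contained in} the comtrace $[A_i]$, hence $\widehat{A_i}\equiv_\Theta A_i$ outright, and since comtrace concatenation is well defined (i.e. $\equiv_\Theta$ is a congruence), $\widehat{w}=\widehat{A_1}\ldots\widehat{A_n}\equiv_\Theta A_1\ldots A_n=w$. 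That is the paper's one-line proof of the first claim, and substituting it for your induction also dissolves the obstacle you flag in your final paragraph.
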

\proof
Let $w=A_1\ldots A_n$. 
By the definition of the operator \;$\widehat{ }$\,, we get 
$$\widehat{w}=minlex(A_1)\ldots minlex(A_n).$$
Since $minlex(A_i)\equiv_{\Theta}A_i$ we get $\widehat{w}\equiv_\Theta w$, 
so $\widehat{w}\in [w]$.

Since $|minlex(A_i)|\geq 1$ and $|minlex(A_i)|=1$ if and only if 
$minlex(A_i)=\widehat{(A_i)}=A_i$ we conclude that
\[\widehat{w}=w ~~\Leftrightarrow~~ \forall_i\;\widehat{A_i}=A_i.\] 
By Lemma \ref{l:division}, $\widehat{(A_i)}=A_i$ if and only if $A_i$ is indivisible.
Hence $\widehat{w}=w$ if and only if all $A_i$ are indivisible and
\[\widehat{w}=w ~~\Leftrightarrow~~ w\in indiv(\tau).\eqno{\qEd}\]
\medskip

\noindent As an immediate corollary of Theorem \ref{t:trace4comtrace} and
Proposition \ref{p:leadout}, we can observe that 

\begin{cor}\label{c:correspondence}
There is a one to one correspondence between the comtraces over 
comtrace alphabet $\Theta=(\Sigma,\Sim,\Ser)$
and traces over concurrent alphabet 
$\Psi=(\widehat{\stepalph},\widehat{\Dep})$ given by
the construction of the set of indivisible steps 
and dependence relation on them. 
\[\begin{array}{ccc}
\tau & \widehat{\tau} & \tau '\\
over & \xlongrightarrow{\widehat{ }} \;\;over\; \longleftrightarrow & over\\
\Theta & \widehat{\Theta} & \Psi
\end{array}\]
\end{cor}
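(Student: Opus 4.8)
The plan is to assemble Corollary~\ref{c:correspondence} from the two preceding results essentially as bookkeeping about the maps involved, making sure each arrow is well-defined, injective, and surjective. The statement asserts a bijection between comtraces over $\Theta=(\Sigma,\Sim,\Ser)$ and traces over $\Psi=(\widehat{\stepalph},\widehat{\Dep})$, mediated by $indiv(-)$ (equivalently, via the split operator $\widehat{\phantom{w}}$). First I would fix the two candidate maps: in one direction, send a comtrace $\tau$ to the set $indiv(\tau)$, which by Theorem~\ref{t:trace4comtrace} is a genuine trace over $(\widehat{\stepalph},\widehat{\Dep})$; in the other, send a trace $\sigma\subseteq\widehat{\stepalph}^*$ (a set of step sequences built from indivisible steps, closed under $\equiv^{\widehat{\stepalph}}_{\Psi}$) to the comtrace $[\sigma]$ generated by any (equivalently, every) element of $\sigma$, regarded as a step sequence in $\stepalph^*$. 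The bulk of the argument is checking these are mutually inverse.

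The key steps, in order, are as follows. (1) \emph{The trace-to-comtrace map is well-defined.} If $u,v\in\widehat{\stepalph}^*$ are trace-equivalent over $\Psi$, I must show $u\equiv_\Theta v$ when read in $\stepalph^*$; this is exactly the observation already made inside the proof of Theorem~\ref{t:trace4comtrace}, namely that $(A,B)\in\widehat{\Ind}$ gives $AB\sim_\Theta (A\cup B)\sim_\Theta BA$ via Proposition-style reasoning, so a single adjacent transposition of independent indivisible steps is a comtrace congruence, and trace equivalence is the transitive closure of such transpositions. Hence $\sigma\mapsto[\sigma]$ depends only on $\sigma$. (2) \emph{$indiv([\sigma])=\sigma$.} The inclusion $\sigma\subseteq indiv([\sigma])$ is immediate since elements of $\sigma$ are built from indivisible steps and lie in $[\sigma]$. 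For the reverse, if $w\in indiv([\sigma])$ then $w\equiv_\Theta u$ for $u\in\sigma$, and the ``only if'' half of Theorem~\ref{t:trace4comtrace} (the no-going-beyond part, argued there via projections to binary dependent subalphabets of $\widehat{\stepalph}$) forces $w\equiv^{\widehat{\stepalph}}_{\Psi}u$, so $w\in\sigma$. (3) \emph{$[\,indiv(\tau)\,]=\tau$.} Here I use Proposition~\ref{p:leadout}: for any $w\in\tau$ the split $\widehat{w}\in indiv(\tau)$ and $\widehat{w}\equiv_\Theta w$, so every step sequence of $\tau$ is comtrace-equivalent to a member of $indiv(\tau)$, giving $\tau\subseteq[\,indiv(\tau)\,]$; conversely $indiv(\tau)\subseteq\tau$ by definition, whence $[\,indiv(\tau)\,]\subseteq\tau$. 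Combining (2) and (3) yields the bijection, and the triangle diagram in the statement just records that the composite $\tau\xmapsto{\widehat{\phantom{w}}}\widehat\tau$ lands in $indiv(\tau)$, which the trace identification then matches with a trace over $\Psi$.

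I do not expect a serious obstacle here, since Theorem~\ref{t:trace4comtrace}, Proposition~\ref{p:leadout}, and Lemma~\ref{l:division} already carry all the analytic content; the corollary is their packaging. If anything, the delicate point is step~(2): one must be careful that ``trace over $(\widehat{\stepalph},\widehat{\Dep})$'' is being treated as the full $\equiv^{\widehat{\stepalph}}_{\Psi}$-class, so that $indiv(\tau)$ is not merely some set of indivisible step sequences but exactly one such class — and this is precisely what Theorem~\ref{t:trace4comtrace} guarantees. I would therefore phrase the proof as: define the two maps, invoke Theorem~\ref{t:trace4comtrace} for well-definedness of $indiv(-)$, invoke the congruence-refinement remark for well-definedness of $[-]$, then verify the two round-trip identities using Proposition~\ref{p:leadout} and the projection characterisation, and finally read off the commuting triangle. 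A sentence at the end should note that this bijection is in fact a monoid isomorphism (concatenation is preserved, since $\widehat{w_1w_2}=\widehat{w_1}\,\widehat{w_2}$ by definition of the split operator), although the corollary as stated only claims a one-to-one correspondence.
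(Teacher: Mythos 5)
Your proposal is correct and follows the same route the paper intends: the paper states this corollary as immediate from Theorem~\ref{t:trace4comtrace} and Proposition~\ref{p:leadout} without writing out a proof, and your argument is precisely the expansion of that claim into the two round-trip identities $indiv([\sigma])=\sigma$ and $[\,indiv(\tau)\,]=\tau$, using exactly the ingredients the paper cites. The closing remark about the monoid isomorphism is a correct bonus observation not claimed by the corollary itself.
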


One can consider using the above correspondence to apply the 
methods of enumerating all traces of a given size~\cite{MikPiaSmy11} to enumerate 
comtraces of a given size.

\section{Projection Representation of Comtraces}

In the trace theory employing projections onto the cliques of the graph
of dependence relation
(see also \cite{Shi85}) turned out to be a very useful tool.
We now extend this notion in the case of the binary and unary cliques only 
(see also \cite{Mik08}), to define the projection representation of comtraces.
In the case of traces, we have only two kinds of relationships between actions.
As independent actions may be executed in any order (or together in case of
step semantics) one can focus on the order implied by the dependence relation.

In the case of comtraces, the situation is more complicated. However, once more we can
ignore independent actions and store information about the other three types
of relations (dependency, weak dependency and strong simultaneity). 
Once more, it is sufficient to store the information in the form of sequences.
In the case of strong simultaneity, however, we need to add a special symbol
$\perp$ that separates the situations of sequential and simultaneous execution
of pairs of actions being considered.

Let $a,b\in\Sigma$ and $(a,b)\notin\Ind$ (possibly $a=b$). 
For each such pair we define the
projection function $\Pi^\perp_{a,b}:\stepalph^*\rightarrow (\Sigma\cup\{\perp\})^*$ 
as follows. First, for a step $A\in\stepalph$ we have
\[
\Pi^\perp_{a,b}(A)=\left\{
\begin{array}{lll}
\epsilon & \text{ for } & \{a,b\}\cap A=\emptyset\\
a & \text{ for } & a\in A \wedge b\notin A\\
ba & \text{ for } & \{a,b\}\subseteq A \wedge (a,b)\in\Wdp\\
ab & \text{ for } & \{a,b\}\subseteq A \wedge (b,a)\in\Wdp\\
\perp & \text{ for } & \{a,b\}\subseteq A \wedge (a,b)\in\Ssm\\
\end{array}
\right.
\]

Note that there is a straightforward symmetry, namely for all $(a,b)\notin\Ind$
the equation $\Pi^\perp_{a,b}=\Pi^\perp_{b,a}$ holds.
Moreover, according to the definition, we have
$\Pi^\perp_{a,a}(A)=\epsilon$ if $a\notin A$ and $\Pi^\perp_{a,a}(A)=a$ if $a\in A$.
Then, for a step sequence $w=A_1A_2\ldots A_n$ we have
\[\Pi^\perp_{a,b}(w)=
\Pi^\perp_{a,b}(A_1)\circ\Pi^\perp_{a,b}(A_2)\circ\ldots\circ\Pi^\perp_{a,b}(A_n).\]

\begin{thm}\label{t:prrepr}
Let $w,u$ be step sequences over a comtrace alphabet 
$\Theta=(\stepalph,\Sim,\Ser)$.
Then $w\equiv_\Theta u \;\Leftrightarrow\; 
\forall_{(a,b)\notin\Ind}\;\Pi^\perp_{a,b}(w)=\Pi^\perp_{a,b}(u)$.
\end{thm}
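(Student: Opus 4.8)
The plan is to prove the two implications separately, with the ``only if'' direction being routine and the ``if'' direction carrying the real content.

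\medskip

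\noindent\textbf{The easy direction.} First I would show $w\equiv_\Theta u \Rightarrow \forall_{(a,b)\notin\Ind}\;\Pi^\perp_{a,b}(w)=\Pi^\perp_{a,b}(u)$. Since $\equiv_\Theta$ is the reflexive, symmetric and transitive closure of $\sim_\Theta$, and each $\Pi^\perp_{a,b}$ is a monoid homomorphism from $(\stepalph^*,\circ)$ into $((\Sigma\cup\{\perp\})^*,\circ)$, it suffices to check that $\Pi^\perp_{a,b}(ABz)=\Pi^\perp_{a,b}((A\cup B)z)$ whenever $A\times B\subseteq\Ser$ (with $A\cap B=\emptyset$). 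By the homomorphism property this reduces to the single-step identity $\Pi^\perp_{a,b}(A)\circ\Pi^\perp_{a,b}(B)=\Pi^\perp_{a,b}(A\cup B)$. I would verify this by a short case analysis on how $\{a,b\}$ meets $A$ and $B$: if neither $a$ nor $b$ lies in $A\cup B$ both sides are $\epsilon$; if exactly one of them lies in $A\cup B$ both sides equal the corresponding single letter; and if both $a\in A\cup B$ and $b\in A\cup B$ (necessarily with $a,b$ in different ones of $A,B$, since $\{a,b\}\subseteq A$ would force $(a,b)\in\Sim$ but also $\{a,b\}\subseteq A\cup B$ makes $(a,b)\in\Sim$ anyway), one uses $A\times B\subseteq\Ser$: either $(a,b)\in\Ser$, and then $\Pi^\perp_{a,b}$ of the joined step must be consistent with $ba$ or $ab$ according to whether $(a,b)\in\Wdp$ or $(b,a)\in\Wdp$, matching the concatenation of the two singletons in the right order; the strongly simultaneous case $(a,b)\in\Ssm$ cannot arise here because $A\times B\subseteq\Ser$ forbids $\Api$, hence forbids $\Ssm$. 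This mirrors exactly the argument in the proof of Proposition~\ref{p:stepTraceLex}.

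\medskip

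\noindent\textbf{The hard direction.} The converse — that equality of all $\Pi^\perp_{a,b}$ forces $w\equiv_\Theta u$ — is where the work lies. My approach is to reduce to the already-established trace machinery via the split operator. Given $w$ with $\widehat{w}=\widehat{A_1}\ldots\widehat{A_n}$, each $\widehat{A_i}=minlex(A_i)$ is a sequence of indivisible steps, so $\widehat{w}$ lives in $\widehat{\stepalph}^*$ and, by Proposition~\ref{p:leadout}, $\widehat{w}\in[w]$, i.e.\ $\widehat{w}\equiv_\Theta w$; likewise $\widehat{u}\equiv_\Theta u$. By Corollary~\ref{c:correspondence} and Theorem~\ref{t:trace4comtrace}, $w\equiv_\Theta u$ is equivalent to $\widehat{w}$ and $\widehat{u}$ being trace-equivalent over the concurrent alphabet $(\widehat{\stepalph},\widehat{\Dep})$, which by the projection characterisation of (sequential) trace equivalence holds iff $\Pi_{A,B}(\widehat{w})=\Pi_{A,B}(\widehat{u})$ for all dependent pairs of indivisible steps $(A,B)\in\widehat{\Dep}$. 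So the key claim I must establish is: \emph{the family $\big(\Pi^\perp_{a,b}(w)\big)_{(a,b)\notin\Ind}$ determines the family $\big(\Pi_{A,B}(\widehat{w})\big)_{(A,B)\in\widehat{\Dep}}$}, and conversely (the converse being needed only implicitly). Concretely, for indivisible steps $A,B$ with $(A,B)\in\widehat{\Dep}$ there exist $a\in A$, $b\in B$ with $(a,b)\notin\Ind$; I would show how to recover the relative order / merging pattern of occurrences of $A$ and of $B$ in $\widehat{w}$ from the $\perp$-decorated projections $\Pi^\perp_{a,b}$ restricted to the letters actually belonging to $A$ and to $B$. The role of the symbol $\perp$ is precisely to record, within $\Pi^\perp_{a,b}(w)$, at which positions $a$ and $b$ came from a common (strongly simultaneous, hence indivisible-internal) step versus sequentially; stripping this information down to the coarser steps $A,B$ should reconstruct $\Pi_{A,B}(\widehat{w})$.

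\medskip

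\noindent\textbf{Where the obstacle is.} The main difficulty is bookkeeping the correspondence between occurrences: a single occurrence of an indivisible step $A$ in $\widehat{w}$ contributes one letter of each $\Pi^\perp_{a,b}$ for $a\in A$, but different pairs $(a,b)$ see different fragments of $A$, and inside $A$ the $\Api$-structure (encoded by the internal $\Wdp$ and $\Ssm$ projections with $a,a'\in A$) pins down $A$ up to comtrace equivalence — indeed, up to $\equiv_A$, $A$ is rigid, which is exactly what indivisibility buys us. So I would first argue that the projections $\Pi^\perp_{a,a'}$ with $a,a'$ ranging over a fixed step $A$ determine $A$ as a set (they must all be ``present'', i.e.\ nonempty, consistently), then argue that across the whole sequence the ordering data between distinct indivisible steps is faithfully transcribed. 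I expect the cleanest route is an induction on $|w|$ (or on $\sum_a \#_a(w)$): peel off the first indivisible step $A_1$ of $\widehat{w}$ — which is recognisable from the projections as the ``minimal'' batch of first letters mutually linked by $\Ssm$-chains and not preceded by anything dependent — show the projections of the remainder $\Pi^\perp_{a,b}(\widehat{w})$ with that prefix removed still coincide for $w$ and $u$, and invoke the induction hypothesis together with Proposition~\ref{p:stepTraceSwaping}-style commutations. If the inductive peeling turns out awkward, the fallback is to invoke Theorem~\ref{t:prReprStTr} directly after checking that $\Pi^\perp_{a,b}$ and the step-trace projections $\Pi_{A,B}(\widehat{\cdot})$ carry the same information, packaging the whole argument through Corollary~\ref{c:correspondence}.
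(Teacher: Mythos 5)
Your forward direction is essentially the paper's: reduce, via the fact that $\Pi^\perp_{a,b}$ is a homomorphism, to the single rewrite $AB\sim_\Theta(A\cup B)$ with $A\times B\subseteq\Ser$, and check cases on how $\{a,b\}$ meets the two steps. (The parenthetical claiming $a$ and $b$ must land in different ones of $A,B$ is false --- both may lie in $A$ --- but that case is trivial, so no harm is done.) The problem is the converse, where your text is a plan rather than a proof, and the one concrete mechanism you propose for executing it would fail. You want to peel off the first indivisible step of $\widehat{w}$, ``recognisable from the projections as the minimal batch of first letters mutually linked by $\Ssm$-chains''. But a step can be indivisible without containing a single $\Ssm$-pair: indivisibility of $A$ means $\Api|_A$ is strongly connected, and a directed cycle of $\Wdp$-edges (say $(a,b),(b,c),(c,a)\in\Wdp$ inside a step $(abc)$) achieves this with $\Ssm|_A=\emptyset$. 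For such a pair, $\Pi^\perp_{a,b}$ of the simultaneous occurrence is the two-letter string $ba$ --- literally the same string produced by a sequential execution of $b$ followed by $a$ --- so no single projection witnesses the simultaneity and no $\perp$ ever appears. Recognising the first indivisible step therefore cannot be done projection-by-projection; one must exploit the consistency of several projections along a $\Api$-chain. That is exactly what the paper's proof does: it takes $u=Au'$ in lexicographical canonical form and $v$ built from indivisible steps, writes $v=v'Bv''$ with $v'$ the longest prefix avoiding $alph(A)$, and uses a chain $a=a_1,\ldots,a_n=b$ inside $A$ with $(a_{i+1},a_i)\in\Sin$ to show that $a_{i+1}\in B$ forces $a_i\in B$ (otherwise $\Pi^\perp_{a_i,a_{i+1}}(Bv'')$ would begin with $a_{i+1}$, whereas it must begin with $a_ia_{i+1}$ or $\perp$); connectivity then gives $A\subseteq B$, indivisibility of $B$ gives $A=B$, the prefix condition gives $A\times alph(v')\subseteq\Ind$, and one commutes $A$ to the front and recurses. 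That chain argument is the missing idea in your sketch.

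Your primary route --- passing through Theorem~\ref{t:trace4comtrace} and Corollary~\ref{c:correspondence} and showing that the family $\Pi^\perp_{a,b}(w)$ determines the trace projections $\Pi_{A,B}(\widehat{w})$ for $(A,B)\in\widehat{\Dep}$ --- is legitimate in principle (those results precede the theorem, so there is no circularity), but the ``key claim'' you isolate is where all of the difficulty sits: deciding which indivisible step a given occurrence of an action belongs to is precisely the occurrence-bookkeeping problem above, so the reduction buys nothing over the direct peeling, and you would still need the chain argument to close it.
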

\begin{proof}
$\Rightarrow :$\\
We first prove that \[w\equiv_\Theta u \;\Rightarrow\; 
\forall_{(a,b)\notin\Ind}\;\Pi^\perp_{a,b}(w)=\Pi^\perp_{a,b}(u).\]
According to the definition of comtrace equivalence, it is sufficient to prove
the statement in the case of equivalent step sequences $w=A$ and $u=BC$.
Let $a,b\in A$. We consider all but one of the possible relationships of 
these actions (the remaining case is that of independence).\\

Case 1: $(a,b)\in \Dep$.

Since actions $a$ and $b$ occur simultaneously in the step $A$, this is impossible.\\

Case 2: $(a,b)\in \Ssm$.

Since actions $a$ and $b$ are strongly simultaneous, Lemma \ref{l:division}
shows that they both have to occur in step $B$ or $C$. It means that
\[\Pi^\perp_{a,b}(BC)=\Pi^\perp_{a,b}(B)\Pi^\perp_{a,b}(C)=\perp\epsilon=\Pi^\perp_{a,b}(A)\]
or
\[\Pi^\perp_{a,b}(BC)=\Pi^\perp_{a,b}(B)\Pi^\perp_{a,b}(C)=\epsilon\perp=\Pi^\perp_{a,b}(A).\]
\mbox{}

Case 3: $(a,b)\in \Wdp$.

Since $B\times C\subseteq\Ser$, it is impossible that $b\in C$ and $a\in B$.
If they both belong to one step, we have 
\[\Pi^\perp_{a,b}(BC)=\Pi^\perp_{a,b}(B)\Pi^\perp_{a,b}(C)=(ba) \epsilon=\Pi^\perp_{a,b}(A)\]
or
\[\Pi^\perp_{a,b}(BC)=\Pi^\perp_{a,b}(B)\Pi^\perp_{a,b}(C)=\epsilon (ba)=\Pi^\perp_{a,b}(A)\]
while belonging to the different steps (namely $b\in B$ and $a\in C$) gives
\[\Pi^\perp_{a,b}(BC)=\Pi^\perp_{a,b}(B)\Pi^\perp_{a,b}(C)=ba=\Pi^\perp_{a,b}(A),\]
which completes the first part of the proof.\\

\noindent $\Leftarrow :$\\
Now, let us assume that we have two step sequences $u,v\in\stepalph^*$ and
\[\forall_{(a,b)\notin\Ind}\;\Pi^\perp_{a,b}(v)=\Pi^\perp_{a,b}(u).\] 
Without loss of
generality we can assume that $u=Au'$ is in the lexicographical canonical 
form and
$v$ consists of indivisible steps only. 
We claim that then there exist $v',v''\in\stepalph^*$ such that
$v=v'Av''$, no action occurring in
$A$ occurs in $v'$ and $A\times alph(v')\subseteq\Ind$.

Directly from the definition of the projection representation we see that all 
projections onto the subalphabets containing actions from the indivisible step
$A$ start with
the actions contained in $A$. More precisely, if $a,b\in A$ then 
$\Pi^\perp_{a,b}(u)$ starts with $ab$, $ba$ or $\perp$, depending on the relation 
between $a$ and $b$. If $a\in A$ and $b\notin A$ however, $\Pi^\perp_{a,b}(u)$ 
starts with a single action $a$. 

Let $v'$ be the longest prefix of $v$ such that $alph(v')\cap A=\emptyset$ 
and $v=v'Bv''$. Obviously, all 
projections onto the subalphabets containing actions from the step
$A$ are equal for $v$ and $Bv''$.
Moreover, from the 
definition of the indivisible step, between every two actions $a,b$ contained 
in $A$ there is a sequence of pairwise different actions $a=a_1,\ldots,a_n=b$
contained in $A$ such that for every $i<n$ we have $(a_{i+1},a_i)\in\Sin$.
It means that for every such a pair of consecutive actions we have
$\Pi^\perp_{a_i,a_{i+1}}(B)=a_ia_{i+1}$ if $(a_{i+1},a_i)\in\Wdp$ or 
$\Pi^\perp_{a_i,a_{i+1}}(B)=\perp$ if $(a_{i+1},a_i)\in\Ssm$. 
Nevertheless, if $a_{i+1}$ is in $B$ then also $a_i$ have to be in $B$. 
Otherwise $\Pi^\perp_{a_i,a_{i+1}}(B)$ would start with $a_{i+1}$.
This proves that, since $A\cap B\neq\emptyset$, $A\subseteq B$.
Using similar arguments, we can see that since $B$ is indivisible, no other
action may occur in $B$ and $A=B$.

It remains to be shown that $A\times alph(v')\subseteq\Ind$. Let $a\in A$ and 
$c\in alph(v')$. Clearly, $c\notin A$ from the definition of sequence $v'$. 
In the step sequence $v$ the action $c$ appears before 
action $a$ so, if they are not independent, $\Pi^\perp_{a,c}(v)=\Pi^\perp_{a,c}(u)$ starts
with $c$. But $a\in A$ and $c\notin A$, and so $\Pi^\perp_{a,c}(u)$ starts with $a$.
This contradicts our assumption that $a$ and $c$ are not independent and 
proves that $v\equiv_\Theta Av'v''$. Repeating the above reasoning, we obtain that
$u$ is the lexicographical canonical form of $v$ which ends the second part of 
the proof.
\end{proof}

\noindent The projection representation of a comtrace $\tau$ is a function 
$\Pi^\perp_\tau:(\Sigma\times\Sigma)\setminus\Ind\rightarrow (\Sigma\cup\{\perp\})^*$, 
given by $\Pi^\perp_\tau(a,b)=\Pi^\perp_{a,b}(\tau)$.
Moreover, any function 
$\Pi^\perp:(\Sigma\times\Sigma)\setminus\Ind\rightarrow (\Sigma\cup\{\perp\})^*$ 
is called a \emph{projection set}.
Clearly, not every projection set is a projection representation 
of a comtrace. In the next section, we give a procedure that decides whether 
a given projection set is a projection representation of a comtrace.
Moreover, if the answer is positive, the procedure computes 
a representative of such a comtrace.

First, however, we provide the algorithm computing projection representation of a comtrace. This algorithm comes
directly from the definition. 
However, to say anything about the time complexity of the algorithm,
it is important to discuss the data
structures which might be used by this algorithm. At the beginning, let us
consider the input. We get a comtrace alphabet $\Theta$ which consists of
the alphabet $\Sigma$ of size $k$ and two relations, $\Sim$ and $\Ser$, of size
at most $k^2$ each. We also get a step sequence $w$ which steps consist of $n$ 
occurrences of atomic actions (elements of $\Sigma$) all together. As a result, we obtain
the set of at most $k^2$ sequences (projections onto specified subalphabets).

We process the step sequence $w$ step by step, which means that the algorithm
is online (i.e. during the computation we achieve correct results for each proper
prefix of $w$). The processing of a single step is done according to
the definition of projections onto the pairs in the specified relation. 
It is worth carrying out some
preprocessing and, for every action, compute the list of all subalphabets in
which it may occur. By storing, for every computed projection, the number 
of the step when it was most recently updated, 
we avoid problems with the special cases of relations $\Wdp$ and
$\Ssm$ (in these cases two rather than one action may be added to one sequence 
while processing a single step). 

\begin{prop}\label{p:al1comp}
The procedure of computing $\Pi^\perp_\tau$ from a step sequence $w\in\tau$ has the 
time and memory complexity of $O(nk)$.
\end{prop}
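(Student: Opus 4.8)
The plan is to show that both the time and the memory used by the online algorithm sketched above are $O(nk)$, by carefully accounting for (i) the preprocessing, (ii) the per-step work, and (iii) the size of the output. The key observation is that each atomic action occurrence in $w$ contributes to only those projections $\Pi^\perp_{a,b}$ whose subalphabet contains it, and for a fixed action $a$ there are at most $k$ such subalphabets (one for each $b\in\Sigma$ with $(a,b)\notin\Ind$, including $b=a$).

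First I would treat the preprocessing: for every action $a\in\Sigma$ we build the list $L_a$ of all pairs $(a,b)$ with $(a,b)\notin\Ind$, together with a tag recording which of the relations $\Dep$, $\Wdp$ (in either orientation) or $\Ssm$ applies. Reading $\Sim$ and $\Ser$, each of size at most $k^2$, this costs $O(k^2)=O(nk)$ time when $k\le n$ --- and we may assume $k\le n$ since every action of $\Sigma$ that does not occur in $w$ can be discarded, so the surviving alphabet has size at most $n$. (If one does not wish to discard unused actions, the bound should be stated as $O(nk+k^2)$; I would note this and then adopt the convention that $k$ denotes the size of $\mathit{alph}(w)$.) Each list $L_a$ has length at most $k$, so the total size of the preprocessing data is $O(k^2)=O(nk)$.

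Next I would analyse the main loop. We scan $w=A_1\ldots A_m$ step by step; when we reach step $A_j$, for each action $a\in A_j$ we walk down $L_a$ and, for each pair $(a,b)$ on it, append the appropriate letter(s) to the stored sequence for $\{a,b\}$, exactly according to the five-case definition of $\Pi^\perp_{a,b}(A_j)$. The only subtlety is that when $\{a,b\}\subseteq A_j$ both $a$ and $b$ are scanned and we must avoid writing the two-letter block $ba$ (or the symbol $\perp$) twice; this is handled, as the text indicates, by storing with each projection the index of the step at which it was last updated and skipping a second update in the same step. Thus the work for step $A_j$ is $O\bigl(\sum_{a\in A_j}|L_a|\bigr)=O(|A_j|\,k)$, and summing over all steps gives $O\bigl(k\sum_j|A_j|\bigr)=O(nk)$ total time, since $\sum_j|A_j|=n$ is the total number of action occurrences in $w$.

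Finally, for memory: the output consists of at most $k^2$ sequences over $\Sigma\cup\{\perp\}$, but each action occurrence in $w$ contributes at most two letters to exactly the projections indexed by the at most $k$ subalphabets containing that action, so the combined length of all output sequences is $O(nk)$; together with the $O(nk)$ preprocessing tables and $O(k)=O(n)$ bookkeeping for the last-update indices, the total memory is $O(nk)$, completing the proof. The main obstacle to get right is not any deep argument but the double-counting bookkeeping in the $\{a,b\}\subseteq A_j$ case and the convention $k=|\mathit{alph}(w)|$ that makes the preprocessing term $O(k^2)$ absorb into $O(nk)$; everything else is a direct charging argument against the $n$ action occurrences.
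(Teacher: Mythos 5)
Your proposal is correct and follows essentially the same route as the paper: a direct charging argument in which each of the $n$ action occurrences is charged for updating at most $k$ projection sequences and $k$ last-update counters, giving $O(nk)$ overall. You are somewhat more careful than the paper's own (very terse) proof in explicitly accounting for the $O(k^2)$ preprocessing term and the memory bound, but these are refinements of the same argument rather than a different approach.
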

\begin{proof}
The proof is straightforward. The algorithm is naturally divided into $n$
stages grouped by steps of input step sequence. In each stage we process
a single action and add it to at most $k$ sequences updating at most $k$
counters. Hence each stage can be done in the time
linearly proportional to the size of the alphabet.
Therefore whole procedure has the time complexity of $O(nk)$.
\end{proof}

\begin{thm}
Testing comtrace equivalence can be done in the time complexity of $O(nk)$.
\end{thm}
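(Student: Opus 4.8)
The plan is to reduce testing $w\equiv_\Theta u$ to comparing the two projection representations and then to lean on the two results already in hand. By Theorem~\ref{t:prrepr}, for step sequences $w,u\in\stepalph^*$ we have $w\equiv_\Theta u$ if and only if $\Pi^\perp_{a,b}(w)=\Pi^\perp_{a,b}(u)$ for every pair $(a,b)\notin\Ind$ -- that is, if and only if the functions $\Pi^\perp_w$ and $\Pi^\perp_u$ coincide. So the algorithm I would propose is simply: compute $\Pi^\perp_w$, compute $\Pi^\perp_u$, and compare them.

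The first two steps cost $O(nk)$ time and memory each by Proposition~\ref{p:al1comp}, where $n$ bounds the total number of action occurrences in the inputs and $k=|\Sigma|$. For the comparison I would reuse the per-action preprocessing described just before Proposition~\ref{p:al1comp}: for each action, the list of binary subalphabets $\{a,b\}$ with $(a,b)\notin\Ind$ in which it may occur. The key quantitative observation is that a single occurrence of an action $a$ contributes at most one symbol to each of the at most $k$ sequences $\Pi^\perp_{a,b}$ ($b\in\Sigma$, including $b=a$), so the total length of all the sequences making up a projection representation is $O(nk)$, and in particular only $O(nk)$ of the up to $k^2$ subalphabets carry a non-empty sequence -- namely those in which some action of $w$ or of $u$ actually occurs. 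Walking through both representations in parallel over exactly these relevant subalphabets and comparing the corresponding sequences symbol by symbol thus takes $O(nk)$ time; we answer ``not equivalent'' as soon as a length or a symbol differs, and ``equivalent'' otherwise. The degenerate case where $w$ and $u$ have different Parikh vectors, hence possibly different total lengths, is caught here already: the unary projections $\Pi^\perp_{a,a}$ have lengths $\#_a(w)$ and $\#_a(u)$, so a mismatch there is detected in the same pass. Adding up, the whole procedure runs in $O(nk)$ time, which is the claim.

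The step I expect to be the main obstacle is exactly this comparison phase. A naive implementation that iterates over all $k^2$ candidate subalphabets, or that records and compares each pair of sequences in isolation, would yield $O(nk^2)$ rather than $O(nk)$. What saves the bound is the size estimate for a projection representation (each action occurrence feeds at most $k$ projections, so the representation has total size $O(nk)$) combined with the per-action index of relevant subalphabets, so that no work is spent on empty or irrelevant projections. Everything else -- the two computations of $\Pi^\perp$ via Proposition~\ref{p:al1comp} and the appeal to Theorem~\ref{t:prrepr} -- is immediate from the results already proved.
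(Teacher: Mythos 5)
Your proposal is correct and follows essentially the same route as the paper: reduce equivalence testing to equality of the two projection representations via Theorem~\ref{t:prrepr}, compute each in $O(nk)$ by Proposition~\ref{p:al1comp}, and compare them sequence by sequence using the fact that the representations themselves have total size $O(nk)$. Your extra care about not touching all $k^2$ subalphabets during the comparison is a sensible refinement of the same argument, not a different approach.
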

\proof
Notice that the output of procedure discussed in Proposition \ref{p:al1comp}
has also memory complexity of $O(nk)$. 
Hence for two step sequences we can compute their 
projection representations and compare them sequence by sequence.
\qed

\subsection{Reconstructing Step Sequence from Projection Set}

The idea of constructing a step sequence from a projection set is
based on revealing the first possible step whose projection
representation would form a set of prefixes of a given projection set.
At first, we identify the set of all possible elements of such a step.
We do it in two stages. We first identify the set of conditionally possible actions,
i.e. those actions whose first occurrences are the first (or in particular
situations the second) actions in all projections, where they could appear.
Note that we treat the special symbol $\perp$ as a pair of proper actions, so its occurrence means that both actions might be conditionally possible.
After this identification, we remove actions that cannot satisfy
some of the necessary conditions. These conditions are related to the cases when
the considered action appears as the second action in some sequences connected
with the weak dependence relation or are verified positively because of 
the special symbol $\perp$.

As a result of the first stage, we obtain the set of all actions that may appear
in the first step of the constructed sequence. The second stage consists of dividing
this set into indivisible steps and combining those indivisible steps
into one of the allowed steps. The result is obtained by taking advantage of 
the weak dependence relation inside the set of indivisible steps. It is
similar to the ideas behind the proof of Lemma~\ref{l:division}.
Let us look into the details of the proposed procedure.

Recall that by $\pref{k}(w)=a_1\ldots a_k$ we denote the k-prefix of $w$.
Let $\Pi^\perp$ be a projection set. We say that an action $a\in\Sigma$ is
\emph{conditionally possible} for projection set $\Pi^\perp$ if and only if
for all $b\in\Sigma$ the following implications are satisfied:\\

\begin{iteMize}{$\bullet$}
\item $(a,b)\in\Dep \Rightarrow \pref{1}(\Pi^\perp(a,b))=a$
\item $(b,a)\in\Wdp \Rightarrow \pref{1}(\Pi^\perp(a,b))=a$
\item $(a,b)\in\Wdp \Rightarrow \pref{1}(\Pi^\perp(a,b))=a 
\vee \pref{2}(\Pi^\perp(a,b))=ba$
\item $(a,b)\in\Ssm \Rightarrow \pref{1}(\Pi^\perp(a,b))=a 
\vee \pref{1}(\Pi^\perp(a,b))=\perp$
\end{iteMize}

We denote all conditionally possible actions as $cpa$ and define the
relation $cnd\subseteq\Sigma\times\Sigma$, which describes the conditions
that must be satisfied. 
Only in situations where 
\[(a,b)\in\Wdp \wedge \pref{2}(\Pi^\perp(a,b))=ba\]
or 
\[(a,b)\in\Ssm \wedge \pref{1}(\Pi^\perp(a,b))=\perp\] 
we say that the
existence of action $b$ in the constructed step is a necessary condition 
for the presence of action $a$ in this step, which is denoted by $(a,b)\in cnd$.

We exclude conditionally possible actions with conditions impossible to satisfy
to form the set of possible actions. Any action $a\in\Sigma$ that is not
conditionally possible in $\Pi^\perp$ is \emph{impossible} in $\Pi^\perp$. Moreover, any 
action $a$ conditionally possible under impossible condition 
(i.e. $(a,b)\in cnd$ and $b$ is impossible) is also impossible. Formally, the set 
of impossible actions for the projection function $\Pi^\perp$ is the smallest set 
$imp$ that satisfies the following conditions:

\begin{iteMize}{$\bullet$}
\item $\Sigma\setminus cpa\subseteq imp$
\item $b\in imp\wedge (a,b)\in cnd\Rightarrow a\in imp$
\end{iteMize}

Let $M(\Pi^\perp)$ be the set of actions which are not impossible (which means that they are possible) for projection set $\Pi^\perp$.
The next operation is to choose a subset of $M(\Pi^\perp)$ which could be a first step
of the reconstructed step sequence. 
To do so we take a sequential trace over $\widehat{\stepalph}$, 
given by the step sequence $\widehat{M(\Pi^\perp)}$ (see Corollary \ref{c:correspondence}).
Note that for any $a\in M(\Pi^\perp)$ we have $\#_a(\widehat{M(\Pi^\perp)})\leq 1$.
We take any nonempty trace prefix $B_1\ldots B_n$ of step sequence $\widehat{M(\Pi^\perp)}$
and set $B=\bigcup_{i}\;B_i$ as a requested step.
The procedure just described is justified by the following facts:

\begin{prop}
Let $\Pi^\perp$ be a projection set over a comtrace alphabet $\Theta$ and $B\subseteq 
M(\Pi^\perp)$ a set of actions constructed according to the procedure described above.

If $b\in B$ and $a\in M(\Pi^\perp)$ then
\[(b,a)\in\Sin^*~~\Longrightarrow~~[a]_{\equiv_{M(\Pi^\perp)}}\subseteq B.\]
\end{prop}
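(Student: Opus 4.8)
The plan is to carry the whole statement over into the sequential trace world attached, via Corollary~\ref{c:correspondence}, to the single step $M:=M(\Pi^\perp)$; recall that $\widehat{M}=minlex(M)=A_1\ldots A_m$ is produced by iterating the peeling construction of Lemma~\ref{l:division}. The first thing I would record is that, by Lemma~\ref{l:division} applied repeatedly, $\{A_1,\ldots,A_m\}=M/_{\equiv_M}$, that these sets are pairwise distinct, and that each $A_i$ is exactly one strongly connected component of the graph $(M,\Sin|_M)$ (here $\Sin=\Api$). Hence both $[b]_{\equiv_M}$ and $[a]_{\equiv_M}$ are among the $A_i$. Since $b\in B=\bigcup_l B_l$ forces $b$ into one of the indivisible steps $B_l$ of the chosen trace prefix $B_1\ldots B_n$, and that step is precisely $[b]_{\equiv_M}$, we already have $[b]_{\equiv_M}=B_l\subseteq B$; the real task is to show that $[a]_{\equiv_M}$ is also one of $B_1,\ldots,B_n$.

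Next I would analyse how the peeling order in $minlex(M)$ interacts with $\Sin|_M$-reachability between components. Each peeled step is a lexicographically least \emph{sink} component of the current subgraph -- this is exactly what the ``minimal layer'' $D$ in the proof of Lemma~\ref{l:division} captures -- so a component, once removed, has no $\Sin|_M$-edges to any component removed later. By induction on the peeling stages this yields: if $A_i$ is $\Sin|_M$-reachable from $A_j$ and $A_i\neq A_j$, then $i<j$, and in fact along any witnessing path of components the position index strictly decreases at every step. Combined with the trivial remark $\Sin\cap\Ind=\emptyset$ (since $\Ind=\Ser\cap\Ser^{-1}\subseteq\Ser$ while $\Sin=\Sim\setminus\Ser$), so that every $\Sin|_M$-edge between two distinct components certifies that they are $\widehat{\Dep}$-dependent, a $\Sin|_M$-path from $b$ to $a$ produces a chain of components $[b]_{\equiv_M}=C_0,C_1,\ldots,C_r=[a]_{\equiv_M}$ along which the position in $A_1\ldots A_m$ strictly decreases and consecutive components are $\widehat{\Dep}$-dependent.

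Finally I would invoke the standard characterisation of prefixes in trace theory, applied to the concurrent alphabet $(\widehat{\stepalph},\widehat{\Dep})$ and the word $A_1\ldots A_m$ whose letters are all distinct: a step sequence is a trace prefix of $A_1\ldots A_m$ exactly when the set of letters it uses is downward closed under ``occurs earlier and is $\widehat{\Dep}$-dependent''. The chain from the previous paragraph places $[a]_{\equiv_M}$ below $[b]_{\equiv_M}$ in precisely that order; as $[b]_{\equiv_M}\in\{B_1,\ldots,B_n\}$ and this set is downward closed, $[a]_{\equiv_M}\in\{B_1,\ldots,B_n\}$, hence $[a]_{\equiv_M}\subseteq B$ (the case $[a]_{\equiv_M}=[b]_{\equiv_M}$ being immediate). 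I expect the main obstacle to be the induction over peeling stages in the second paragraph: one must argue carefully that a $\Sin|_M$-path leaving a component can never re-enter an already-peeled sink component and must therefore descend to strictly earlier stages, and that the relevant path stays inside $M$ -- this is where the combinatorics behind Lemma~\ref{l:division} genuinely does the work.
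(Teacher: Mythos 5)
Your proposal is correct and follows essentially the same route as the paper: reduce to the sequential trace over $(\widehat{\stepalph},\widehat{\Dep})$, use the $\Sin$-path to show that $[a]_{\equiv_{M(\Pi^\perp)}}$ must precede $[b]_{\equiv_{M(\Pi^\perp)}}$ in every representative, and conclude by prefix closure. Your condensation-DAG/sink-peeling induction is just a cleaner packaging of the paper's ``orders are reversed'' argument from Lemma~\ref{l:division}, and the explicit appeal to downward closure of trace prefixes makes the last step more precise than the paper's, but no new idea is involved.
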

\begin{proof}
Let $\widehat{M(\Pi^\perp)}=B_1\ldots B_n$, where all the $B_m$'s are indivisible.
Since $a,b\in M(\Pi^\perp)$ there exist $1\leq p,q\leq n$ 
such that $a\in B_p$ and $b\in B_q$.
By Lemma \ref{l:division} $[a]_{\equiv_{M(\Pi^\perp)}}=B_p$.
By Corollary \ref{c:correspondence}, $\widehat{M(\Pi^\perp)}$ forms a sequential trace 
over $\widehat{\stepalph}$. 
In the above procedure we use one of trace prefixes of $\widehat{M(\Pi^\perp)}$,
taking $B$ as the union of all indivisible steps (actions of $\widehat{\stepalph}$)
contained in this prefix.
Hence $B_q\subseteq B$. 
If $p=q$ we have that $B_p=B_q$ and $B_p=[a]_{\equiv_{M(\Pi^\perp)}}\subseteq B$.
Let us consider the case $B_p\neq B_q$.
It is sufficient to prove that $B_p$ occurs before $B_q$ 
in all trace prefixes of $\widehat{M(\Pi^\perp)}$.

Since $(b,a)\in\sin^*$, there exists a sequence of actions $b=c_1\ldots c_k=a$
such that $(c_i,c_{i+1})\in\sin$ for every $0<i<k$.
Hence there exists a sequence of steps $u=C_1\ldots C_k$ such that $c_i\in C_i$.
Clearly, $C_i$ might be equal to $C_{i+1}$, for some $0<i<k$,
but surely $C_1\neq C_k$.
However, for distinct $i,j$ we have 
$(C_i,C_j)\in\widehat{\Dep}$.
Moreover, each $C_i$ is contained in $Alph(M(\Pi^\perp))$
and if $C_i$ occurs before $C_j$ in $u$, then it also has to occur before $C_j$
in $\widehat{M(\Pi^\perp)}$. 

If $C_i$ and $C_{i+1}$ are different, then during the division of the step 
$\widehat{M(\Pi^\perp)}$ (see Lemma~\ref{l:division}) they have to get to different parts 
(like steps $B$ and $C$ in Lemma \ref{l:division}).
Since $(c_i,c_{i+1})\in\Api$, it is impossible to have 
$C_i\times C_{i+1}\subseteq\Ser$.
This shows that their orders of occurring in $u$ and $\widehat{M(\Pi^\perp)}$ are reversed.
Moreover, this remains true for every sequence over $\widehat{\stepalph}$ equivalent
to $\widehat{M(\Pi^\perp)}$.
Finally, we conclude that what we have shown
applies not only to consecutive and distinct steps of $u$ but also
to all its distinct elements, including $C_1=B_q$ and $C_k=B_p$, which end the proof. 
\end{proof}

\begin{thm}
Let $w=A_1\ldots A_n$ be a step sequence, and $\Pi^\perp$ be the projection 
representation of $[w]$. Then
\[A_1\subseteq M(\Pi^\perp).\]
\end{thm}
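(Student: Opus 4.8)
The plan is to show that every $a\in A_1$ lies in $M(\Pi^\perp)=\Sigma\setminus imp$. Since $\Pi^\perp$ is the projection representation of $[w]$ and $w\in[w]$, for every pair $(a,b)\notin\Ind$ we have $\Pi^\perp(a,b)=\Pi^\perp_{a,b}(w)=\Pi^\perp_{a,b}(A_1)\circ\Pi^\perp_{a,b}(A_2\ldots A_n)$, so $\Pi^\perp_{a,b}(A_1)$ is always a prefix of $\Pi^\perp(a,b)$. The proof will have two parts: first that $A_1\subseteq cpa$, and then that no action of $A_1$ can be dragged into $imp$ by the $cnd$-closure rule.

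For the first part, fix $a\in A_1$ and any $b\in\Sigma$. The relevant dichotomy is whether $b\in A_1$. If $b\notin A_1$ then, by the definition of $\Pi^\perp_{a,b}$ on steps, $\Pi^\perp_{a,b}(A_1)=a$; if $b\in A_1$ then $a$ and $b$ coexist in a step, hence $(a,b)\in\Sim$, and $\Pi^\perp_{a,b}(A_1)$ equals $ba$, $ab$ or $\perp$ according to whether $(a,b)\in\Wdp$, $(b,a)\in\Wdp$ or $(a,b)\in\Ssm$; moreover if $(a,b)\in\Dep$ the case $b\in A_1$ is impossible. Feeding this case analysis into the four clauses defining ``conditionally possible'' verifies each of them: when $(a,b)\in\Dep$ or $(b,a)\in\Wdp$, $\Pi^\perp(a,b)$ starts with $a$; when $(a,b)\in\Wdp$ it starts with $a$ (if $b\notin A_1$) or has $\pref{2}=ba$ (if $b\in A_1$); when $(a,b)\in\Ssm$ it starts with $a$ (if $b\notin A_1$) or with $\perp$ (if $b\in A_1$). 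Hence $a\in cpa$, so $A_1\cap(\Sigma\setminus cpa)=\emptyset$.

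The second part rests on the monotonicity claim: if $a\in A_1$ and $(a,b)\in cnd$, then $b\in A_1$. By definition $(a,b)\in cnd$ forces either $(a,b)\in\Wdp$ with $\pref{2}(\Pi^\perp(a,b))=ba$, or $(a,b)\in\Ssm$ with $\pref{1}(\Pi^\perp(a,b))=\perp$. In either case, were $b$ absent from $A_1$ we would have $\Pi^\perp_{a,b}(A_1)=a$, so $\Pi^\perp(a,b)$ would begin with $a$, contradicting that it begins with $ba$ (resp.\ $\perp$); hence $b\in A_1$.

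Finally I conclude by a minimality argument. Put $imp'=imp\setminus A_1$. By the first part $\Sigma\setminus cpa\subseteq imp'$, and $imp'$ is closed under the rule defining $imp$: if $b\in imp'$ and $(a,b)\in cnd$ then $a\in imp$ by closure of $imp$, and $a\notin A_1$, since otherwise the monotonicity claim would give $b\in A_1$, contradicting $b\in imp'$; thus $a\in imp'$. Since $imp$ is the smallest such set, $imp\subseteq imp'$, i.e.\ $A_1\cap imp=\emptyset$, so $A_1\subseteq\Sigma\setminus imp=M(\Pi^\perp)$. I expect the main obstacle to be the bookkeeping in the first part — matching the ``$b\in A_1$ versus $b\notin A_1$'' split against the asymmetric $\Wdp$-clauses and the role of $\perp$ for $\Ssm$ — while the concluding minimality step is routine.
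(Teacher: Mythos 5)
Your proof is correct and follows essentially the same route as the paper's: show $A_1\subseteq cpa$ via the prefix property of $\Pi^\perp_{a,b}(A_1)$, show that $(a,b)\in cnd$ with $a\in A_1$ forces $b\in A_1$, and conclude $A_1\cap imp=\emptyset$. You merely spell out the case analysis for the four $cpa$-clauses and make explicit the minimality argument (via $imp\setminus A_1$) that the paper leaves implicit.
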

\begin{proof}
Since $\Pi^\perp$ is the projection representation of $[w]$, 
for all $(a,b)\notin\Ind$ 
we have 
\[\Pi^\perp(a,b)=\Pi^\perp_{a,b}(w)=\Pi^\perp_{a,b}(A_1\ldots A_n).\]
Hence all actions contained in $A_1$ are conditionally possible.
Moreover, $(a,b)\in cnd$ means that $(a,b)\in\Wdp$ or $(a,b)\in\Ssm$.
In the first case, $\pref{2}(\Pi^\perp(a,b))=ba$, so $b\in A_1$.
Similarly, if $(a,b)\in\Ssm$ then $\pref{2}(\Pi^\perp(a,b))=\perp$, so $b\in A_1$.

Since $(a,b)\in cnd$ and $a\in A_1$ implies $b\in A_1$, and $A_1\subseteq cpa$,
we conclude that $A_1\cap imp=\emptyset$.
This proves that $A_1\subseteq M(\Pi^\perp)$.
\end{proof}

As a result, we can extract step $B$ from $\Pi^\perp$. 
The extraction function
\[extr:((\Sigma\times\Sigma\setminus\Ind)^*\rightarrow (\Sigma\cup\perp)^*)\times 
\stepalph\rightarrow((\Sigma\times\Sigma\setminus\Ind)^*\rightarrow (\Sigma\cup\perp)^*)\] 
for projection set $\Pi^\perp$ and set $B\subseteq M(\Pi^\perp)$ constructed using the 
procedure described above is defined as:

\[extr(\Pi^\perp,B)(a,b)=\left\{
\begin{array}{lcl}
\Pi^\perp(a,b) & \text{ for } & |\{a,b\}\cap B|=0\\
\suff{2}(\Pi^\perp(a,b)) & \text{ for } & |\{a,b\}\cap B|=1\\
\suff{2}(\Pi^\perp(a,b)) & \text{ for } & |\{a,b\}\cap B|=2 \wedge (a,b)\in\Ssm\\
\suff{3}(\Pi^\perp(a,b)) & \text{ for } & |\{a,b\}\cap B|=2 \wedge (a,b)\in\Wdp\cup\Wdp^{-1}\\
\end{array}
\right.
\]

\begin{exa}
Let us consider the comtrace $\tau$ from Example~\ref{e:comtrace}.

The projection representation of $\tau$ (omitting projections
to the unary subalphabets), grouped by the types of relation between the elements of 
subalphabets on which we project are:
\[
\begin{array}{lll}
\Dep: \;&\; \Pi^\perp_\tau(b,d)=db \;&\; \Pi^\perp_\tau(c,d)=d\\
\Ssm: \;&\; \Pi^\perp_\tau(a,c)=a &\\
\Wdp: \;&\; \Pi^\perp_\tau(c,b)=b \;&\; \Pi^\perp_\tau(d,a)=da
\end{array}
\]
The set of conditionally possible actions for $\Pi^\perp_\tau$ is $\{a,d\}$, while
$(a,d)\in cnd$. Every conditionally possible action is also possible,
and so $M(\Pi^\perp_\tau)=\{a,d\}$. This gives the set of two indivisible steps $(a)$ and
$(d)$ and, finally, two steps that may appear as the first step of the constructed
sequence: $(d)$ and $(ad)$.
\qed
\end{exa}

\begin{thm}\label{t:extrcorrectness}
Let $\Pi^\perp_\tau$ be the projection representation of a comtrace $\tau$, and 
$M(\Pi^\perp)$ be a maximal possible step of $\Pi^\perp_\tau$. For every allowed set 
$B\in\stepalph$, we have
\[\tau = B\circ \sigma, \text{ where }\Pi^\perp_\sigma=extr(\Pi^\perp_\tau,B).\]
\end{thm}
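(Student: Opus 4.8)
The plan is to first produce a representative of $\tau$ of the shape $Bv$ for a suitable step sequence $v$, then set $\sigma=[v]$; the equality $\tau=[Bv]=[B]\circ[v]=B\circ\sigma$ is then immediate, and $\Pi^\perp_\sigma=extr(\Pi^\perp_\tau,B)$ follows by a direct computation on projections. As a by-product this also shows that $extr(\Pi^\perp_\tau,B)$ is genuinely the projection representation of a comtrace, so no separate argument for that is needed.

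To obtain such a representative, recall that $B=\bigcup_{i\leq k}B_i$, where $B_1\ldots B_k$ is a nonempty trace prefix of the step sequence $\widehat{M(\Pi^\perp)}$ over $\widehat{\stepalph}$. Fix any $w\in\tau$ and put $v_0=\widehat{w}$; by Proposition~\ref{p:leadout} we have $v_0\in indiv(\tau)$, and by Theorem~\ref{t:trace4comtrace} the set $indiv(\tau)$ is a trace over $(\widehat{\stepalph},\widehat{\Dep})$. The crux is to check that $B_1\ldots B_k$ is a trace prefix of $indiv(\tau)$, i.e., that the indivisible steps $B_1,\ldots,B_k$ form a prefix-closed set of events of that trace. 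For this I would use the preceding theorem -- the first step of every representative of $\tau$ lies inside $M(\Pi^\perp)$, so the minimal events of $indiv(\tau)$ are among the $\equiv_{M(\Pi^\perp)}$-classes -- together with the preceding proposition, which guarantees that the union $B$ already contains, $\equiv_{M(\Pi^\perp)}$-class by class, all $\Sin^{*}$-predecessors (within $M(\Pi^\perp)$) of its elements; this makes $\widehat{M(\Pi^\perp)}$, and hence its trace prefix $B_1\ldots B_k$, an initial segment of $indiv(\tau)$. Granting this, $v_0\equiv_\Theta B_1\ldots B_k\,v'$ for some $v'\in\widehat{\stepalph}^{*}$, so $\tau=[B_1\ldots B_k]\circ[v']$.

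It remains to merge $B_1\ldots B_k$ into the single step $B$. Since $B\subseteq M(\Pi^\perp)$, the equivalence $\equiv_B$ refines $\equiv_{M(\Pi^\perp)}$ on $B$, and since each $B_i$ is indivisible with $B_i\subseteq B$ it is contained in one $\equiv_B$-class; as distinct steps of $\widehat{M(\Pi^\perp)}$ are distinct $\equiv_{M(\Pi^\perp)}$-classes, the $\equiv_B$-classes are precisely $B_1,\ldots,B_k$, so $\widehat{B}=minlex(B)$ consists of exactly these steps. The simplest way to conclude $B_1\ldots B_k\equiv_\Theta B$ is then Theorem~\ref{t:prrepr}: for $(a,b)\notin\Ind$, both $\Pi^\perp_{a,b}(B_1\ldots B_k)$ and $\Pi^\perp_{a,b}(B)$ are empty unless $a$ or $b$ lies in $B$; if exactly one does, both equal the single action in question; and if both lie in $B$, then $(a,b)\in\Ssm\cup\Wdp\cup\Wdp^{-1}$ (dependence being excluded since $B\in\stepalph$), where $(a,b)\in\Ssm$ forces $a$ and $b$ into one $B_i$ so both projections are $\perp$, while under weak dependence the relative order of the distinct $B_i$'s containing $a$ and $b$ is fixed and agrees with the weak dependence, because $B_1\ldots B_k$ is a trace prefix of $\widehat{M(\Pi^\perp)}$ whose own order is pinned down by $\widehat{M(\Pi^\perp)}\equiv_\Theta M(\Pi^\perp)$. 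Hence $B_1\ldots B_k\equiv_\Theta B$, so $\tau=[B]\circ[v']$; put $v=v'$ and $\sigma=[v']$.

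Finally, from $\tau=[Bv]$ we get $\Pi^\perp_\tau(a,b)=\Pi^\perp_{a,b}(B)\,\Pi^\perp_{a,b}(v)$ for every $(a,b)\notin\Ind$, so $\Pi^\perp_\sigma(a,b)=\Pi^\perp_{a,b}(v)$ is obtained from $\Pi^\perp_\tau(a,b)$ by deleting the prefix $\Pi^\perp_{a,b}(B)$. Checking the length of that prefix case by case: it is $0$ when $\{a,b\}\cap B=\emptyset$; it is $1$ when $|\{a,b\}\cap B|=1$, and also when $\{a,b\}\subseteq B$ with $(a,b)\in\Ssm$; it is $2$ when $\{a,b\}\subseteq B$ with $(a,b)\in\Wdp\cup\Wdp^{-1}$; and $\{a,b\}\subseteq B$ with $(a,b)\in\Dep$ cannot occur because $B\in\stepalph$. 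These are exactly the four cases in the definition of $extr$, so $\Pi^\perp_\sigma=extr(\Pi^\perp_\tau,B)$. The main obstacle is the second paragraph -- verifying that $\widehat{M(\Pi^\perp)}$, and hence the chosen trace prefix, sits as an initial segment of $indiv(\tau)$; once that is in place, the remainder is bookkeeping with projections.
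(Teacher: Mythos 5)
Your proof is correct in outline, but it takes a genuinely different route from the paper's. The paper's argument is purely projection-based and very short: invoking Theorem~\ref{t:prrepr}, it reduces the claim to the identity $\Pi^\perp_\tau(a,b)=\Pi^\perp_B(a,b)\circ\Pi^\perp_\sigma(a,b)$ for all $(a,b)\notin\Ind$, and then checks by cases ($\Dep$, $\Wdp$, $\Ssm$) that the prefix deleted by $extr$ is exactly $\Pi^\perp_B(a,b)$ --- essentially only your final ``bookkeeping'' paragraph. What the paper does \emph{not} do in this proof is exhibit a representative of $\tau$ of the form $Bv$; it implicitly treats $extr(\Pi^\perp_\tau,B)$ as the projection representation of a comtrace $\sigma$, a fact it only really addresses later, in the correctness proposition for the full reconstruction procedure. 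Your construction --- passing to $\widehat{w}\in indiv(\tau)$, using Theorem~\ref{t:trace4comtrace} to view $indiv(\tau)$ as a trace over $(\widehat{\stepalph},\widehat{\Dep})$, showing that the chosen trace prefix $B_1\ldots B_k$ of $\widehat{M(\Pi^\perp)}$ is an initial segment of that trace, and then merging $B_1\ldots B_k$ into $B$ via Theorem~\ref{t:prrepr} --- buys exactly that missing existence statement as a by-product, at the cost of the harder second paragraph. That paragraph is where the real work sits: you still need to argue that each $B_i$ occurs as a step occurrence of $indiv(\tau)$ and that $\{B_1,\ldots,B_k\}$ is downward closed in its partial order; the two preceding results you cite (that $A_1\subseteq M(\Pi^\perp)$ for every representative, and that $B$ is closed under $\Sin^*$-predecessors within $M(\Pi^\perp)$) are the right ingredients, but the step from ``the first step of every representative lies in $M(\Pi^\perp)$'' to ``$\widehat{M(\Pi^\perp)}$ is an initial segment of $indiv(\tau)$'' deserves an explicit argument (it is essentially the content of the paper's later proposition that the extraction procedure never gets stuck on a genuine projection representation). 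With that filled in, your proof is a valid and in some respects more self-contained alternative; as written, it is longer but more informative than the paper's.
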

\begin{proof}
By the Theorem \ref{t:prrepr} it is sufficient to prove that
$\Pi^\perp_\tau = \Pi^\perp_{B\circ\sigma}$. In other words, we have to show
that for all $(a,b)\notin\Ind$, we have 
$\Pi^\perp_\tau(a,b)=\Pi^\perp_B(a,b)\circ\Pi^\perp_\sigma(a,b)$.

The proof can be split in a natural way into three parts, depending on the type of 
relation between the actions being considered. Let us examine 
the projections onto $(a,b)\in\Dep$.
We have $\Pi^\perp_B(a,b)$ that is equal to the first action of $\Pi^\perp_\tau(a,b)$ if 
$|B\cap\{a,b\}|=1$, and to $\epsilon$ otherwise.
In both cases $\Pi^\perp_B(a,b)\circ \Pi^\perp_\sigma(a,b)=\Pi^\perp_\tau(a,b)$.

Almost the same proof works for the remaining two cases, when $(a,b)\in\Wdp$ or
$(a,b)\in\Ssm$.
\end{proof}

By suitably using the extraction function, we can compute any representative of
a comtrace $\tau$. In particular, similarly to the case of canonical forms,
we can do this using a maximal or minimal strategy. In the maximal strategy, we
always take the whole set $M(\Pi^\perp)$ and, as a result, 
we obtain Foata canonical form of the original comtrace. 
In the minimal strategy, we take the first step of the step sequence
$\widehat{M(\Pi^\perp)}$ and obtain the lexicographical canonical form.

The algorithm reconstructing a step sequence from a projection representation
of a comtrace follows the notions defined above. From the technical point of
view, some concrete decisions concerning data structures are worth noticing. 
The whole 
algorithm can be divided into stages. In each stage we compute a set of allowed 
steps, choose one, and extract it from the projection set. The procedure is repeated
until a projection set $\Pi^\perp_i$ or computed set $M(\Pi^\perp_i)$ become empty. 
In the first case, it returns a step sequence consisting of $n$ occurrences of 
actions. In the second
case, the algorithm returns that an input is not a projection representation of
a comtrace.

A single stage starts from computing the set of conditionally possible actions
and the relation $cnd$ describing the conditions. A good idea is to preprocess,
for every action, a list of pointers which helps to
investigate only the projections related to this action.
Doing so, we can check conditional possibility in the time linearly dependent on
the size of alphabet, denoted by $k$. Simultaneously, we build the 
directed graph of conditions. In the time linearly dependent on the number of arcs
in this graph, we remove from the set of conditionally possible actions all
impossible ones (browsing, using DFS, all paths which begin in vertices which 
are not conditionally possible). 

In the next phase, we compute a vertex induced subgraph of the $\Api$ relation
that contains all possible actions and, once more using DFS, we compute a graph
of its strongly connected components 
(called \emph{condensation graph}~\cite{Deo74}). 
The condensation graph is an acyclic directed graph of the partial order 
of the sequential trace associated with $\widehat{M(\Pi^\perp)}$. 
We choose an arbitrary upper set of the condensation graph,
that corresponds to the trace prefix of $\widehat{M(\Pi^\perp)}$. 
To obtain Foata canonical form, we take the maximal 
upper set by choosing the whole condensation graph. If we wish to
obtain the lexicographical canonical form, we should choose the 
$\widehat{\leq}$-smallest allowed step. To compute it, we may consider only
the maximal elements of provided condensation graph.
They correspond to the elements of $\widehat{\stepalph}$ which may be placed
at the first positions in the sequential trace $\widehat{M(\Pi^\perp)}$.

In the last phase, we need to extract the chosen allowed step. We do it
according to the definition of the extraction operation. During this phase, we
can once more use the precomputed lists of pointers.

\begin{prop}
Projection set $\Pi^\perp$ is the projection representation of a comtrace if and only if
the procedure described above ends with the empty projection set.
\end{prop}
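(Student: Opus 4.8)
The plan is to regard the procedure as a loop that, starting from $\Pi^\perp_0=\Pi^\perp$, repeatedly forms $\Pi^\perp_i=extr(\Pi^\perp_{i-1},B_i)$ for an allowed step $B_i\subseteq M(\Pi^\perp_{i-1})$ chosen as prescribed, and to establish three things: the loop always halts; if $\Pi^\perp$ is a projection representation then it halts in the ``success'' configuration (empty projection set); and, conversely, success forces $\Pi^\perp$ to be a projection representation. I would first record two preliminaries. Each $M(\Pi^\perp_i)$ is a genuine step: if distinct $a,b\in M(\Pi^\perp_i)$ had $(a,b)\in\Dep$, then conditional possibility of $a$ and of $b$ (together with $\Pi^\perp_i(a,b)=\Pi^\perp_i(b,a)$) would force $\pref{1}(\Pi^\perp_i(a,b))$ to equal both $a$ and $b$, which is absurd; hence $\widehat{M(\Pi^\perp_i)}$ and the choice of an allowed $B_i$ always make sense. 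For halting, use the measure $\mu(\Pi')=\sum_{(a,b)\notin\Ind}|\Pi'(a,b)|$: at a non-terminal stage $B_i$ is nonempty, and for $a\in B_i$ we have $(a,a)\in\Dep$, so conditional possibility gives $\pref{1}(\Pi^\perp_{i-1}(a,a))=a$ and $extr$ replaces this projection by $\suff{2}(\Pi^\perp_{i-1}(a,a))$, which is strictly shorter; thus $\mu$ strictly decreases and the loop reaches a terminal configuration (either $\Pi^\perp_i$ empty, or $M(\Pi^\perp_i)=\emptyset$ with $\Pi^\perp_i$ nonempty).

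For the implication ``projection representation $\Rightarrow$ success'', I would carry the invariant that $\Pi^\perp_i$ is the projection representation of a comtrace $\tau_i$, starting from $\tau_0=\tau$. If $\tau_i=[\lambda]$ the procedure has already succeeded. Otherwise pick a representative $A_1\ldots A_k\in\tau_i$ (with $A_1\neq\emptyset$, as steps are nonempty); the theorem stating $A_1\subseteq M(\Pi^\perp_i)$ shows the stage is not terminal, and Theorem~\ref{t:extrcorrectness} yields that $\Pi^\perp_{i+1}=extr(\Pi^\perp_i,B_i)$ is the projection representation of a comtrace $\tau_{i+1}$ with $\tau_i=B_i\circ\tau_{i+1}$; since $B_i\neq\emptyset$, $|\tau_{i+1}|<|\tau_i|$. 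So the failure configuration is never reached, and by halting the run ends with the empty projection set.

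For the converse, suppose a run ends with $\Pi^\perp_m$ empty, having extracted allowed steps $B_1,\ldots,B_m$; put $w=B_1\ldots B_m$. I would prove, by induction on $j$, that $\Pi^\perp(a,b)=\Pi^\perp_{a,b}(B_1\ldots B_j)\circ\Pi^\perp_j(a,b)$ for all $(a,b)\notin\Ind$; evaluating at $j=m$, where $\Pi^\perp_m(a,b)=\epsilon$, gives $\Pi^\perp(a,b)=\Pi^\perp_{a,b}(w)$ for all $(a,b)\notin\Ind$, so $\Pi^\perp$ is the projection representation of the comtrace $[w]$ by Theorem~\ref{t:prrepr}. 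The inductive step collapses to the one-stage identity $\Pi^\perp_{i-1}(a,b)=\Pi^\perp_{B_i}(a,b)\circ\Pi^\perp_i(a,b)$, and this is where I expect the main obstacle to be: the equality is precisely the content of the case analysis inside the proof of Theorem~\ref{t:extrcorrectness}, but that theorem assumes $\Pi^\perp_{i-1}$ already is a projection representation of a comtrace, which is not available here. I therefore expect to have to re-examine that case analysis and verify that it uses only the definition of $extr$, the conditional-possibility conditions on the actions of $B_i$, and the structural proposition just before Theorem~\ref{t:extrcorrectness} --- which, via Lemma~\ref{l:division}, forces $B_i$ to be closed under the relation $cnd$: if $a\in B_i$ and $(a,b)\in cnd$ then $b\in M(\Pi^\perp_{i-1})$ (closure of the set of impossible actions) and, because $(a,b)\in\Sin$, the whole class of $b$ under $\equiv_{M(\Pi^\perp_{i-1})}$ is drawn into $B_i$. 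All of these facts hold for an arbitrary projection set, so the identity survives without the comtrace hypothesis and the argument closes.
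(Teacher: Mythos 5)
Your proposal is correct and follows the same overall strategy as the paper: one direction rests on the fact that the first step of any representative is contained in $M(\Pi^\perp)$, the other on the one-stage extraction identity $\Pi^\perp_{i-1}(a,b)=\Pi^\perp_{B_i}(a,b)\circ\Pi^\perp_i(a,b)$. Where you genuinely add value is in the two places the paper is loose. First, the paper never argues termination beyond ``the input data is finite''; your measure $\mu(\Pi')=\sum_{(a,b)\notin\Ind}|\Pi'(a,b)|$, which strictly decreases because each $a\in B_i$ forces $\pref{1}(\Pi^\perp_{i-1}(a,a))=a$ to be consumed, closes that. Second, and more importantly, the paper disposes of the ``success $\Rightarrow$ projection representation'' direction with a bare appeal to Theorem~\ref{t:extrcorrectness}; but that theorem is stated under the hypothesis that the projection set already \emph{is} the projection representation of a comtrace, which is exactly what is to be proved, so the appeal is circular as written. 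You correctly isolate this, and your proposed repair is sound: the case analysis behind the extraction identity uses only (i) the definition of $extr$, (ii) the conditional-possibility clauses for the actions of $B_i$ (e.g.\ for $a\in B_i$, $b\notin B_i$ with $(a,b)\in\Wdp$, the alternative $\pref{2}(\Pi^\perp_{i-1}(a,b))=ba$ would put $(a,b)\in cnd$ and, since $\Wdp\subseteq\Sin$, drag $[b]_{\equiv_{M(\Pi^\perp_{i-1})}}$ into $B_i$ via the proposition preceding Theorem~\ref{t:extrcorrectness}, contradicting $b\notin B_i$), and (iii) the closure of $B_i$ under $cnd$ --- none of which require the comtrace hypothesis. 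Chaining the identity over all stages and invoking Theorem~\ref{t:prrepr} then yields $\Pi^\perp=\Pi^\perp_{[B_1\ldots B_m]}$. So the proposal is not merely correct; it supplies the argument the paper's proof tacitly presupposes.
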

\begin{proof}
We give the proof only for the case when the maximal strategy is used.
Note that the input data is finite and the procedure stops when the set
$M(\Pi^\perp)$ is empty for the remaining set of words. From Theorem
\ref{t:extrcorrectness} we deduce that if the remaining projection set
is empty then the input is the projection representation of the constructed 
comtrace.
Suppose that we have nonempty projection set $\Pi^\perp$ that is a
projection representation of comtrace $\tau$ and empty set 
of allowed actions. Let us consider an arbitrary step sequence
$u=A_1\ldots A_n$ that is contained in $\tau$, and an arbitrary action $a$ 
contained in $A_1$.
Then, by the definition of projection representation, the action $a$
has to be possibly allowed. This proves that $A_1\subseteq cpa$.
Moreover, since in any projection before, or simultaneously with, $a$
may occur only other action from the step $A_1$, 
if the existence of action $b$ is a necessary condition for the
presence of action $a$ (i.e. $(a,b)\in cnd$), then $b$ is also an
element of $A_1$. Therefore, none of the actions from step $A_1$ is impossible,
which contradicts the emptiness of the set of allowed actions and 
ends the proof.
\end{proof}

\begin{thm}
The procedure of computing canonical forms from a projection representation of
a comtrace has the time complexity of $O(nk^2)$.
\end{thm}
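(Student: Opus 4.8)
The plan is to bound the running time stage by stage. First I would note that the procedure is organised into stages, each of which extracts from the current projection set a nonempty step $B$ and then recurses on the extracted projection set. Since the input is the projection representation of a comtrace with $n$ occurrences of atomic actions in total, the reconstructed step sequence also has $n$ occurrences, distributed among nonempty steps; hence the procedure performs at most $n$ extraction stages, plus at most one trivial final step that detects that the remaining projection set is empty. Consequently there are $O(n)$ stages, and it suffices to show that a single stage runs in time $O(k^2)$, where $k=|\Sigma|$; the one-off preprocessing of the lists of pointers from each action to the projections it occurs in is itself $O(k^2)$ and does not affect the bound.

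Next I would walk through the phases of a single stage and bound each by $O(k^2)$. Computing the set $cpa$ of conditionally possible actions together with the relation $cnd$ requires only inspecting the prefixes $\pref{1}(\Pi^\perp(a,b))$ and $\pref{2}(\Pi^\perp(a,b))$ of the at most $k^2$ projections, a constant amount of work per projection, hence $O(k^2)$. Removing the impossible actions is a single depth-first search in the condition graph, which has at most $k$ vertices and at most $k^2$ arcs, so $O(k^2)$. Building the condensation graph of $\Api$ restricted to $M(\Pi^\perp)$ is a strongly-connected-components computation on a graph with $O(k)$ vertices and $O(k^2)$ arcs, again $O(k^2)$; choosing the upper set is then either immediate (the whole condensation graph, for the maximal strategy, giving the Foata form) or, for the minimal strategy, the selection of the $\widehat{\leq}$-smallest allowed step among the $O(k)$ maximal components, each comparison costing $O(k)$, so still $O(k^2)$. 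Finally, the extraction $extr(\Pi^\perp,B)$ deletes a prefix of length at most $3$ from each of the at most $k^2$ projections, which is $O(k^2)$. Multiplying $O(k^2)$ per stage by $O(n)$ stages yields the claimed $O(nk^2)$.

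The point that needs care — and which I expect to be the main obstacle — is ensuring that no phase secretly rereads projection contents whose total length is of order $nk$ once in every stage, which would worsen the bound to $O(n^2k)$. The resolution is that within a stage one never reads more than the first two or three symbols of any single projection; the only place where long portions of projections are touched is the extraction step, and the symbols it deletes in one stage are never examined again, so the total extraction work over the whole run telescopes to $O(nk)$, comfortably inside the $O(nk^2)$ budget. Correctness of the output — that the maximal strategy produces the Foata canonical form and the minimal strategy the lexicographical canonical form — is not part of this complexity claim and follows from Theorem~\ref{t:extrcorrectness} together with the discussion of the two strategies; here I would use only the termination fact that each stage strictly decreases the number of remaining action occurrences, which is what justifies the $O(n)$ bound on the number of stages.
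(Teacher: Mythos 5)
Your proposal is correct and follows essentially the same route as the paper's own (very terse) argument: bound the number of stages by $n$ via the nonemptiness of each extracted step, and bound each stage by $O(k^2)$ through the list inspections, the DFS on the condition graph, the condensation-graph computation, and the extraction. The additional observations you make — that only constant-length prefixes of projections are read per stage and that the deletion work telescopes — are refinements the paper leaves implicit in its data-structure discussion, not a different approach.
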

\proof
The procedure consist of at most $n$ stages. In each part, we carry out
some operations
on at most $k^2$ lists and graph of size $k^2$. All graph operations, including
computing the compensation graph and choosing minimal or maximal upper set are
linear in the size of graph. This gives an overall time complexity of $O(nk^2)$.
\qed

\subsection{Traces as a subclass of comtraces}

In Section 1 we defined EN-systems as a special
case of ENI-systems without inhibitors and with the sequential semantics. 
We also introduced traces as a model of the causal behaviour of EN-systems.
In this section, we show what kind of comtraces are directly related to 
systems without inhibitors. 

A comtrace alphabet $\Theta=(\Sigma,\Sim,\Ser)$ with the empty relation $\Api$ is called \emph{\tracelike comtrace alphabet}. Moreover, comtraces over this alphabet are called \emph{\tracelike comtraces}.
The radicalism of such comtraces means that
the actions may be only dependent or independent, 
hence they behave exactly like step traces.
Later in this section we discuss some properties of this subclass.

\begin{prop}\label{l:singletons}
Let $\tau\in\stepalph^*$ be a \tracelike comtrace and $w\in indiv(\tau)$. Then each step of $w$ is a singleton.
\end{prop}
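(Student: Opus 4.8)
The plan is to unwind the definitions and reduce the claim to the observation that in a radical comtrace alphabet the semi-independence relation $\Api$ is empty, so the relation $\equiv_A$ on any step $A$ is trivial. First I would recall that $w\in indiv(\tau)$ means $w$ is a step sequence of the comtrace $\tau$ built entirely from indivisible steps, so it suffices to show that every indivisible step over a radical comtrace alphabet is a singleton, i.e. $\widehat{\stepalph}\subseteq\{\{a\}\pipe a\in\Sigma\}$.

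So let $A\in\widehat{\stepalph}$ be an indivisible step. By definition of indivisibility, $\forall_{a,b\in A}\;a\equiv_A b$, where $a\equiv_A b$ holds iff $(a,b)\in(\Api|_A)^\circledast$. Now I would use the hypothesis that $\Theta$ is radical, which by definition means $\Api=\emptyset$; hence $\Api|_A=\emptyset$ as well. Consequently $(\Api|_A)^*=I_A$ (the identity on $A$), and therefore $(\Api|_A)^\circledast=I_A$. This means $a\equiv_A b$ holds if and only if $a=b$. Combining this with the indivisibility condition $\forall_{a,b\in A}\;a\equiv_A b$, we get that for all $a,b\in A$ we have $a=b$, i.e. $A$ contains at most one element. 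Since steps are nonempty by definition, $|A|=1$, so $A$ is a singleton.

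To finish, I would note that this applies to every step occurring in $w$: if $w=A_1\ldots A_n$ with each $A_i\in\widehat{\stepalph}$, then by the argument above each $A_i$ is a singleton, which is exactly the assertion of the proposition. I do not expect any serious obstacle here — the only mildly delicate point is checking that $R^\circledast=I$ when $R=\emptyset$, which follows immediately since $\emptyset^*=I$ and $I$ is an equivalence relation, so the largest equivalence relation inside $I$ is $I$ itself. Everything else is direct substitution into the definitions from Section 3 and the preceding relational preliminaries.
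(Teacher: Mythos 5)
Your proof is correct and follows the same route as the paper: both arguments reduce to the observation that $\Api=\emptyset$ forces $\equiv_A$ to be the identity on every step $A$, so indivisible steps are singletons. You merely spell out the computation $(\Api|_A)^\circledast=I_A$ that the paper leaves implicit.
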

\begin{proof}
The proof is straightforward. Notice that since the relation $\Api$ is empty,
every action $a$ of every step $A\in\stepalph$ forms an indivisible step.
Hence all indivisible steps are singletons, which ends the proof.
\end{proof}

\begin{cor}\label{c:tracelikealph}
Let $\Theta=(\Sigma,\Sim,\Ser)$ be a \tracelike comtrace alphabet. 
Then 
\[lex(\widehat{\stepalph})=\Sigma.\]
\end{cor}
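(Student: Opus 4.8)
The plan is to determine precisely which steps are indivisible when the semi-independence relation is empty, and then to apply $lex$ to them directly; both parts are short.

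First I would identify $\widehat{\stepalph}$. Since $\Api=\emptyset$, for every step $A\in\stepalph$ the restriction $\Api|_A$ is empty, so $(\Api|_A)^*=I_A$ and hence $\equiv_A=(\Api|_A)^\circledast=I_A$. Thus $A$ is indivisible, i.e.\ satisfies $\forall_{a,b\in A}\,a\equiv_A b$, exactly when $A$ consists of a single action; conversely every singleton $\{a\}$ is a (trivially) indivisible step. This is the same observation as in the proof of Proposition~\ref{l:singletons}, so I would simply cite it. Consequently $\widehat{\stepalph}=\{\{a\}\pipe a\in\Sigma\}$.

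Then I would compute $lex$ on this set. By definition $lex(\{a\})=min(\{a\})\,lex(\emptyset)=a$, so under the extension $lex(X)=\{lex(w)\pipe w\in X\}$ we get $lex(\widehat{\stepalph})=\{lex(\{a\})\pipe a\in\Sigma\}=\{a\pipe a\in\Sigma\}$, which is $\Sigma$ once length-one words are identified with letters.

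There is essentially no obstacle here; the only point requiring a little care is keeping the two readings of a step --- as an element of $\stepalph$ and as a length-one step sequence --- aligned, so that the outer $lex$ (on sets) and the inner $lex$ (on steps) compose as intended. But this is exactly how $lex$ was set up in Section~\ref{s:stepTraces}, so nothing new is needed.
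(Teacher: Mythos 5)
Your proposal is correct and matches the paper's (implicit) reasoning: the corollary is stated without proof precisely because, as you note, Proposition~\ref{l:singletons} already establishes that $\Api=\emptyset$ forces $\equiv_A=I_A$, so $\widehat{\stepalph}$ consists exactly of the singletons, and applying $lex$ to these gives $\Sigma$. Nothing further is needed.
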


Note that since the relation $\Api$ is empty and all steps are
singletons, for all steps $A,B\in\widehat{\stepalph}$ we have 
$(A,B)\in\widehat{\Dep}$ if and only if $(lex(A),lex(B))\in\Dep$.

Using Theorem \ref{t:trace4comtrace} and Lemma \ref{c:tracelikealph} we can 
associate an alphabet of indivisible steps $\widehat{\stepalph}$ with 
$\Sigma$ and
\tracelike comtrace $\tau$ over a comtrace alphabet $\Theta=(\Sigma,\Sim,\Ser)$
with a step trace $\sigma$ over the concurrent alphabet
$\Psi=(\Sigma,\Ind)$. 
We say that such a step trace $\sigma$ is a \emph{trace representation} of 
a \tracelike comtrace $\tau$.
The following facts show this correspondence in details. 

\begin{prop}
Let $\Theta=(\Sigma,\Sim,\Ser)$ be a \tracelike comtrace alphabet.
Then a set $A\subseteq\Sigma$ is a step in $\Theta$ if and only if 
$A$ is a step in $\Psi=(\Sigma,\widehat{\Dep})$.
\end{prop}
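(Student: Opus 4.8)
The plan is to unfold both notions of ``step'' and observe that they are literally the same condition once the radical hypothesis $\Sin=\emptyset$ is in force. Recall that a set $A\subseteq\Sigma$ is a step in $\Theta$, i.e.\ $A\in\stepalph_\Theta$, precisely when $A\neq\emptyset$ and $(a,b)\in\Sim$ for all distinct $a,b\in A$; while $A$ is a step in the concurrent alphabet $\Psi=(\Sigma,\widehat{\Dep})$, i.e.\ $A\in\stepalph_\Psi$, precisely when $A\neq\emptyset$ and $(a,b)\in\widehat{\Ind}$ for all distinct $a,b\in A$. So the whole statement reduces to showing that, for distinct $a,b\in\Sigma$, we have $(a,b)\in\Sim$ if and only if $(a,b)\in\widehat{\Ind}$ --- here identifying singleton indivisible steps with their elements via Corollary~\ref{c:tracelikealph}, which is legitimate because Proposition~\ref{l:singletons} guarantees every indivisible step is a singleton in the radical case.

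First I would establish one direction: if $(a,b)\in\Sim$, then since $\Sin=\Sim\setminus\Ser=\emptyset$ we get $\Sim\subseteq\Ser$, hence $(a,b)\in\Ser$; by symmetry of $\Sim$ also $(b,a)\in\Sim\subseteq\Ser$, so $(a,b)\in\Ser\cap\Ser^{-1}=\Ind$. Lifting to singleton steps, $\{a\}\times\{b\}\subseteq\Ind$, which is exactly $(\{a\},\{b\})\in\widehat{\Ind}$. Conversely, if $(\{a\},\{b\})\in\widehat{\Ind}$ then $(a,b)\in\Ind=\Ser\cap\Ser^{-1}\subseteq\Ser\subseteq\Sim$. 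Combining, $\Sim=\Ind$ in a radical comtrace alphabet, and a nonempty set is pairwise $\Sim$ exactly when it is pairwise $\Ind$, i.e.\ pairwise $\widehat{\Dep}$-independent. The empty set is excluded on both sides by definition, so the two step alphabets coincide.

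The only mildly delicate point --- and the place I would be most careful --- is the bookkeeping identification between a set $A\subseteq\Sigma$ viewed as a candidate step over $\Sigma$ and the same set viewed as a set of singleton indivisible steps over $\widehat{\stepalph}$. The remark just before this proposition already records that $(A,B)\in\widehat{\Dep}$ iff $(lex(A),lex(B))\in\Dep$ for indivisible steps $A,B$, and Corollary~\ref{c:tracelikealph} gives $lex(\widehat{\stepalph})=\Sigma$, so the translation is a bijection that respects the (in)dependence relations; once this is invoked the argument is purely a matter of chasing the definitions of $\Sim$, $\Sin$, $\Ser$ and $\Ind$ through the equality $\Sin=\emptyset$. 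I would phrase the final proof in two short implications as above, explicitly citing Proposition~\ref{l:singletons} for ``all indivisible steps are singletons'' and the displayed remark for ``$\widehat{\Dep}$ on singletons is $\Dep$,'' so the reader sees that no further work is hidden.
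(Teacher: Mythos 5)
Your argument is correct and matches the paper's proof, which likewise reduces the claim to showing $\Sim=\Ind$ via $\Api=\emptyset\Rightarrow\Sim=\Ser$ and the symmetry of $\Sim$, giving $\Ind=\Ser\cap\Ser^{-1}=\Ser=\Sim$. The extra care you take over identifying elements of $\Sigma$ with singleton indivisible steps is a reasonable addition but not a different method; the paper leaves that identification implicit.
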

\begin{proof}
It is sufficient to prove that $\Sim=\Ind$.

Indeed, since $\Api$ is empty, we have $\Sim\setminus\Ser=\emptyset$,
hence by $\Ser\subseteq\Sim$ we get $\Sim=\Ser$.
Recall that $\Sim$ is symmetric, and so is $\Ser$.
By the definition of relations in comtraces, 
\[\Ind=\Ser\cap\Ser^{-1}=\Ser=\Sim.\]
\end{proof}

\begin{thm}
Let $\tau$ be a \tracelike comtrace and $\sigma$ be its step trace representation.
Then $\Pi^\perp_\tau=\Pi_\sigma$.
\end{thm}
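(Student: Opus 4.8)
The plan is to show that for every pair $(a,b)\notin\Ind$ the sequences $\Pi^\perp_\tau(a,b)$ and $\Pi_\sigma(a,b)$ coincide. By the previous proposition we know that for a \tracelike comtrace alphabet $\Sim=\Ser=\Ind$, so the relation $\Api=\Sim\setminus\Ser$ is empty; consequently $\Ssm$ and $\Wdp$ are both empty as well, and the only non-independent pairs $(a,b)\notin\Ind$ are the dependent ones $(a,b)\in\Dep$, together with the unary diagonal pairs $(a,a)$. Thus $\Pi^\perp_\tau$ and $\Pi_\sigma$ have the same domain, namely $\Dep\cup\{(a,a)\pipe a\in\Sigma\}$, and in particular the symbol $\perp$ never appears in any value of $\Pi^\perp_\tau$ because the clauses introducing $ba$, $ab$ and $\perp$ in the definition of $\Pi^\perp_{a,b}(A)$ require $(a,b)\in\Wdp$ or $(a,b)\in\Ssm$, which are vacuous here.

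First I would fix a step sequence $w=A_1\ldots A_n\in\tau$; by Corollary~\ref{c:tracelikealph} (and the construction preceding the statement) the same $w$, read as a step sequence over $\Psi=(\Sigma,\Ind)$, belongs to the step trace representation $\sigma$. It therefore suffices to prove, for each single step $A\in\stepalph$ and each $(a,b)$ in the common domain, that $\Pi^\perp_{a,b}(A)=\Pi_{a,b}(A)$, since both projection operators are defined stepwise by concatenation over $A_1,\ldots,A_n$. For a dependent pair $(a,b)\in\Dep$, at most one of $a,b$ lies in the step $A$ (no step contains two dependent actions), so the relevant clause of $\Pi^\perp_{a,b}$ is either the $\epsilon$-clause (when $\{a,b\}\cap A=\emptyset$) or the single-action clause $a$ (when $a\in A\wedge b\notin A$), and symmetrically; this matches exactly the definition of $\Pi_{a,b}(A)$ for step sequences representing step traces. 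For a unary pair $(a,a)$ the remark following the definition of $\Pi^\perp$ gives $\Pi^\perp_{a,a}(A)=\epsilon$ if $a\notin A$ and $=a$ if $a\in A$, which again is precisely $\Pi_{a,a}(A)$.

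Finally, concatenating over the steps of $w$ yields $\Pi^\perp_{a,b}(w)=\Pi_{a,b}(w)$ for every pair in the common domain; since $\Pi^\perp_\tau(a,b)=\Pi^\perp_{a,b}(w)$ and $\Pi_\sigma(a,b)=\Pi_{a,b}(w)$ by definition of the two projection representations (and these are well-defined by Theorem~\ref{t:prrepr} and Theorem~\ref{t:prReprStTr} respectively), we conclude $\Pi^\perp_\tau=\Pi_\sigma$. I do not expect a genuine obstacle here; the only point requiring a little care is making explicit that the domains agree, i.e.\ verifying $\Dep\cup\{(a,a)\}$ is exactly $(\Sigma\times\Sigma)\setminus\Ind$ in the \tracelike case and that $\perp$ cannot occur, so that a literal comparison of the two representations even makes sense.
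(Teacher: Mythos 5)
Your proposal is correct and follows essentially the same route as the paper: observe that $\Api=\emptyset$ forces $\Sim=\Ser=\Ind$, so the domain of $\Pi^\perp_\tau$ collapses to $\Dep$ and only the $\epsilon$- and single-action clauses of $\Pi^\perp_{a,b}$ can fire, whence the two projections agree step by step. The paper's proof is just a terser statement of the same argument; your version usefully makes explicit the clause-by-clause comparison and the fact that $\perp$ can never occur.
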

\begin{proof}
Let $\tau=[A_1\ldots A_n]=\sigma$. 
The relation $\Api$ is empty, so in the case of comtraces we consider
only projections to pair of actions that are dependent.
As a result, we conclude that the projections on the same pairs of actions 
are the same, no matter whether we consider comtraces or step traces, 
$\Pi^\perp_{a,b}(\tau)=\Pi_{a,b}(\sigma)$ for every $(a,b)\in\Dep$, 
hence $\Pi^\perp_\tau=\Pi_\sigma$.
\end{proof}

\begin{cor}
Let $\tau\in\stepalph^*$ be a \tracelike comtrace and 
$\sigma$ be its step trace representation. 
Their canonical forms (both lexicographical and Foata) are equal. 
\end{cor}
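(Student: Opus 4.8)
The plan is to first show that $\tau$ and $\sigma$ are literally the \emph{same} set of step sequences, and then to read the equality of both flavours of canonical form off their definitions. For the first step I would use that, for a \tracelike comtrace alphabet, $\Api=\Sim\setminus\Ser=\emptyset$, so $\Ser=\Sim$ and (as already observed) $\Ind=\Ser\cap\Ser^{-1}=\Sim$. Hence the step alphabet $\stepalph_\Theta$ coincides with the step alphabet of $\Psi=(\Sigma,\Ind)$, and the generating relation of the comtrace congruence -- which joins $uABz$ to $u(A\cup B)z$ whenever $A\times B\subseteq\Ser$ -- has exactly the same symmetric closure as the generating relation $\sim^\stepalph_\Psi$ of step trace equivalence, which joins/splits whenever $A\times B\subseteq\Ind$. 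Since $\equiv_\Theta$ and $\equiv^\stepalph_\Psi$ are the reflexive, symmetric and transitive closures of these relations, they are the same equivalence on $\stepalph^*$; together with Theorem~\ref{t:trace4comtrace}, Corollary~\ref{c:tracelikealph}, Theorem~\ref{t:stepEQsequential} and Proposition~\ref{p:leadout} (which puts $\widehat{w}$ into $indiv(\tau)$, hence into $\sigma$, for every $w\in\tau$), this identifies $\sigma$ with the $\equiv_\Theta$-class of any $w\in\tau$, i.e.\ $\tau=\sigma$. (Alternatively, one can reach $\tau=\sigma$ through the projection characterisations of Theorems~\ref{t:prrepr} and~\ref{t:prReprStTr}, noting that for $\Api=\emptyset$ the map $\Pi^\perp_{a,b}$ restricted to dependent pairs coincides with $\Pi_{a,b}$.)

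Once $\tau=\sigma$ as sets of step sequences, the Foata canonical forms coincide immediately: both are characterised by the same condition on a step sequence $A_1\ldots A_n$ -- no prefix $A$ of a step sequence congruent to a suffix $A_i\ldots A_n$ may strictly enlarge $A_i$ -- now evaluated with respect to the common congruence, and each class contains a unique such step sequence.

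For the lexicographical canonical form the subtlety -- and the step I expect to be the main obstacle -- is that the two definitions look different: $minlex(\tau)$ is the $\widehat{\leq}$-least step sequence in $\tau$, whereas the lexicographical canonical form of the step trace $\sigma$ is its $\leq$-least \emph{singleton-based} representative, read as a word over $\Sigma$. I would close the gap in two moves. First, $minlex(\tau)$ consists of singletons only: all its steps are indivisible by the earlier proposition stating $minlex(\tau)\in indiv(\tau)$, and by Proposition~\ref{l:singletons} every indivisible step of a \tracelike comtrace is a singleton. Second, all singleton-based representatives of $\tau$ have the same length $n$ (the number of action occurrences), and on such step sequences the lexicographic order induced by $\widehat{\leq}$ collapses to the ordinary lexicographic order on the corresponding words over $\Sigma$: two singletons $(a)$ and $(b)$ have equal size, so $(a)\widehat{\leq}(b)$ iff $a\leq b$, and with equal lengths there is no prefix case. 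Combining the two, the $\widehat{\leq}$-least element of $\tau$ is a singleton representative, hence it is also the $\leq$-least singleton representative of $\tau=\sigma$, which is exactly the lexicographical canonical form of $\sigma$. Everything besides this reconciliation is routine bookkeeping once $\tau=\sigma$ is in hand.
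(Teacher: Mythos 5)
Your proposal is correct and follows essentially the route the paper intends: the paper states this corollary without proof, immediately after establishing $\Pi^\perp_\tau=\Pi_\sigma$ and alongside the observation that $\tau=\sigma$ as sets of step sequences, from which the equality of canonical forms is meant to follow. Your additional care in reconciling the two differently phrased definitions of the lexicographical canonical form (the $\widehat{\leq}$-least step sequence versus the $\leq$-least singleton-based representative, via indivisibility and Proposition~\ref{l:singletons}) fills in a detail the paper leaves implicit, and is exactly the right way to close that gap.
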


\begin{cor}\label{c:TLcorrespondence}
The correspondence between comtraces over 
$\Theta=(\Sigma,\Sim,\Ser)$
and traces over $\Psi=(\widehat{\stepalph},\widehat{\Dep})$ 
(see Corollary \ref{c:correspondence}) 
collapses in case of \tracelike comtraces to 
\[\begin{array}{ccc}
\tau & &\sigma\\
over & \longleftrightarrow & over\\
\Theta & &\Psi
\end{array}\;,\]
where $\tau=\sigma$ as sets of step sequences.
\end{cor}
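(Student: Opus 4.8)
The plan is to show that for a \tracelike comtrace alphabet the comtrace congruence $\equiv_\Theta$ is \emph{literally} the step trace congruence over $\Psi=(\Sigma,\Ind)$, so that comtraces over $\Theta$ and step traces over $\Psi$ are one and the same objects and the chain of Corollary~\ref{c:correspondence} degenerates. First I would record the consequences of the hypothesis $\Api=\emptyset$: since $\Api=\Sim\setminus\Ser$ and $\Ser\subseteq\Sim$, this forces $\Ser=\Sim$, and as $\Sim$ is symmetric we get $\Ind=\Ser\cap\Ser^{-1}=\Ser=\Sim$ (the same computation as in the proposition just above the corollary). In particular a subset $A\subseteq\Sigma$ is a step over $\Theta$ precisely when all its distinct pairs lie in $\Ind$, so $\stepalph_\Theta$ coincides with the step alphabet attached to the concurrent alphabet $\Psi=(\Sigma,\Ind)$.

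Next I would compare the two congruences on this common step alphabet. Both $\equiv_\Theta$ and $\equiv^\stepalph_{(\Sigma,\Ind)}$ are the reflexive, symmetric and transitive closures of the single-step rewriting $uABz\leftrightarrow u(A\cup B)z$; the side condition is $A\times B\subseteq\Ser$ in the comtrace case and $A\times B\subseteq\Ind$ in the step trace case. Because $\Ser=\Ind$, these generating relations are identical, hence so are their closures, i.e.\ $\equiv_\Theta=\,\equiv^\stepalph_{(\Sigma,\Ind)}$. From this it follows at once that every comtrace over $\Theta$ is, as a set of step sequences, a step trace over $\Psi$ and conversely; the required correspondence is simply $\tau\mapsto\tau$, giving $\tau=\sigma$ as sets of step sequences.

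Finally I would check that this identity map is exactly what is obtained by collapsing the general correspondence $\tau\xlongrightarrow{\widehat{ }}\widehat{\tau}\leftrightarrow\tau'$ of Corollary~\ref{c:correspondence}. Here I would use Proposition~\ref{l:singletons} (equivalently, observe that $\Api|_A=\emptyset$ makes $\equiv_A$ the identity on each step $A$) to conclude that the indivisible steps of $\Theta$ are exactly the singletons, so that $lex$ is a bijection $\widehat{\stepalph}\to\Sigma$ identifying $\widehat{\Dep}$ with $\Dep$ (Corollary~\ref{c:tracelikealph} and the remark following it). Under this identification the split operator fixes every singleton-only step sequence, and $indiv(\tau)$ is precisely the set of singleton-only step sequences of $\sigma=\tau$; this is the sequential trace over $\widehat{\stepalph}$ emitted by the middle arrow, and by Theorem~\ref{t:stepEQsequential} it determines a unique step trace over $\Psi$ -- namely $\tau$ itself, since $\tau$ and $\sigma$ already share the member $indiv(\tau)$. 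Hence the composite of Corollary~\ref{c:correspondence}, read through $\widehat{\stepalph}\cong\Sigma$, is the identity, as claimed. I expect the only friction to be this last bit of bookkeeping -- keeping straight that the intermediate object of the general correspondence is a \emph{sequential} trace over $\widehat{\stepalph}$, whereas the collapsed statement speaks of a \emph{step} trace over $(\Sigma,\Ind)$ equal to the comtrace -- while the substantive content is the one-line identity $\Ser=\Ind=\Sim$.
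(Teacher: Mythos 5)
Your proof is correct, and it takes a more direct route than the one the paper implicitly relies on. The paper leaves this corollary without an explicit proof; it is meant to follow from the preceding chain in that subsection, chiefly the theorem $\Pi^\perp_\tau=\Pi_\sigma$ combined with the two projection characterisations of equivalence (Theorem~\ref{t:prrepr} for comtraces and Theorem~\ref{t:prReprStTr} for step traces): since the projection representations coincide and both congruences are determined by projections, the equivalence classes coincide. You instead prove the identity of the congruences at the level of their generators: $\Api=\emptyset$ forces $\Ser=\Sim=\Ind$, so $\stepalph_\Theta$ equals the step alphabet of $(\Sigma,\Ind)$ and the single-step rewrites $uABz\leftrightarrow u(A\cup B)z$ carry literally the same side condition in both settings, whence $\equiv_\Theta$ and $\equiv^\stepalph_\Psi$ are the same relation and $\tau=\sigma$ outright. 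This is more elementary (no projections needed) and arguably sharper, since it exhibits the two monoids as identical rather than merely isomorphic via invariants; what the paper's route buys in exchange is that it reuses machinery already established and simultaneously yields the equality of projection representations and canonical forms stated just before the corollary. Your final paragraph, checking that the identity map is what Corollary~\ref{c:correspondence} degenerates to --- using Proposition~\ref{l:singletons}, the bijection $lex:\widehat{\stepalph}\to\Sigma$ matching $\widehat{\Dep}$ with $\Dep$, and Theorem~\ref{t:stepEQsequential} to pass between the sequential trace $indiv(\tau)$ and the step trace $\sigma$ --- is exactly the bookkeeping the paper glosses over, and you handle the sequential-versus-step subtlety correctly.
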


\section{Summary and future work}

In this paper we presented a number of algebraic aspects of combined traces.
Similar algebraic tools were successfully used in the study of the 
Mazurkiewicz traces, a simpler model for capturing and analysing concurrent
behaviours.

In particular, we defined lexicographical canonical form of a comtrace
and its projection representation. We gave two simple algorithms which
generate these representations from arbitrary step sequence. Those
algorithms seem to have the potential to provide a base for the development of
solutions to some natural problems related to the comtrace theory, 
like model verification \cite{EspHel08,RodSchKho13}.
In particular, one can use them to design efficient methods for the enumeration
of all the representatives of a fixed comtrace, and the enumeration of all
comtraces of a given size.

Another interesting direction of further studies would be the notion of
recognisable and rational languages of combined traces. The projection
representation seems to be a good starting point in this area; in particular, if
one recalls Zielonka's asynchronous automata~\cite{Zie87} for traces.
Finally, the projection representation may find an application in another
important aspect of combined trace theory. A fair strategy of reconstructing
step sequences from a projection set might be useful as a starting point in
the theory of infinite combined traces.

\subsection*{Acknowledgments}
I would like to thank Maciej Koutny and anonymous reviewers for their constructive comments, which helped to improve this paper.\\
This research was supported by a fellowship funded by the
``Enhancing Educational Potential of Nicolaus Co\-per\-ni\-cus
University in the Disciplines of Mathematical and Natural Sciences''
Project POKL.04.01.01-00-081/10.

\bibliography{etykaInic}
\vspace{-30 pt}
\end{document}